\definecolor{myurlcolor}{rgb}{0,0,0.9}
\newcommand{\proj}[1]{| #1\rangle\!\langle #1 |}
\DeclareMathOperator{\trace}{Tr}
\newcommand{\Ptr}[2]{\trace_{#1}\Pa{#2}}
\newcommand{\Tr}[1]{\Ptr{}{#1}}
\newcommand{\Pa}[1]{\left[#1\right]}
\newcommand{\Br}[1]{\left[#1\right]}
\newcommand{\norm}[1]{\left\lVert #1 \right\rVert}
\theoremstyle{plain}
\newtheorem{thm}{Theorem}
\newtheorem{lem}[thm]{Lemma}
\newtheorem{prop}[thm]{Proposition}
\newtheorem{cor}[thm]{Corollary}
\theoremstyle{definition}
\newtheorem{remark}[thm]{Remark}
\newcommand*{\myproofname}{Proof}
\def\ot{\otimes}
\def\complex{\mathbb{C}}
\def\real{\mathbb{R}}
\def \diag {\mathrm{diag}}
\DeclareMathAlphabet{\mathcal}{OMS}{cmsy}{m}{n}
\begin{document}

  \author{Kaifeng Bu}
 \email{kfbu@fas.harvard.edu}
\affiliation{Department of Physics, Harvard University, Cambridge, Massachusetts 02138, USA}

\author{Dax Enshan Koh}
  \email{dax\_koh@ihpc.a-star.edu.sg}
\affiliation{Institute of High Performance Computing, Agency for Science, Technology and Research (A*STAR), 1 Fusionopolis Way, \#16-16 Connexis, Singapore 138632, Singapore}

\author{Lu Li}
\affiliation{Department of Mathematics, Zhejiang Sci-Tech University, Hangzhou, Zhejiang 310018, China}
\affiliation{School of Mathematical Sciences, Zhejiang University, Hangzhou, Zhejiang 310027, China}

\author{Qingxian Luo}
\affiliation{School of Mathematical Sciences, Zhejiang University, Hangzhou, Zhejiang 310027, China}
\affiliation{Center for Data Science, Zhejiang University, Hangzhou, Zhejiang 310027, China}

\author{Yaobo Zhang}
\affiliation{Zhejiang Institute of Modern Physics, Zhejiang University, Hangzhou, Zhejiang 310027, China}
\affiliation{Department of Physics, Zhejiang University, Hangzhou, Zhejiang 310027, China}

\title{On the statistical complexity of quantum circuits}

\begin{abstract}

In theoretical machine learning, the statistical complexity is a notion that measures the richness of a hypothesis space.
In this work, we apply a particular measure of statistical complexity, namely the Rademacher complexity, to the quantum circuit model in quantum computation and study how the statistical complexity depends on various quantum circuit parameters.
In particular, we investigate the dependence of the statistical complexity on the resources,  depth, width, and the number of input and output registers of a quantum circuit.
To study how the statistical complexity scales with resources in the circuit, we introduce a resource measure of magic based on the $(p,q)$ group norm, which quantifies the amount of magic in the quantum channels associated with the circuit.
These dependencies are investigated in the following two settings:
(i) where the entire quantum circuit is treated as a single quantum channel, and (ii) where each layer of the quantum circuit is treated as a separate quantum channel. 
The bounds we obtain can be used to constrain the capacity of
quantum neural networks in terms of their depths and widths as well as the resources in the network.

\end{abstract}

\maketitle

\section{Introduction}

Owing to its ability to recognize and analyze patterns in data and use them to make predictions, deep learning---a subfield of machine learning---has made a profound impact on the computing industry \cite{lecun2015deep, goodfellow2016deep,pml1Book} and has found applications in a myriad of fields, including natural language processing \cite{young2018recent, deng2018deep, li2017deep}, drug design \cite{jing2018deep,gawehn2016deep}, fraud detection \cite{roy2018deep,pumsirirat2018credit}, medical image analysis \cite{shen2017deep, litjens2017survey}, self-driving cars \cite{ramos2017detecting,rao2018deep}, handwriting recognition \cite{pham2014dropout,ghosh2017comparative}, and computer vision \cite{voulodimos2018deep, ponti2017everything}. A central object in many deep learning models is the neural network, an interconnected collection of nodes that can learn from data and model relationships between them \cite{nielsen2015neural}. Different neural networks differ in terms of their ability to learn from data, and understanding this difference is a key problem in theoretical machine learning. This ability of neural networks has been quantified by various statistical complexity measures, including the Vapnik–Chervonenkis (VC) dimension \cite{Vapnik71,Vapnik82}, the metric entropy \cite{tikhomirov1993varepsilon}, the Gaussian complexity \cite{Bartlett03}, and the Rademacher complexity \cite{Bartlett03}.
The dependence of these measures on various structure parameters of the neural network, such as its depth and width and the number 
of parameters in the neural network, has been studied in a number of papers \cite{Telgarsky16,Neyshabur15,Harvery17, Bartlett17,Neyshabur2017,Golowich18a}. 

In addition to the progress in deep learning, the last decade also saw rapid developments in quantum computing \cite{national2019quantum}. With the development of noisy intermediate-scale quantum (NISQ) hardware \cite{preskill2018quantum} as well as near-term quantum algorithms like the variational quantum eigensolver (VQE) \cite{peruzzo2014variational} and the quantum approximate optimization algorithm (QAOA) \cite{farhi2014quantum, zhou2020quantum}, there are expectations that 
quantum computers are poised to revolutionize computation by speeding up the solutions of certain practical computational problems \cite{cerezo2020variational}. Major experimental milestones in this direction include the recent demonstrations of quantum computational supremacy \cite{arute2019quantum, zhong2020quantum} (also called quantum advantage \cite{Palacios-Berraquero2019Dec}), defined to be an event in which a quantum computer empirically solves a computational problem deemed intractable for classical computers, independent of the practical value of the problem \cite{preskill2012quantum, lund2017quantum, harrow2017quantum, dalzell2020many}.

At the intersection of deep learning and quantum computing is the field of quantum deep learning, which has the quantum neural network---the quantum generalization of the classical neural network---as one of its central objects \cite{farhi2018classification,beer2020training, Sharma2020,Schuld2014Nov,killoran2019continuous,Cong2019Dec}. Quantum deep learning has been explored as an application of quantum machine learning, which has gained significant interest of late \cite{lloyd2013quantum,wittek2014quantum, BiamonteNature17,ciliberto2018quantum,Dunjko2018}. Compared to the classical neural networks, however, considerably less is known about quantum neural networks and characterizations of their statistical complexities. For example, the following question has hitherto remained largely unaddressed: how does the statistical complexity of quantum neural networks depend on
the structure parameters of the quantum circuit underlying it as well as the amount of certain resources it contains?

In this paper, we address the above gap by characterizing the statistical complexity of quantum circuits in terms of their Rademacher complexity. To characterize the dependence of Rademacher complexity on resources in the framework of quantum resource theories \cite{COECKE16,chitambar_2019}, 
we introduce a resource measure of magic \cite{howard_2017} for quantum channels based on the $(p,q)$ group norm. 
We consider the Rademacher complexity of quantum circuits in two different settings. First, we consider the case where the entire quantum circuit is treated as a single quantum channel independent of its depth or width.
In this case, we find a bound for the statistical complexity that depends on a resource measure of magic as well as the number of input and output qubits. Second, we consider the case where each layer of the quantum circuit is treated as a separate quantum channel. In this case, we find a bound for the statistical complexity that depends not only on the resource measure of magic but also on the depth and width of the quantum circuit.

\section{Main results}

Consider $m$ independent samples $S=(\vec{x_1},\ldots,\vec{x}_m)$, where each $\vec{x}_i$ is encoded as a quantum state
$\ket{\psi(\vec{x}_i)}$. After a quantum circuit $C$ (e.g., $C$ could be an instance of a variational quantum circuit or a quantum neural network) is applied to the quantum state $\ket{\psi(\vec{x}_i)}$ and a (Hermitian) observable $H$ is measured on the output, 
the expected measurement outcome 
is given by 
\begin{eqnarray}\label{eq:fun_gen}
f_{C}(\vec{x}_i)
=\Tr{C(\proj{\psi(\vec{x}_i)})H}.
\end{eqnarray}
In this way, each quantum circuit $C$ defines a real-valued function $f_{C}$.
Let $\mathcal{F}\circ\mathcal{C}: = \{f_C : C \in \mathcal{C}\}$ denote the function class defined 
by  the set of quantum circuits $\mathcal{C}$.

Consider the hypothesis space $\mathcal{H}=\mathcal{F}\circ\mathcal{C}$,
where $\mathcal{C}$ is a given set of quantum circuits. Given $m$ independent samples $\set{(\vec{x}_i,y_i)}^{m}_{i=1}$, where each
$(\vec{x}_i, y_i)$ is taken i.i.d. from some unknown probability distribution $D$ on some $\mathcal X\times\mathcal Y$, 
let us consider a loss function $l:\mathcal{Y}\times \mathcal{Y}\to \real$. The goal of the learning task is to find some 
function in the hypothesis space that minimizes the expected error $L(f)=\mathbb{E}_{(\vec{x},y)\sim D}l(f(\vec{x}),y)$.
As we have access to only the $m$ independent samples $\set{(\vec{x}_i,y_i)}^{m}_{i=1}$, one strategy is to find some 
function in hypothesis space to minimize the empirical  error $\hat{L}(f)=\frac{1}{m}\sum^m_{i=1}l(f(\vec{x}_i),y_i)$.
The difference between the empirical and expected error is called the \textit{generalization error}, which determines the performance 
of the hypothesis function $f$ on the unseen data drawn from the unknown probability distribution.

The Rademacher complexity is a measure of the richness of a hypothesis space 
and can be used to provide bounds on the generalization error associated with learning from training data \cite{Bartlett03,koltchinskii2006}.
Let us consider the Rademacher complexity of $\mathcal{F}\circ\mathcal{C}$ 
 on $m$ independent samples $S=\set{\vec{x_1},...,\vec{x}_m}$, defined as
 \begin{eqnarray}
 R_S(\mathcal{F}\circ\mathcal{C})
 =\mathbb{E}_{\vec{\epsilon}}
 \frac{1}{m}\sup_{C\in\mathcal{C} }
\left|\sum_i\epsilon_i f_{C}(\vec{x}_i)\right|,
 \end{eqnarray}
where each $\epsilon_i$ in the expectation above is a Rademacher random variable, which takes the values $\pm 1$ with equal probability $1/2$.
Here, we use the Rademacher complexity as a measure of the statistical complexity of the hypothesis space $\mathcal{F}\circ\mathcal{C}$.

\subsection{Rademacher complexity of  quantum channels}

\subsubsection{Rademacher complexity of arbitrary quantum channel}

Given a quantum channel $\Phi:\mathcal L((\complex^2)^{\ot n_1})\to \mathcal L((\complex^2)^{\ot n_2})$ from $n_1$ qubits to $n_2$ qubits, we define the $4^{n_2}\times 4^{n_1}$ \textit{representation matrix}  $M^{\Phi}$ of $\Phi$ to be the matrix whose entries are given by 
\begin{eqnarray}
M^{\Phi}_{\vec{z}\vec{x}}
=\frac{1}{2^{n_2}}\Tr{P_{\vec{z}}\Phi(P_{\vec{x}})},
\end{eqnarray} 
where $\vec{x}\in \set{0,1,2,3}^{n_1}$, $\vec{z}\in \set{0,1,2,3}^{n_2}$, and $P_{\vec{x}}$, $P_{\vec{z}}$ are the corresponding Pauli operators. 
For any Hermitian  operator $P$, the \textit{representation vector} $\vec{\alpha}^P$ of $P$ is defined 
as 
\begin{eqnarray}
\alpha^{P}_{\vec{z}}
=\frac{1}{2^{n}}\Tr{P_{\vec{z}}P}.
\end{eqnarray} 

For any $N_1\times N_2$ matrix $M$, 
which can be treated as a column of $N_1$ row vectors,
the $(p,q)$ \textit{group norm} of $M$, where $0<p,q\leq \infty$, is defined as
$
\norm{M}_{p,q}=
\left(\frac{1}{N_1}\sum_{i}\norm{M_i}^q_p\right)^{1/q}
$, 
where the $l_p$ norm of the $i$-th row vector $\norm{M_i}_p$ is defined as 
$
\norm{M_i}_p
=\left(\sum^{N_2}_{j=1}
|M_{ij}|^p\right)^{1/p}
$.
Of interest to us is the $(p,q)$ group norm of the representation matrix of quantum channels. As we shall show in Appendix \ref{apen:single}, the $(p,q)$ group norm of the representation matrix of quantum gates can be 
used as a resource measure to quantify the amount of magic in the quantum gates.

Here, we treat the entire quantum circuit as a single quantum channel.
Let us define $\mathcal{C}^{n_0,n_1}_{\norm{\cdot}_{p,q}\leq \mu}$ to be the set of quantum circuits from $n_0$ qubits to $n_1$ qubits that have a $(p,q)$ group norm bounded by $\mu$.

\begin{thm}
\label{thm:one}
Given the set of  quantum circuits $C$ from $n_0$ qubits to $n_1$ qubits with bounded $(p,q)$ norm $\norm{\cdot}_{p,q}$,
 the Rademacher complexity of $\mathcal{F}\circ \mathcal{C}^{n_0,n_1}_{\norm{\cdot}_{p,q}\leq \mu}$ 
 on $m$ independent samples $S=\set{\vec{x_1},...,\vec{x}_m}$ is bounded as follows: 
 
 (1) For $1\leq p\leq 2$, we have
\begin{eqnarray}
R_S(\mathcal{F}\circ \mathcal{C}^{n_0,n_1}_{\norm{\cdot}_{p,q}\leq \mu})
\leq \mu 4^{n_1\max\set{\frac{1}{p^*},\frac{1}{q}}}
\frac{\sqrt{\min\set{p^*, 8n_0}}}{\sqrt{m}}K_p(S,H).
\nonumber\\
\end{eqnarray}

(2) For $2<p<\infty$, we have
\begin{eqnarray}
R_S(\mathcal{F}\circ\mathcal{C}^{n_0,n_1}_{\norm{\cdot}_{p,q}\leq \mu})
\leq \mu 4^{n_1\max\set{\frac{1}{p^*},\frac{1}{q}}}
\frac{\sqrt{p^*}}{m^{1/p}}K_p(S,H),
\end{eqnarray}
where $p^*$ is the H\"{o}lder conjugate of $p$, i.e., $\frac{1}{p}+\frac{1}{p^*}=1$; 
\begin{equation}\label{eq:KpSH}
    K_p(S,H)=\norm{\vec{\alpha}}_p\max_i\norm{\vec{f}_I(\vec{x}_i)}_{p^*};
\end{equation} 
and $\vec{\alpha}$ and $\vec{f}_I(\vec{x}_i)$  are the representation vectors of $H$ and $\proj{\psi(x_i)}$ in the Pauli basis, respectively.

\end{thm}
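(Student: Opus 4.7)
The plan is to rewrite $f_C(\vec x_i)$ as a bilinear form in the Pauli representation data, bound the supremum over channels by H\"older's inequality applied to the $(p,q)$-norm of $M^C$, and then control the resulting Rademacher expectation by Khintchine/Massart-type inequalities.

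First I would exploit the orthogonality relation $\Tr{P_{\vec z}P_{\vec x}}=2^{n}\delta_{\vec z\vec x}$ to Pauli-expand both the observable, $H=\sum_{\vec z}\alpha_{\vec z}P_{\vec z}$, and the encoded state, $\proj{\psi(\vec x_i)}=\sum_{\vec x}(f_I(\vec x_i))_{\vec x}P_{\vec x}$. Substituting into $f_C(\vec x_i)=\Tr{C(\proj{\psi(\vec x_i)})H}$ and using the definition of $M^C$ expresses the Rademacher sum as
\begin{equation*}
\sum_i \epsilon_i f_C(\vec x_i)\;=\;c_{n_1}\,\vec\alpha^{\!\top}M^{C}\vec v,\qquad \vec v:=\sum_i\epsilon_i\vec f_I(\vec x_i),
\end{equation*}
where $c_{n_1}$ is a normalization depending only on $n_1$; this decouples the channel dependence (through $M^C$) from the Rademacher randomness (through $\vec v$).

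Second, I would bound the channel-dependent piece by H\"older's inequality applied twice. Writing $\vec\alpha^{\!\top}M^{C}\vec v=\sum_{\vec z}\alpha_{\vec z}(M^C_{\vec z}\cdot\vec v)$, the inner row inner products are controlled via $|M^C_{\vec z}\cdot\vec v|\le\|M^C_{\vec z}\|_p\|\vec v\|_{p^*}$, and pairing $\|\vec\alpha\|_{q^*}$ with $(\sum_{\vec z}\|M^C_{\vec z}\|_p^q)^{1/q}=(4^{n_1})^{1/q}\|M^C\|_{p,q}$ handles the outer sum. Converting $\|\vec\alpha\|_{q^*}$ to $\|\vec\alpha\|_p$ via the embedding $\|\vec\alpha\|_{q^*}\le (4^{n_1})^{\max\{0,1/q^*-1/p\}}\|\vec\alpha\|_p$, the combined exponent of $4^{n_1}$ simplifies (using $1/q+1/q^*=1$ and $1/p+1/p^*=1$) to $\max\{1/q,1/p^*\}$, giving
\begin{equation*}
    |\vec\alpha^{\!\top}M^{C}\vec v|\;\le\;4^{n_1\max\{1/p^*,1/q\}}\,\mu\,\|\vec\alpha\|_p\,\|\vec v\|_{p^*}
\end{equation*}
uniformly over $C\in \mathcal C^{n_0,n_1}_{\|\cdot\|_{p,q}\le\mu}$.

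Third, what remains is to bound $\mathbb{E}_{\vec\epsilon}\|\vec v\|_{p^*}$, where $\vec v$ is a Rademacher sum in $\real^{4^{n_0}}$. In case (1), $p^*\ge 2$: coordinatewise Khintchine with Haagerup's sharp constant, combined with Minkowski in $L^{p^*/2}$ (valid because $p^*/2\ge 1$), gives $\mathbb{E}\|\vec v\|_{p^*}\le\sqrt{p^*}\sqrt m\,\max_i\|\vec f_I(\vec x_i)\|_{p^*}$. Alternatively, Massart's finite-class lemma applied to the $4^{n_0}$ sub-Gaussian coordinates of $\vec v$ gives $\mathbb{E}\|\vec v\|_\infty\le\sqrt{2\log(2\cdot 4^{n_0})}\,\sqrt m\,\max_i\|\vec f_I(\vec x_i)\|_\infty\le\sqrt{8n_0}\sqrt m\,\max_i\|\vec f_I(\vec x_i)\|_{p^*}$, and in the regime $p^*\gtrsim n_0$ this transfers to the $\ell_{p^*}$ norm since $4^{n_0/p^*}$ is bounded by a universal constant. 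Taking the minimum yields the $\sqrt{\min\{p^*,8n_0\}}$ factor. In case (2), $p^*<2$: Jensen's inequality gives $\mathbb{E}\|\vec v\|_{p^*}\le(\mathbb{E}\|\vec v\|_{p^*}^{p^*})^{1/p^*}$, and coordinatewise Khintchine together with the subadditivity $(\sum_i a_i^2)^{p^*/2}\le\sum_i a_i^{p^*}$ (valid for $p^*/2\le 1$) yields $\mathbb{E}\|\vec v\|_{p^*}\le\sqrt{p^*}\,m^{1/p^*}\max_i\|\vec f_I(\vec x_i)\|_{p^*}$; the $1/m$ in front of the Rademacher complexity then converts $m^{1/p^*}$ into $m^{-1/p}$.

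The main technical obstacle I anticipate is gluing the two estimates in case (1) into the single factor $\sqrt{\min\{p^*,8n_0\}}$: the Massart route requires passing from $\|\vec v\|_\infty$ to $\|\vec v\|_{p^*}$, which a priori costs a $4^{n_0/p^*}$ factor, and this must be absorbed into a universal constant precisely at the crossover $p^*\approx 8n_0$ so that no parasitic $n_0$-dependent prefactor appears. The rest of the argument is bookkeeping of H\"older exponents and standard Rademacher concentration.
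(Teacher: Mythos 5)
Your proposal is correct and follows essentially the same route as the paper: decouple the channel from the Rademacher randomness via the Pauli representation ($\sum_i\epsilon_i f_C(\vec{x}_i)\propto\vec{\alpha}^{\top}M^C\vec{v}$), extract the factor $4^{n_1\max\{1/p^*,1/q\}}\norm{M^C}_{p,q}\norm{\vec{\alpha}}_p\norm{\vec{v}}_{p^*}$ by H\"older, and bound $\mathbb{E}_{\vec{\epsilon}}\norm{\vec{v}}_{p^*}$ by Khintchine--Kahane for moderate $p^*$ and by the Massart/$\ell_\infty$ route when $p^*\gtrsim n_0$ (exactly the paper's Lemmas \ref{lem:normpRaN1} and \ref{lem:depn1}). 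The only cosmetic difference is that you run H\"older twice (pairing $\vec{\alpha}$ at exponent $q^*$ with the row norms at exponent $q$, then embedding $\ell_{q^*}\hookrightarrow\ell_p$) rather than bounding $\norm{M^C\vec{v}}_{p^*}$ via the $(p,p^*)$-to-$(p,q)$ comparison as in Lemma \ref{lem:normpRaN1}, but the exponent bookkeeping lands on the same $\max\{1/p^*,1/q\}$.
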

This result provides an upper bound on the Rademacher complexity of quantum circuits
that depends on the amount of magic and the number of input and output qubits. (See  Appendix \ref{apen:single} for a proof of Theorem \ref{thm:one}.)

\subsubsection{Rademacher complexity of unital quantum channels}

We now consider the special case where the quantum channel $\Phi$ is unital, i.e., $\Phi(\mathbb{I})=\mathbb{I}$. In this case, the representation matrix $M^{\Phi}$ has the following form
$
M^{\Phi}=\Br{
\begin{array}{ccc}
1&\vec{0}^T\\
\vec{0}& \hat{M}^{\Phi}
\end{array}
}
$.
We shall define the \textit{modified representation matrix} $\hat{M}^\Phi$ to be the bottom-right $(4^{n_2}-1)\times (4^{n_2}-1)$
submatrix of $M^{\Phi}$. 
Next, note that the representation vector of a Hermitian operator $P$ can be written as $\vec{\alpha}^P=(\alpha_0,\hat{\vec{\alpha}}^P)$. We shall call $\hat{\vec{\alpha}}^P$ the \textit{modified representation vector} of the operator $P$.

For a unital channel $\Phi$, we shall denote the $(p,q)$ group norm of the modified
 representation matrix $\hat{M}^{\Phi}$ as $\norm{\hat{M}^\Phi}_{p,q}$.
Note that the $(p,q)$ group norm of the modified representation matrix of unital quantum channels can be regarded as a resource measure of magic  (see Appendix \ref{apen:single_unital}).

Similarly, let us define  $\mathcal{C}^{n_0,n_1}_{\norm{\hat{\cdot}}_{p,q}\leq \mu}$ to be the set of unital quantum circuits $C$ from $n_0$ qubits to $n_1$ qubits with bounded norm $\norm{\hat{\cdot}}_{p,q}$.

\begin{thm}\label{thm:two}
Let $H$ be a traceless observable.
Given a set of unital  quantum circuits from $n_0$ qubits to $n_1$ qubits with bounded norm $\norm{\hat{\cdot}}_{p,q}$, the Rademacher complexity 
of $\mathcal{F}\circ\mathcal{C}^{n_0,n_1}_{\norm{\hat{\cdot}}_{p,q}\leq \mu}$
 on $m$ samples $S=\set{\vec{x_1},...,\vec{x}_m}$ is bounded as follows.
 
 (1) For $1\leq p\leq 2$, we have
\begin{eqnarray}
R_S(\mathcal{F}\circ\mathcal{C}^{n_0,n_1}_{\norm{\hat{\cdot}}_{p,q}\leq \mu})
\leq \mu N^{\max\set{\frac{1}{p^*},\frac{1}{q}}}_1
\frac{\sqrt{\min\set{p^*, 8n_0}}}{\sqrt{m}}\hat{K}_p(S,H).
\nonumber\\
\end{eqnarray}

(2) For $2<p<\infty$, we have
\begin{eqnarray}
R_S(\mathcal{F}\circ\mathcal{C}^{n_0,n_1}_{\norm{\hat{\cdot}}_{p,q}\leq \mu})
\leq \mu N^{\max\set{\frac{1}{p^*},\frac{1}{q}}}_1
\frac{\sqrt{p^*}}{m^{1/p}}\hat{K}_p(S,H),
\end{eqnarray}

where $N_1=4^{n_1}-1$,
\begin{equation}\label{eq:KPSHhat}
\hat{K}_p(S,H)=\norm{\hat{\vec{\alpha}}}_p\max_i\norm{\hat{\vec{f}}_I(\vec{x}_i)}_{p^*},
\end{equation}
and $\hat{\vec{\alpha}}$  and  $\hat{\vec{f}}_I(\vec{x}_i)$ are the modified representation vector of $H$  and $\proj{\psi(x_i)}$ in the Pauli basis, respectively.
\end{thm}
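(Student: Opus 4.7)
The plan is to reduce Theorem~\ref{thm:two} to the bilinear-form estimate driving Theorem~\ref{thm:one}, exploiting the fact that under the unital assumption on $C$ and the tracelessness of $H$ the zero-indexed Pauli components drop out entirely. The analysis can then be carried out on the modified quantities $\hat{M}^{C}$, $\hat{\vec{\alpha}}$, and $\hat{\vec{f}}_I(\vec{x}_i)$, with the number of output Pauli labels reduced from $4^{n_1}$ to $N_1=4^{n_1}-1$.

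The first step is to expand $f_{C}(\vec{x}_i)=\Tr{C(\proj{\psi(\vec{x}_i)})H}$ in the Pauli basis. Writing $\proj{\psi(\vec{x}_i)}$, $C(P_{\vec{x}})$, and $H$ as Pauli sums and invoking the orthogonality relation $\Tr{P_{\vec{z}}P_{\vec{y}}}=2^{n_1}\delta_{\vec{z}\vec{y}}$ yields the bilinear form $f_{C}(\vec{x}_i)=2^{n_1}\,\vec{\alpha}^{T}M^{C}\,\vec{f}_I(\vec{x}_i)$. The two assumptions then collapse this: unitality of $C$ puts $M^{C}$ in the block form stated in the excerpt (a $1$ at $(\vec{0},\vec{0})$ with vanishing zeroth column), while tracelessness of $H$ forces $\alpha_{\vec{0}}=\frac{1}{2^{n_1}}\Tr{H}=0$. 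Only the $\hat{M}^{C}$-block survives, leaving $f_{C}(\vec{x}_i)=2^{n_1}\,\hat{\vec{\alpha}}^{T}\hat{M}^{C}\,\hat{\vec{f}}_I(\vec{x}_i)$---a bilinear form of the same shape as in Theorem~\ref{thm:one} but acting on vectors of lengths $4^{n_0}-1$ and $N_1$ in place of $4^{n_0}$ and $4^{n_1}$.

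From here I would run the proof of Theorem~\ref{thm:one} verbatim on the modified objects. Setting $\vec{v}=\sum_i\epsilon_i\hat{\vec{f}}_I(\vec{x}_i)$, a H\"older inequality in the column exponents $p,p^{*}$ bounds each row product $\langle\hat{M}^{C}_{\vec{z}},\vec{v}\rangle$ by $\|\hat{M}^{C}_{\vec{z}}\|_{p}\,\|\vec{v}\|_{p^{*}}$; a further combination of H\"older in the exponents $(q,q^{*})$ and $\ell^{r}$-norm monotonicity on $\mathbb{R}^{N_1}$ (the two cases distinguished by whether $q\geq p^{*}$ or $q<p^{*}$) packages the row-norm sum into $N_{1}^{\max\{1/p^{*},1/q\}}\,\|\hat{M}^{C}\|_{p,q}$. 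The supremum over $C$ is thereby bounded by $\mu\,\|\hat{\vec{\alpha}}\|_{p}\,N_{1}^{\max\{1/p^{*},1/q\}}\,\|\vec{v}\|_{p^{*}}$, reducing the problem to controlling $\mathbb{E}_{\vec{\epsilon}}\|\vec{v}\|_{p^{*}}$. Two standard vectorial Rademacher-in-$\ell^{p^{*}}$ estimates then give the two regimes: for $1\leq p\leq 2$ (so $p^{*}\geq 2$) one uses either Khintchine--Kahane (yielding a $\sqrt{p^{*}}$ constant) or a dimension-based bound via an $\ell^{\infty}$ embedding on an ambient space of dimension $4^{n_0}$ (yielding $\sqrt{\log 4^{n_0}}\asymp\sqrt{8 n_0}$), together giving $\sqrt{\min\{p^{*},8n_0\}}$; for $p>2$ a type-$p^{*}$ estimate produces the $m^{-1/p}$ rate.

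The main obstacle is the last step: the vector-valued Rademacher average in the $\ell^{p^{*}}$ norm, and in particular producing the clean $\sqrt{\min\{p^{*},8 n_0\}}$ constant by dovetailing Khintchine--Kahane with a dimension argument. The algebraic reduction via unitality and tracelessness is straightforward; the genuine analytic work has already been done in proving Theorem~\ref{thm:one} and is inherited here with only the replacements $4^{n_1}\to N_1$ and $(\vec{\alpha},\vec{f}_I,M^{C})\to(\hat{\vec{\alpha}},\hat{\vec{f}}_I,\hat{M}^{C})$.
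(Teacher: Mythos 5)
Your proposal is correct and follows essentially the same route as the paper: tracelessness of $H$ and unitality of the circuit collapse the bilinear form onto the modified quantities $\hat{\vec{\alpha}}$, $\hat{M}^{C}$, $\hat{\vec{f}}_I$, after which the argument of Theorem~\ref{thm:one} (the group-norm bound of Lemma~\ref{lem:normpRaN1} with $N_1=4^{n_1}-1$, followed by the vector-valued Rademacher estimate of Lemma~\ref{lem:depn1}) is applied verbatim to the hatted objects. The only caveat is the explicit factor $2^{n_1}$ you carry in the Pauli expansion, which would propagate into the final bound; the paper absorbs it by its (admittedly loose) normalization convention for $\vec{f}_{\Phi}$, so this is a bookkeeping discrepancy rather than a gap in the argument.
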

The proof of this theorem is presented in Appendix  \ref{apen:single_unital}.

\subsection{Rademacher complexity of depth-\texorpdfstring{$l$}{l} quantum circuits}
In this subsection, we take the depth and width of the quantum circuits involved into account by considering the 
the layer structure of the circuits. Consider a depth-$l$ quantum circuit $C_l=\Phi_l\circ\Phi_{l-1}\circ\cdots\circ\Phi_1$,
where the $i$-th layer $\Phi_i: \mathcal{L}((\complex^2)^{\ot n_{i-1}})\to \mathcal{L}((\complex^2)^{\ot n_i})$ (see Fig.~\ref{fig0} for a circuit diagram). 
We shall denote the quantum circuit as $\vec{C}_l=(\Phi_l,\Phi_{l-1},...,\Phi_1)$ 
and the set of  quantum circuits with fixed depth $l$ and width vector $\vec{n}=(n_l,\ldots,n_1,n_0)$ as
\begin{align}
\mathcal{C}^{l,\vec{n}}
&=\Big\{\vec{C}_l\Big| \vec{C}_l=(\Phi_l,\Phi_{l-1},\ldots,\Phi_1), \nonumber\\
&\qquad\qquad
\Phi_i: \mathcal L((\complex^2)^{\ot n_{i-1}})\to \mathcal L((\complex^2)^{\ot n_i})\Big\}.
\end{align}

 \begin{figure}[!ht]
  \center{\includegraphics[width=8cm]  {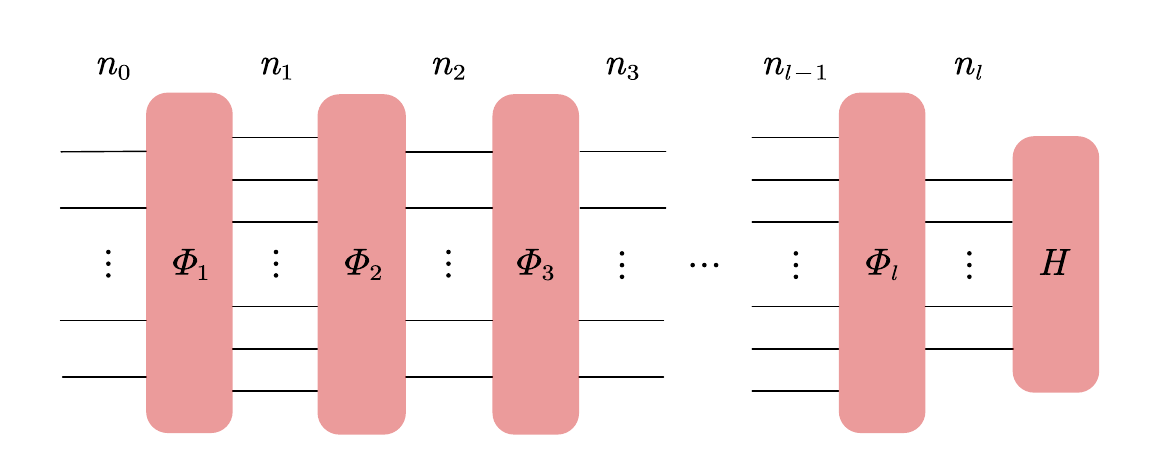}}     
  \caption{Circuit diagram of a depth-$l$ quantum circuit}
  \label{fig0}
 \end{figure}

Next, let us define the resource measure for a depth-$l$ quantum circuit $\vec{C}_l$ as follows:
\begin{eqnarray}
\nu_{p,q}(\vec{C}_l)=
 \frac{1}{l}\sum^l_{i=1}\norm{M^{\Phi_i}}_{p,q},
\end{eqnarray} 
which represents the average amount of  magic over  the layers of the quantum circuit. 
Let us denote $\mathcal{C}^{l,\vec{n}}_{\nu_{p,q}\leq \nu}$ to be the set of
quantum circuits with bounded resource $\nu_{p,q}\leq \nu$,  fixed depth $l$, and width vector $\vec{n}$ (See Fig.~\ref{fig2}). Then we have the following results. 
\begin{thm}\label{thm:Con_main1}
Given the set of depth-$l$ quantum circuits with bounded resource $\nu_{p,q}\leq \nu$,
the Rademacher complexity 
 on $m$ independent samples $S=\set{\vec{x_1},...,\vec{x}_m}$ is bounded as follows.
 
 (1) For $1\leq p\leq 2$, we have
\begin{eqnarray}
R_S(\mathcal{F}\circ \mathcal{C}^{l,\vec{n}}_{\nu_{p,q}\leq \nu})
\leq \nu^l 4^{\norm{\vec{n}}_1\max\set{\frac{1}{p^*},\frac{1}{q}}}
\frac{\sqrt{\min\set{p^*, 8n_0}}}{\sqrt{m}}K_p(S,H).\nonumber\\
\end{eqnarray}

(2) For $2<p<\infty$, we have
\begin{eqnarray}
R_S(\mathcal{F}\circ \mathcal{C}^{l,\vec{n}}_{\nu_{p,q}\leq \nu})
\leq \nu^l 4^{\norm{\vec{n}}_1\max\set{\frac{1}{p^*},\frac{1}{q}}}
\frac{\sqrt{p^*}}{m^{1/p}}K_p(S,H),
\end{eqnarray}
where $K_p(S,H)$ is defined by Eq.~\eqref{eq:KpSH}, and $\norm{\vec{n}}_1=\sum^l_{i=1}n_i$.
\end{thm}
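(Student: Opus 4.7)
The plan is to reduce Theorem~\ref{thm:Con_main1} to Theorem~\ref{thm:one} by viewing the depth-$l$ circuit $C_l = \Phi_l \circ \cdots \circ \Phi_1$ as a single quantum channel from $n_0$ to $n_l$ qubits and controlling its representation matrix layer by layer. The starting point is the observation that, because the rescaled Pauli operators form an orthogonal basis under the Hilbert--Schmidt inner product, the representation matrices compose multiplicatively:
\begin{equation*}
M^{C_l} \;=\; M^{\Phi_l}\, M^{\Phi_{l-1}}\cdots M^{\Phi_1}.
\end{equation*}
Hence it suffices to bound $\norm{M^{C_l}}_{p,q}$ in terms of the individual $\norm{M^{\Phi_i}}_{p,q}$ and then invoke Theorem~\ref{thm:one} on the composed channel with the resulting bound playing the role of $\mu$.

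The main analytic step is a submultiplicativity-with-factor inequality for the group norm: for any $A \in \mathbb{R}^{N_1 \times K}$ and $B \in \mathbb{R}^{K \times N_2}$,
\begin{equation*}
\norm{AB}_{p,q} \;\leq\; K^{\max\{1/p^*,\,1/q\}}\,\norm{A}_{p,q}\,\norm{B}_{p,q}.
\end{equation*}
To prove this, I would write the $i$-th row of $AB$ as $\sum_k A_{ik}\, B_k$ (with $B_k$ the $k$-th row of $B$), apply the triangle inequality, and then H\"older with conjugate exponents $(p,p^*)$ to obtain $\norm{(AB)_i}_p \leq \norm{A_i}_p \cdot \left(\sum_k \norm{B_k}_p^{p^*}\right)^{1/p^*}$. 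The remaining column-type quantity must be compared to $\norm{B}_{p,q}$, and the comparison splits into two regimes: if $p^* \leq q$ the power-mean inequality produces the factor $K^{1/p^*}$, while if $p^* > q$ the elementary monotonicity $\norm{\cdot}_{p^*} \leq \norm{\cdot}_q$ on nonnegative entries produces $K^{1/q}$. The two cases combine into $K^{\max\{1/p^*,1/q\}}$, and averaging the resulting row bound in the $q$-th power completes the lemma.

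Iterating this submultiplicativity across the $l-1$ compositions---with shared dimensions $4^{n_{l-1}}, 4^{n_{l-2}}, \ldots, 4^{n_1}$---yields
\begin{equation*}
\norm{M^{C_l}}_{p,q} \;\leq\; 4^{(n_1+\cdots+n_{l-1})\max\{1/p^*,1/q\}}\,\prod_{i=1}^l \norm{M^{\Phi_i}}_{p,q}.
\end{equation*}
The product of layerwise norms is controlled by the arithmetic--geometric mean inequality, $\prod_i \norm{M^{\Phi_i}}_{p,q} \leq \left(\frac{1}{l}\sum_i \norm{M^{\Phi_i}}_{p,q}\right)^l = \nu_{p,q}(\vec{C}_l)^l \leq \nu^l$. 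Plugging the resulting bound into Theorem~\ref{thm:one} applied to $C_l : n_0 \to n_l$ produces an additional factor $4^{n_l \max\{1/p^*,1/q\}}$ from the output dimension, which combines with $4^{(n_1+\cdots+n_{l-1})\max\{1/p^*,1/q\}}$ to form $4^{\norm{\vec{n}}_1 \max\{1/p^*,1/q\}}$. The $m$-dependent factor and $K_p(S,H)$ are inherited unchanged from Theorem~\ref{thm:one}, and the two regimes $1\leq p\leq 2$ and $2<p<\infty$ of that theorem propagate directly to the two cases of Theorem~\ref{thm:Con_main1}.

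The hardest part is establishing the submultiplicativity lemma with the sharp exponent $\max\{1/p^*,1/q\}$: a naive application of H\"older only covers one of the two regimes, so one must carefully interpolate between the power-mean inequality and the pointwise $\ell_{p^*}$-versus-$\ell_q$ comparison. Once that lemma is secured, the remainder is a routine composition of Theorem~\ref{thm:one}, the AM--GM step, and bookkeeping of the qubit counts in the width vector $\vec{n}$.
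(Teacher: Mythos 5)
Your proof is correct, and it reaches the paper's bound by a mildly but genuinely different route. The paper never forms the group norm of the composed matrix $M^{C_l}$: instead it proves a layer-peeling inequality (Lemma \ref{lem:dep_norm1}) that, for each fixed sign vector $\vec{\epsilon}$, strips one layer at a time off the quantity $\sup_{\vec{C}_l}\mu_{p,q}(\vec{C}_l)^{-1}\bigl\lVert\sum_i\epsilon_i\vec{f}_{C_l}(\vec{x}_i)\bigr\rVert_{p^*}$ by applying the matrix-times-vector estimate of Lemma \ref{lem:normpRaN1} to $M^{\Phi_l}\sum_i\epsilon_i\vec{f}_{C_{l-1}}(\vec{x}_i)$, accumulating one factor $4^{n_i\max\{1/p^*,1/q\}}$ per layer; it then finishes with the same chain of H\"older and Rademacher estimates as in Theorem \ref{thm:one} (this is Theorem \ref{thm:main1}), and passes from $\mu_{p,q}\le\nu^l$ to $\nu_{p,q}\le\nu$ by the same AM--GM containment $\mathcal{C}^{l,\vec{n}}_{\nu_{p,q}\leq\nu}\subseteq\mathcal{C}^{l,\vec{n}}_{\mu_{p,q}\leq\nu^l}$ you use. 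Your submultiplicativity lemma $\norm{AB}_{p,q}\leq K^{\max\{1/p^*,1/q\}}\norm{A}_{p,q}\norm{B}_{p,q}$ is the matrix-level analogue of Lemma \ref{lem:normpRaN1} (same row-wise H\"older step, same $q$-versus-$p^*$ case split), and your exponent bookkeeping is consistent with the paper's: you collect $4^{(n_1+\cdots+n_{l-1})\max\{1/p^*,1/q\}}$ from the $l-1$ inner dimensions and the remaining $4^{n_l\max\{1/p^*,1/q\}}$ from the output dimension inside Theorem \ref{thm:one}, which is exactly the paper's $\prod_{i=1}^l 4^{n_i\max\{1/p^*,1/q\}}$. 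What your route buys is modularity: a standalone norm inequality plus the observation that the class of composed channels embeds in a single-channel class $\mathcal{C}^{n_0,n_l}_{\norm{\cdot}_{p,q}\leq\mu}$, so Theorem \ref{thm:one} can be invoked as a black box via monotonicity of $R_S$ under set inclusion. What the paper's route buys is that it never needs the two-matrix lemma, only the matrix--vector version, at the cost of re-running the Rademacher argument inside the supremum. The one point worth making explicit in your write-up is the monotonicity step: you should state that every $f_{\vec{C}_l}$ with $\nu_{p,q}(\vec{C}_l)\le\nu$ equals $f_{\Psi}$ for the single channel $\Psi=\Phi_l\circ\cdots\circ\Phi_1$ with $\norm{M^{\Psi}}_{p,q}\le\mu$, so the function classes genuinely nest; this is implicit but essential to applying Theorem \ref{thm:one}.
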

This theorem tells us how the Rademacher complexity depends on the depth, width and the amount of magic in the quantum circuits. Note that we can choose suitable $p,q$ to reduce the exponential dependence on the width vector to polynomial dependence, for example, by taking $p^*=q=\Omega(\norm{\vec{n}}_1/\log\norm{\vec{n}}_1)$ or $p^*=q=\infty$.
The proof of Theorem \ref{thm:Con_main1} is presented in Appendix \ref{apen:deep}.

 \begin{figure}[!ht]
  \center{\includegraphics[width=6cm]  {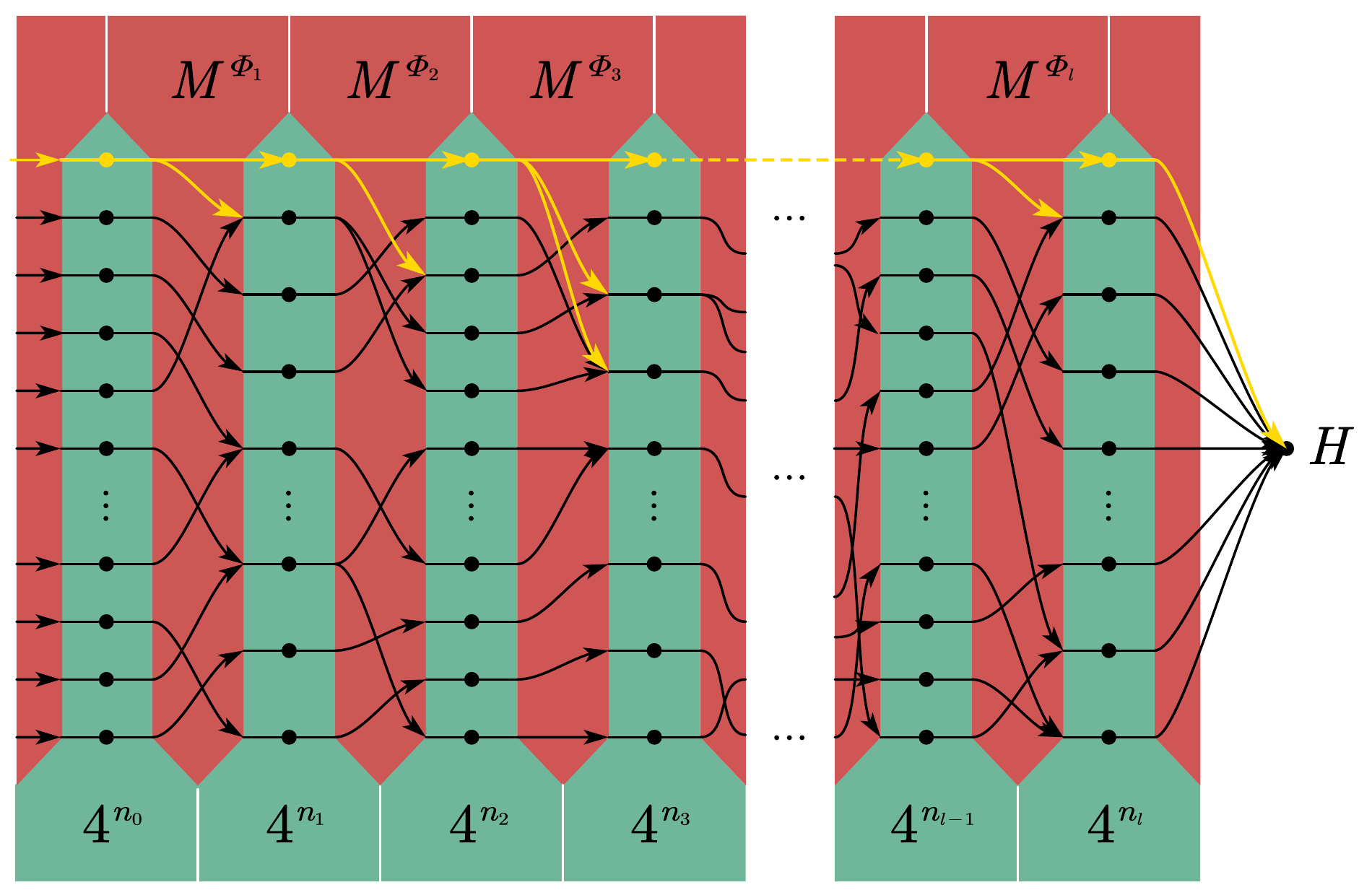}}     
  \caption{Diagram illustrating the layer structure of the representation matrix of a depth-$l$ quantum circuit}
  \label{fig2}
 \end{figure}

If the quantum channel in the quantum circuit is unital (for example, a unitary quantum channel), then we modify the 
resource measure as follows (See Fig. \ref{fig3}):
\begin{eqnarray}
\hat{\nu}_{p,q}(\vec{C}_l)=
\frac{1}{l}\sum^l_{i=1}\norm{\hat{M}^{\Phi_i}}_{p,q}.
\end{eqnarray}
We are now ready to state our next result.

\begin{thm}\label{thm:Con_main2}
Let $H$ be a traceless observable. Given the set of depth-$l$ quantum circuits with bounded resource  $\hat{\nu}_{p,q}\leq \nu$,
  the Rademacher complexity of $\mathcal{F}\circ \mathcal{C}^{l,\vec{n}}_{\hat{\nu}_{p,q}\leq \nu}$
 on $m$ independent samples $S=\set{\vec{x_1},...,\vec{x}_m}$ satisfies the following bounds 
 
 (1) For $1\leq p\leq 2$, we have
\begin{eqnarray}
R_S(\mathcal{F}\circ \mathcal{C}^{l,\vec{n}}_{\hat{\nu}_{p,q}\leq \nu})
\leq \nu^l \Pi^l_{i=1}N^{\max\set{\frac{1}{p^*},\frac{1}{q}}}_i
\frac{\sqrt{\min\set{p^*, 8n_0}}}{\sqrt{m}}\hat{K}_p(S,H).
\nonumber\\
\end{eqnarray}

(2) For $2<p<\infty$, we have
\begin{eqnarray}
R_S(\mathcal{F}\circ \mathcal{C}^{l,\vec{n}}_{\hat{\nu}_{p,q}\leq \nu})
\leq \nu^l \Pi^l_{i=1}N^{\max\set{\frac{1}{p^*},\frac{1}{q}}}_i\frac{\sqrt{p^*}}{m^{1/p}}\hat{K}_p(S,H).
\end{eqnarray}
where $N_i=4^{n_i}-1$ for any $1\leq i\leq l$ and $\hat{K}_p(S,H)$ is defined by Eq.~\eqref{eq:KPSHhat}.
\end{thm}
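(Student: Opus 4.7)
The plan is to merge the layer-wise telescoping argument used for Theorem~\ref{thm:Con_main1} with the traceless-observable reduction that powers Theorem~\ref{thm:two}. Because every $\Phi_i$ is unital, each representation matrix $M^{\Phi_i}$ is block-diagonal with a trivial top-left block, so the depth-$l$ composition satisfies $\hat{M}^{C_l}=\hat{M}^{\Phi_l}\hat{M}^{\Phi_{l-1}}\cdots\hat{M}^{\Phi_1}$. Since $H$ is traceless, its zero-th Pauli coefficient vanishes and the hypothesis function collapses to
\begin{equation*}
f_{C}(\vec{x})=\hat{\vec{\alpha}}^{T}\,\hat{M}^{\Phi_l}\hat{M}^{\Phi_{l-1}}\cdots\hat{M}^{\Phi_1}\,\hat{\vec{f}}_I(\vec{x}).
\end{equation*}
Substituting this into the Rademacher complexity cleanly separates the data-side vector $\sum_{i}\epsilon_i\hat{\vec{f}}_I(\vec{x}_i)$ from the circuit-dependent supremum, so that all dependence on $\vec{C}_l$ is confined to a bilinear form in $\hat{\vec{\alpha}}$ and $\hat{\vec{f}}_I$.

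The core technical step is a single-layer inequality
\begin{equation*}
\abs{\vec{u}^{T}\hat{M}^{\Phi_i}\vec{v}}\leq N_i^{\max\{1/p^*,\,1/q\}}\,\norm{\hat{M}^{\Phi_i}}_{p,q}\,\norm{\vec{u}}_{p}\norm{\vec{v}}_{p^*},
\end{equation*}
which I would prove by applying H\"older twice: first in the column index to extract $\norm{\vec{v}}_{p^*}$ and a row-wise $\ell_p$ factor, and then in the row index, splitting into the two regimes $q\geq p^*$ and $q<p^*$. In the first regime a H\"older pairing of $\vec{u}$ against the $\ell_q$-mean of row norms picks up the factor $N_i^{1/p^*}$, while in the second a power-mean inequality converts the $\ell_q$-mean into an $\ell_{p^*}$-type quantity at the cost of $N_i^{1/q}$; together they yield the uniform exponent $\max\{1/p^*,1/q\}$. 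By duality this is equivalent to the operator bound $\norm{\hat{M}^{\Phi_i}\vec{v}}_{p^*}\leq N_i^{\max\{1/p^*,1/q\}}\norm{\hat{M}^{\Phi_i}}_{p,q}\norm{\vec{v}}_{p^*}$, which is stable under composition and therefore iterable.

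Iterating the operator inequality across all $l$ layers and then pairing with $\hat{\vec{\alpha}}$ via H\"older gives
\begin{equation*}
\sup_{\vec{C}_l\in\mathcal{C}^{l,\vec{n}}_{\hat{\nu}_{p,q}\leq\nu}}\abs{\hat{\vec{\alpha}}^{T}\hat{M}^{C_l}\vec{w}}\leq\norm{\hat{\vec{\alpha}}}_{p}\norm{\vec{w}}_{p^*}\prod_{i=1}^{l}N_i^{\max\{1/p^*,1/q\}}\prod_{i=1}^{l}\norm{\hat{M}^{\Phi_i}}_{p,q}.
\end{equation*}
The resource constraint $\hat{\nu}_{p,q}(\vec{C}_l)=\frac{1}{l}\sum_i\norm{\hat{M}^{\Phi_i}}_{p,q}\leq\nu$ is an arithmetic-mean constraint, so AM--GM yields $\prod_{i=1}^{l}\norm{\hat{M}^{\Phi_i}}_{p,q}\leq\nu^{l}$, matching the $\nu^{l}$ prefactor in the theorem. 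Choosing $\vec{w}=\sum_{i}\epsilon_i\hat{\vec{f}}_I(\vec{x}_i)$ and moving $\mathbb{E}_{\vec{\epsilon}}$ inside $\norm{\cdot}_{p^*}$ by Jensen reduces the remaining quantity to $\mathbb{E}_{\vec{\epsilon}}\norm{\sum_{i}\epsilon_i\hat{\vec{f}}_I(\vec{x}_i)}_{p^*}$, which is exactly the object bounded in the proof of Theorem~\ref{thm:two}: for $1\leq p\leq 2$ a Khintchine-type inequality combined with the Maurey/Pisier bound of order $\sqrt{n_0}$ (applicable because $\hat{\vec{f}}_I(\vec{x}_i)$ lives in $\real^{4^{n_0}-1}$) produces the $\sqrt{\min\{p^*,8n_0\}/m}$ factor, and for $2<p<\infty$ Khintchine in $L_{p^*}$ with Jensen produces the $\sqrt{p^*}/m^{1/p}$ factor. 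Repackaging with $\hat{K}_p(S,H)=\norm{\hat{\vec{\alpha}}}_p\max_i\norm{\hat{\vec{f}}_I(\vec{x}_i)}_{p^*}$ yields both parts of the theorem.

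The main obstacle I expect is the single-layer inequality with the sharp exponent $\max\{1/p^*,1/q\}$: keeping it uniform across layers is what allows the dimension factors to multiply into the clean product $\prod_i N_i^{\max\{1/p^*,1/q\}}$, whereas any looseness of the form $N_i^{1/p^*+1/q}$ would propagate to $\prod_i N_i^{1/p^*+1/q}$ and spoil the final bound. Balancing the $q\geq p^*$ and $q<p^*$ regimes so that the dimension exponent never exceeds $\max\{1/p^*,1/q\}$, while simultaneously preserving the $\ell_{p^*}\to\ell_{p^*}$ operator-norm compatibility needed for composition with the next layer, is the delicate part; once that lemma is in place, the rest is a straightforward telescoping plus the Rademacher-tail bound already established for Theorem~\ref{thm:two}.
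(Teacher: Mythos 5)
Your proposal is correct and follows essentially the same route as the paper: the reduction to modified representation matrices via tracelessness and unitality, the single-layer bound $\norm{\hat{M}^{\Phi_i}\vec{v}}_{p^*}\leq N_i^{\max\{1/p^*,1/q\}}\norm{\hat{M}^{\Phi_i}}_{p,q}\norm{\vec{v}}_{p^*}$ iterated layer by layer, the AM--GM step converting $\hat{\nu}_{p,q}\leq\nu$ into $\prod_i\norm{\hat{M}^{\Phi_i}}_{p,q}\leq\nu^l$, and the final Rademacher-vector bound are exactly the paper's Lemmas \ref{lem:normpRaN1}, \ref{lem:dep_normU1}, and \ref{lem:depn1} combined with Theorem \ref{thm:mainU}. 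The only cosmetic difference is that you phrase the key lemma as a bilinear estimate recovered by duality rather than as the vector-norm estimate the paper states directly.
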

The proof of this theorem is presented in Appendix \ref{apen:deep_unital}.

 \begin{figure}[!ht]
  \center{\includegraphics[width=6cm]  {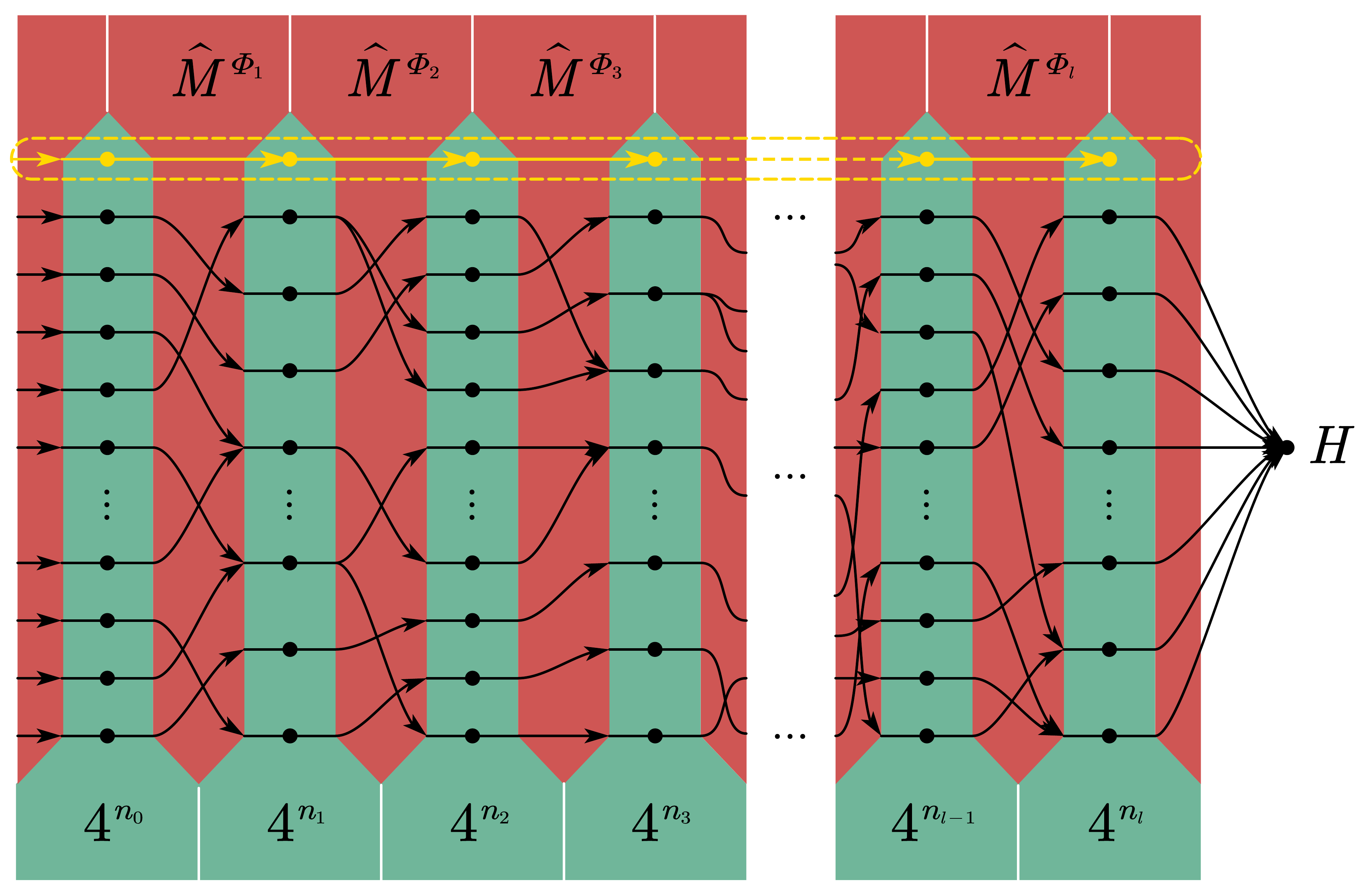}}     
  \caption{Diagram illustrating 
  the layer structure of the representation matrix of a depth-$l$ unital quantum circuit.
  }
  \label{fig3}
 \end{figure}

\begin{remark}
While we based our resource measure in this paper on the arithmetic mean, we could have alternatively defined a resource measure of the quantum circuit
$\vec{C}_l=(\Phi_l,\Phi_{l-1},\ldots,\Phi_1)$
that is based on the geometric mean, viz.
\begin{eqnarray}
\mu_{p,q}(\vec{C}_l)
=\prod^{l}_{i=1}\norm{M^{\Phi_i}}_{p,q},
\end{eqnarray}
which is the geometric mean of the resource over the layers
of the quantum circuit. By the arithmetic mean--geometric mean inequality, it is easy to 
see that 
\begin{eqnarray}
\nu_{p,q}(\vec{C}_l)
\geq \mu_{p,q}(\vec{C}_l)^{1/l}.
\end{eqnarray}
Also, we could define the path norm as a resource measure
as follows:
\begin{eqnarray}
\gamma_{p,q}(\vec{C}_l)
=\left(\frac{1}{4^{n_l}}\sum_{\vec{x}}\gamma^{(\vec{x})}_p(C_l)^q\right)^{1/q},
\end{eqnarray}
where
\begin{eqnarray}
\gamma^{(\vec{x})}_{p}(\vec{C}_l)
=\left(
\sum_{\substack{v_0\to v_1\to\ldots\to v_{out},\\
v_{out}=\vec{x}}}
\left|M^{\Phi_l}_{\vec{x}v_{l-1}}M^{\Phi_{l-1}}_{v_{l-1}v_{l-2}}\cdots M^{\Phi_{1}}_{v_{1}v_{0}} \right|^p
\right)^{1/p}.
\nonumber\\
\end{eqnarray}
The modified version of these resource measures for quantum circuits can also be similarly defined. 
We present similar results on the Rademacher complexity of quantum circuits based on these resource measures in Appendices \ref{apen:deep} and \ref{apen:deep_unital}. 
\end{remark}

\begin{remark} 
Note that for any given quantum channel $\Psi$, 
there could be many different ways  to realize it by quantum circuits of the same depth 
$l$ and width vector $\vec{n}$, i.e., there could be multiple circuits
$\vec{C}_l=(\Phi_l,\ldots,\Phi_1)$ for which $\Psi=\Phi_l\circ\ldots\circ\Phi_1$. Furthermore, note that resource measures such as $\nu_{p,q}$ depend on the realization of the channel. Thence, if we would like to define a resource measure for quantum channels $\Psi$ that is independent of their quantum circuit realization, it would be necessary to adopt a definition like the one below:
\begin{eqnarray}
\nu^{l,\vec{n}}_{p,q}(\Psi)
:=\min\set{\nu_{p,q}(\vec{C}_l): 
\vec{C}_l\in\mathcal{C}^{l,\vec{n}},
\Psi=C_l
},
\end{eqnarray}
which quantifies the minimum amount of resources necessary to realize the target channel over all quantum circuits with a
given depth and width. The quantities 
$\mu^{l,\vec{n}}_{p,q}$ and $\gamma^{l,\vec{n}}_{p,q}$ may also be defined analogously. These resource
measures may be of independent interest in resource theory. 
\end{remark}

\section{Conclusion}

In this work, we studied the Rademacher complexity of quantum circuits. First, 
we introduced the $(p,q)$ group norm to define the resource measure of magic for 
quantum channels and for quantum circuits with a layered structure.
Second, we proved that the Rademacher complexity of quantum circuits is bounded by its depth and width as well as its amount of magic, where 
the dependence on the width is determined by the choice of 
$(p,q)$. These results reveal the dependence of statistical complexity on the resources and structure parameters (such as depth and width) of the quantum circuit.

While our results are stated in terms of the Rademacher complexity, there are other prominent choices of measures of statistical complexity, such as the VC dimension and metric entropy, that could be used. 
Due to the close relationship between the Rademacher complexity and the VC dimension and the metric entropy \cite{Dudley67,Sudakov71,Mendelson03}, it 
is straightforward to extend our results to obtain bounds on these complexity measures of quantum circuits. Another measure that has recently gained prominence is the topological entropy,
a concept from dynamic systems that has recently been used to measure the complexity of classical neural networks \cite{bu2020depth}. We leave for future work the problem of generalizing the results about Rademacher complexity to 
topological entropy. 
Finally, we note that while our results are based on expressing each quantum channel in
the Pauli basis, there are also other choices of bases, or more generally frames, that can be used to express  quantum channels, a notable example being the phase space point operator basis \cite{Veitch14, ferrie2009framed}. How do our results generalize to the case where the basis is chosen arbitrarily? We leave this question for further work.

\begin{acknowledgments}
K. B. thanks Arthur Jaffe  and Zhengwei Liu for the help and support during the breakout of the COVID   -19 pandemic. 
K. B. acknowledges the support of
 ARO Grants W911NF-19-1-0302 and
W911NF-20-1-0082, and the support from Yau Mathematical
Science Center at Tsinghua University during the visit. 

\end{acknowledgments}

 \bibliography{SatCom-lit}

\appendix
\widetext

\section{Single quantum channels}\label{apen:single}

\subsection{\texorpdfstring{$(p,q)$}{(p,q)} group norm of the representation matrix of a single quantum channel}
For any $N_1\times N_2$ real-valued matrix $M$, 
which can be written as a column
\begin{equation}
\left(\begin{array}{cc}
M_1\\
M_2\\
\ldots\\
M_{N_1}
\end{array}
\right)
\end{equation}
 of $N_1$ rows, we define the $(p,q)$ \textit{group norm}, with $0<p,q\leq \infty$, as follows:
\begin{eqnarray}
\norm{M}_{p,q}=
\left(\frac{1}{N_1}\sum_{i}\norm{M_i}^q_p\right)^{1/q},
\end{eqnarray}
where  the $l_p$ norm of the $i$-th row vector $M_i$ is 
\begin{eqnarray}
\norm{M_i}_p
=\left(\sum^{N_2}_{j=1}
|M_{ij}|^p\right)^{1/p}.
\end{eqnarray}

The $(p,q)$ group norm satisfies the following multiplicative property.
\begin{lem}\label{lem:multi_pq}
Given two matrices $M_1$ and $M_2$, it holds that
\begin{eqnarray}
\norm{M_1\otimes M_2}_{p,q}
=\norm{M_1}_{p,q}\norm{M_2}_{p,q}.
\end{eqnarray}
\end{lem}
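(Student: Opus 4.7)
The plan is to unpack the definition of the $(p,q)$ group norm directly on the tensor product matrix, exploiting the fact that entries of $M_1 \otimes M_2$ factor as products of entries of $M_1$ and $M_2$. Let $M_1$ be $N_1 \times N_2$ and $M_2$ be $N_1' \times N_2'$, so that $M_1 \otimes M_2$ is an $(N_1 N_1') \times (N_2 N_2')$ matrix whose rows are naturally indexed by pairs $(i,i')$ with $i \in [N_1]$, $i' \in [N_1']$, and similarly for columns.

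First I would compute the $l_p$ norm of a single row $(M_1 \otimes M_2)_{(i,i')}$. Since $(M_1 \otimes M_2)_{(i,i'),(j,j')} = (M_1)_{ij} (M_2)_{i'j'}$, the sum over columns splits as a product of two independent sums, yielding
\begin{eqnarray}
\norm{(M_1 \otimes M_2)_{(i,i')}}_p = \norm{(M_1)_i}_p \, \norm{(M_2)_{i'}}_p.
\end{eqnarray}
This is the key factorization step, and it is essentially immediate from the structure of the Kronecker product.

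Next I would plug this into the definition of the $(p,q)$ group norm:
\begin{eqnarray}
\norm{M_1 \otimes M_2}_{p,q}^q = \frac{1}{N_1 N_1'} \sum_{i,i'} \norm{(M_1)_i}_p^q \, \norm{(M_2)_{i'}}_p^q.
\end{eqnarray}
The double sum factors as a product of two single sums, giving $\norm{M_1}_{p,q}^q \cdot \norm{M_2}_{p,q}^q$, and taking the $q$-th root yields the claim. The case $p = \infty$ or $q = \infty$ is handled analogously by replacing the relevant sum with a supremum, and the same factorization argument applies.

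I do not anticipate any genuine obstacle here: the proof is a direct computation relying only on (i) the tensor product entry formula, (ii) the factorization $\sum_{j,j'} a_j b_{j'} = (\sum_j a_j)(\sum_{j'} b_{j'})$ applied twice, and (iii) the definition of the group norm. The only care needed is bookkeeping of the normalizing factors $1/N_1$ and $1/N_1'$ to confirm they combine correctly into $1/(N_1 N_1')$.
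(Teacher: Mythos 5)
Your proof is correct and follows exactly the same route as the paper, which simply cites the entry factorization $[M_1\otimes M_2]_{(i,i'),(j,j')}=[M_1]_{ij}[M_2]_{i'j'}$ and leaves the rest implicit; you have merely filled in the routine computation (row-norm factorization, then factorization of the normalized sum over rows). No gaps.
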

\begin{proof}
This follows directly from the fact that  $[M_1\otimes M_2]_{\vec{x_1}\vec{x_2}, \vec{y}_1\vec{y}_2}
=[M_1]_{\vec{x}_1\vec{y}_1}[M_2]_{\vec{x_2}\vec{y}_2}$.
\end{proof}

Let $P_0 = \mathbb I$, $P_1 = X$, $P_2 = Y$, and $P_3 = Z$ be the single-qubit Pauli matrices. The $n$-qubit Pauli matrices $P_{\vec{z}}$ are defined as
$P_{\vec{z}}=P_{z_1}\ot P_{z_2}\ot \ldots\ot P_{z_n}$ for any vector $\vec{z}\in \set{0,1,2,3}^n$. 
Given a quantum channel $\Phi:\mathcal L((\complex^2)^{\ot n_1})\to \mathcal L((\complex^2)^{\ot n_2})$ from $n_1$ qubits to $n_2$ qubits, we define the $4^{n_1}\times 4^{n_2}$ representation matrix  $M^{\Phi}$ in the Pauli basis by its matrix elements as follows:
\begin{eqnarray}
M^{\Phi}_{\vec{z}\vec{x}}
=\frac{1}{2^{n_2}}\Tr{P_{\vec{z}}\Phi(P_{\vec{x}})},
\end{eqnarray}
where $\vec{x}\in \set{0,1,2,3}^{n_1}$, $\vec{z}\in \set{0,1,2,3}^{n_2}$, and $P_{\vec{x}}$ and $P_{\vec{z}}$ are the corresponding Pauli operators. 
From the definition of $M^{\Phi}$, it is easy to see that 
the representation matrix of quantum channels in the Pauli basis
 satisfies the following  properties.

\begin{lem}\label{lem:prop_pq_u}
Given two quantum channels $\Phi_1:\mathcal L((\complex^2)^{\ot n_1})\to \mathcal L((\complex^2)^{\ot n_2})$ and
$\Phi_2:\mathcal L((\complex^2)^{\ot n_3})\to \mathcal L((\complex^2)^{\ot n_4})$, we have
\begin{eqnarray}
M^{\Phi_2\circ \Phi_1}
&=&M^{\Phi_2}M^{\Phi_1},\\
M^{\Phi_2\ot\Phi_1}
&=&M^{\Phi_2}\ot M^{\Phi_1},\\
M^{\lambda\Phi_1+\mu\Phi_2}
&=&\lambda M^{\Phi_1}
+\mu M^{\Phi_2}, \forall \lambda,\mu\in\real.
\end{eqnarray}

\end{lem}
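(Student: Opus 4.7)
The plan is to verify each of the three identities directly from the definition $M^{\Phi}_{\vec{z}\vec{x}} = \frac{1}{2^{n_{\text{out}}}} \trace[P_{\vec{z}} \Phi(P_{\vec{x}})]$, leveraging the orthogonality $\trace[P_{\vec{y}} P_{\vec{y}'}] = 2^{n} \delta_{\vec{y}\vec{y}'}$ and the Pauli completeness relation $A = \frac{1}{2^{n}} \sum_{\vec{y}} \trace[P_{\vec{y}} A]\, P_{\vec{y}}$ for any operator $A$ on $n$ qubits. The linearity statement (3) is immediate: both sides are, by definition, the linear combination $\lambda \cdot \frac{1}{2^{n_2}}\trace[P_{\vec{z}}\Phi_1(P_{\vec{x}})] + \mu \cdot \frac{1}{2^{n_2}}\trace[P_{\vec{z}}\Phi_2(P_{\vec{x}})]$, using linearity of the trace and of the map $\lambda\Phi_1+\mu\Phi_2$ on $P_{\vec{x}}$.

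For the composition identity (1), I would expand $\Phi_1(P_{\vec{x}})$ in the Pauli basis using the completeness relation applied in the $n_2$-qubit intermediate space, giving $\Phi_1(P_{\vec{x}}) = \frac{1}{2^{n_2}} \sum_{\vec{y}} \trace[P_{\vec{y}} \Phi_1(P_{\vec{x}})]\, P_{\vec{y}} = \sum_{\vec{y}} M^{\Phi_1}_{\vec{y}\vec{x}} P_{\vec{y}}$. Then by linearity of $\Phi_2$ and trace,
\begin{equation}
M^{\Phi_2\circ\Phi_1}_{\vec{z}\vec{x}} = \frac{1}{2^{n_3}}\trace\!\left[P_{\vec{z}}\, \Phi_2\!\left(\sum_{\vec{y}} M^{\Phi_1}_{\vec{y}\vec{x}} P_{\vec{y}}\right)\right] = \sum_{\vec{y}} M^{\Phi_2}_{\vec{z}\vec{y}} M^{\Phi_1}_{\vec{y}\vec{x}},
\end{equation}
which is the matrix product. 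The small subtlety is keeping track of which space (of dimension $2^{n_2}$) the intermediate Pauli decomposition is performed in, so that the normalization $\frac{1}{2^{n_2}}$ cancels against $\trace[P_{\vec{y}}P_{\vec{y}'}]=2^{n_2}\delta_{\vec{y}\vec{y}'}$ in the right places.

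For the tensor identity (2), I would use that multi-qubit Paulis factorize, $P_{(\vec{z},\vec{z}')} = P_{\vec{z}} \otimes P_{\vec{z}'}$ and $P_{(\vec{x},\vec{x}')} = P_{\vec{x}} \otimes P_{\vec{x}'}$, together with $(\Phi_2\otimes \Phi_1)(A\otimes B) = \Phi_2(A)\otimes \Phi_1(B)$ and the multiplicativity of the trace under tensor products. Then
\begin{equation}
M^{\Phi_2\otimes\Phi_1}_{(\vec{z},\vec{z}'),(\vec{x},\vec{x}')} = \frac{1}{2^{n_2+n_4}} \trace[P_{\vec{z}}\Phi_2(P_{\vec{x}})]\,\trace[P_{\vec{z}'}\Phi_1(P_{\vec{x}'})] = M^{\Phi_2}_{\vec{z}\vec{x}} M^{\Phi_1}_{\vec{z}'\vec{x}'},
\end{equation}
which is exactly the Kronecker product entry.

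I do not expect a genuine obstacle here — all three identities are essentially bookkeeping. The only place to be careful is in (1), where one must correctly insert the completeness relation in the intermediate Hilbert space and match the $2^{n_2}$ normalization factor against the Pauli orthogonality, to confirm that one gets ordinary matrix multiplication (rather than a rescaled version).
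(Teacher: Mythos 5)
Your proposal is correct and follows essentially the same route as the paper: expand $\Phi_1(P_{\vec{x}})=\sum_{\vec{y}}M^{\Phi_1}_{\vec{y}\vec{x}}P_{\vec{y}}$ in the intermediate Pauli basis for the composition identity, and treat the tensor-product and linearity identities as direct consequences of the definition (you merely spell out the Pauli factorization that the paper leaves implicit). The only blemish is the stray normalization $\tfrac{1}{2^{n_3}}$ in your composition computation, which should be $\tfrac{1}{2^{n_4}}$ (the output register of $\Phi_2\circ\Phi_1$); this is cosmetic and does not affect the argument.
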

\begin{proof}
Based on definition of the representation matrix $M^{\Phi}$, we have 
\begin{eqnarray*}
\Phi_1(P_{\vec{x}})
=\sum_{\vec{y}}M^{\Phi_1}_{\vec{y}\vec{x}}P_{\vec{y}}.
\end{eqnarray*}
Therefore, it follows that 
\begin{eqnarray*}
M^{\Phi_2\circ\Phi_1}_{\vec{z}\vec{x}}
=\frac{1}{2^n}
\Tr{P_{\vec{z}}\Phi_2\circ\Phi_1(P_{\vec{x}})}
=\frac{1}{2^{n_4}}
\Tr{P_{\vec{z}}\sum_{\vec{y}}M^{\Phi_1}_{\vec{y}\vec{x}}\Phi_2(P_{\vec{y}})}
=\sum_{\vec{y}}M^{\Phi_2}_{\vec{z}\vec{y}}M^{\Phi_1}_{\vec{y}\vec{x}}.
\end{eqnarray*}
Hence,
\begin{eqnarray*}
M^{\Phi_2\circ \Phi_1}
=M^{\Phi_2}M^{\Phi_1}.
\end{eqnarray*}

The other two identities
\begin{eqnarray*}
M^{\Phi_2\ot\Phi_1}
&=&M^{\Phi_2}\ot M^{\Phi_1},\\
M^{\lambda\Phi_1+\mu\Phi_2}
&=&\lambda M^{\Phi_1}
+\mu M^{\Phi_2},
\end{eqnarray*}
follow directly from the definition of the representation matrix.
\end{proof}

Next, we consider the $(p,q)$ group norm of the representation matrix  for 
quantum channels, for the case when the channel is unitary.

\begin{lem}\label{lem:cliff_M}
For a Clifford unitary $U$, $M^{U}$ is a permutation matrix, up to a $\pm $ sign. That is, 
$M^{U}_{\vec{x}\vec{y}}=\pm \delta_{\vec{x},\pi(\vec{y})}$, where $\pi$ is a permutation over 
the set 
$\set{0,1,2,3}^n$.
\end{lem}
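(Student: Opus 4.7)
The plan is to reduce the claim to the defining property of the Clifford group, namely that conjugation by any Clifford unitary sends Pauli operators to Pauli operators (up to an overall sign). Since this is the standard characterization of the Clifford group as the normalizer of the $n$-qubit Pauli group in the unitary group, I will take it as given and translate it into a statement about the matrix $M^{U}$.

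Concretely, I would first invoke the defining property to write, for every $\vec{y}\in\{0,1,2,3\}^n$, an identity of the form
\begin{equation*}
U P_{\vec{y}} U^{\dagger} = \epsilon(\vec{y}) \, P_{\pi(\vec{y})},
\end{equation*}
where $\epsilon(\vec{y})\in\{+1,-1\}$ and $\pi:\{0,1,2,3\}^n\to\{0,1,2,3\}^n$ is some map. Second, I would argue that $\pi$ is in fact a permutation: if $\pi(\vec{y})=\pi(\vec{y}')$ then $U P_{\vec{y}} U^{\dagger}$ and $U P_{\vec{y}'} U^{\dagger}$ are scalar multiples of each other, and since $U(\cdot)U^{\dagger}$ is invertible, $P_{\vec{y}}$ and $P_{\vec{y}'}$ must be proportional, forcing $\vec{y}=\vec{y}'$ by the linear independence of the Pauli basis. (One may also note that $\pi$ must fix $\vec{0}$ with $\epsilon(\vec{0})=+1$, since $U\mathbb{I}U^{\dagger}=\mathbb{I}$.) Third, I would plug this identity into the definition of the representation matrix and use the Pauli orthogonality relation $\Tr[P_{\vec{z}}P_{\vec{y}}]=2^n\delta_{\vec{z},\vec{y}}$ to obtain
\begin{equation*}
M^{U}_{\vec{z}\vec{x}} = \frac{1}{2^n}\Tr\!\big[P_{\vec{z}} U P_{\vec{x}} U^{\dagger}\big] = \frac{\epsilon(\vec{x})}{2^n}\Tr\!\big[P_{\vec{z}} P_{\pi(\vec{x})}\big] = \epsilon(\vec{x})\,\delta_{\vec{z},\pi(\vec{x})},
\end{equation*}
which is exactly the claimed form of a signed permutation matrix.

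There is essentially no analytic obstacle here; the entire content is the Pauli-normalizer property of the Clifford group, which is standard. The only point requiring a moment of care is verifying that $\pi$ is a bijection rather than merely a function, and this is handled by the short injectivity argument above together with finiteness of $\{0,1,2,3\}^n$. Everything else is a direct computation using the orthogonality of Pauli operators under the Hilbert--Schmidt inner product.
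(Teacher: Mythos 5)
Your proof is correct and follows the same route as the paper, which simply asserts that the claim "comes directly from the definition of the Clifford unitaries, which map Pauli operators to Pauli operators." You have filled in the details the paper leaves implicit --- the injectivity of $\pi$ and the orthogonality computation $\frac{1}{2^n}\Tr\left[P_{\vec{z}}P_{\pi(\vec{x})}\right]=\delta_{\vec{z},\pi(\vec{x})}$ --- but the underlying idea is identical.
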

\begin{proof}
This comes directly from the definition of the Clifford unitaries, which map Pauli 
operators to Pauli operators. 
\end{proof}

\begin{lem}
The $(p,q)$ group norm is invariant under the 
left or right multiplication of $M^{U}$, if $U$ is a Clifford unitary.
That is, for any matrix $A$, we have
\begin{eqnarray}
\norm{M^{U}A}_{p,q}
=\norm{AM^{U}}_{p,q}
=\norm{A}_{p,q}.
\end{eqnarray}
\end{lem}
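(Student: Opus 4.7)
The plan is to invoke Lemma \ref{lem:cliff_M}, which establishes that $M^U$ is a signed permutation matrix when $U$ is Clifford, and then verify directly that both left and right multiplication by such a matrix preserve the $(p,q)$ group norm.

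First, I would write $M^U_{\vec x\vec y} = \epsilon(\vec y)\,\delta_{\vec x,\pi(\vec y)}$ for some permutation $\pi$ of $\{0,1,2,3\}^n$ and signs $\epsilon(\vec y)\in\{\pm 1\}$. A short index calculation then shows that for left multiplication, the $\vec x$-th row of $M^U A$ equals $\epsilon(\pi^{-1}(\vec x))$ times the $\pi^{-1}(\vec x)$-th row of $A$, so $M^U A$ is obtained from $A$ by permuting the rows of $A$ and possibly negating some of them. For right multiplication, the $\vec y$-th column of $A M^U$ equals $\epsilon(\vec y)$ times the $\pi(\vec y)$-th column of $A$; equivalently, each row of $A M^U$ is obtained from the corresponding row of $A$ by permuting its entries via $\pi$ and flipping the signs dictated by $\epsilon$.

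Next, I would invoke two elementary invariance properties of the $l_p$ norm: it is unchanged under sign flips of individual entries of a vector and under permutations of those entries. These imply $\norm{(M^U A)_{\vec x}}_p = \norm{A_{\pi^{-1}(\vec x)}}_p$ for every row index $\vec x$, and $\norm{(A M^U)_{\vec z}}_p = \norm{A_{\vec z}}_p$ for every row index $\vec z$. In the first case the multiset of row $l_p$ norms is preserved; in the second, each individual row norm is preserved. Since the number of rows is unchanged by multiplication by $M^U$, summing the $q$-th powers and applying the $\tfrac{1}{N_1}$ prefactor in the definition of the $(p,q)$ group norm yields $\norm{M^U A}_{p,q} = \norm{A M^U}_{p,q} = \norm{A}_{p,q}$.

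I do not anticipate a real obstacle, since the argument is just careful bookkeeping of row and column indices under a signed permutation. The one subtlety worth flagging is that left multiplication only preserves the multiset of row $l_p$ norms (which is sufficient because $\sum_i \norm{M_i}_p^q$ is symmetric in $i$), whereas right multiplication preserves each individual row $l_p$ norm, so the two cases are handled by slightly different bookkeeping even though the end result is the same.
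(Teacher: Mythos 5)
Your proof is correct and follows essentially the same route as the paper's: both reduce to Lemma \ref{lem:cliff_M} (that $M^U$ is a signed permutation matrix), observe that right multiplication permutes entries within each row (preserving each row's $l_p$ norm) while left multiplication permutes the rows themselves (preserving the multiset of row norms), and conclude from the symmetry of $\sum_i \norm{M_i}_p^q$. Your bookkeeping is in fact slightly more careful than the paper's, which misstates the left-multiplication case as a column permutation before correctly treating it as a row permutation.
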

\begin{proof}
Based on the  Lemma \ref{lem:cliff_M}, $M^{U}$ is a permutation matrix up to some $\pm$ sign.
Hence, 
$AM^{U}$ is just a permutation of the columns of $A$ with some $\pm $ sign. 
Thus, for the $i$-th  row vector, we have $\norm{(AM^{U})_i}_p=\norm{A_i}_p$. Therefore, 
$\norm{AM^{U}}_{p,q}=\norm{A}_{p,q}$.

Similarly, $M^UA$ is just a permutation of the columns of $A$ with some $\pm$ sign. 
Thus, for the $i$-th row vector 
$\norm{(M^UA)_i}_p=\norm{A_{\pi(i)}}_p$, where $\pi$ is a permutation. 
Then $\sum_{i}\norm{(AM^U)_i}_p=\sum_{i}\norm{A_{\pi(i)}}_p=\sum_{i}\norm{A_{i}}_p$, i.e., 
$
\norm{AM^{U}}_{p,q}
=\norm{A}_{p,q}$.

\end{proof}

\begin{lem}\label{lem:clif_pq}
Given a unitary channel $U$, we have the following result:

(1) For $0<p<2$, we have $\norm{M^{U}}_{p,q}\geq 1$, $\norm{M^{U}}_{p,q}= 1$ iff 
$U$ is a Clifford unitary. 

(2) For $p>2$, $0<q<\infty$, we have $\norm{M^{U}}_{p,q}\leq 1$, $\norm{M^{U}}_{p,q}= 1$ iff 
$U$ is a Clifford unitary. 

(3) For $p=2$, $q>0$ or $p>2, q=\infty$, we have $\norm{M^{U}}_{p,q}=1$ for any unitary $U$.
\end{lem}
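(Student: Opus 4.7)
The plan rests on one structural observation: for a unitary $U$, the Pauli transfer matrix $M^U$ is an orthogonal matrix. Using $\Tr[P_{\vec{x}} P_{\vec{y}}] = 2^n \delta_{\vec{x}\vec{y}}$ together with $\Tr[(UP_{\vec{x}}U^\dagger)(UP_{\vec{y}}U^\dagger)] = \Tr[P_{\vec{x}}P_{\vec{y}}]$, one computes $\sum_{\vec{z}} M^U_{\vec{z}\vec{x}} M^U_{\vec{z}\vec{y}} = \delta_{\vec{x}\vec{y}}$, so the columns (hence, by squareness, the rows) of $M^U$ are orthonormal. In particular every row $M^U_i$ satisfies $\norm{M^U_i}_2 = 1$.

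Next I would invoke the standard power-mean inequality on the Euclidean unit sphere in $\real^N$: for any unit vector $v$, $\norm{v}_p \geq 1$ when $p \leq 2$ and $\norm{v}_p \leq 1$ when $p \geq 2$, with strict inequality unless $v$ has exactly one nonzero entry (which must then be $\pm 1$). This follows from the strict concavity (resp.\ convexity) of $x\mapsto x^{p/2}$ on $[0,\infty)$ when $p<2$ (resp.\ $p>2$). Applying this row-wise to $M^U$ and then averaging the $q$-th powers immediately gives $\norm{M^U}_{p,q} \geq 1$ for $1\leq p < 2$, $0<q\leq\infty$, and $\norm{M^U}_{p,q} \leq 1$ for $p > 2$, $0 < q < \infty$. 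For $p = 2$ the row norms are identically $1$, so $\norm{M^U}_{2,q} = 1$ for every $q > 0$, disposing of the first half of (3).

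For the equality clauses in (1) and (2): $\norm{M^U}_{p,q} = 1$ forces $\norm{M^U_i}_p = 1$ for every row, which by the sphere equality condition forces each row of $M^U$ to consist of a single $\pm 1$ and zeros elsewhere. Combined with orthogonality of $M^U$ this means $M^U$ is a signed permutation matrix, equivalently $U P_{\vec{x}} U^\dagger = \pm P_{\pi(\vec{x})}$ for some permutation $\pi$ of $\{0,1,2,3\}^n$. This is precisely the Pauli-normalizer characterization of the Clifford group, giving the converse of Lemma \ref{lem:cliff_M} for free; the forward direction (Clifford implies norm $1$) already follows from the same lemma. Finally, the remaining case $p > 2$, $q = \infty$ of (3) reduces to $\max_i \norm{M^U_i}_p = 1$: the upper bound is the sphere bound, and the lower bound is achieved by the $\vec{z}=\vec{0}$ row, which by trace preservation equals $(1,0,\ldots,0)$ and therefore has $\ell_p$-norm $1$ for every $p$.

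I do not foresee a serious obstacle. The only mildly non-routine point is the converse direction of the Clifford characterization, which is just the standard fact that a unitary whose conjugation action sends every Pauli to a signed Pauli belongs to the normalizer of the Pauli group, i.e., the Clifford group. Everything else is orthogonality of the Pauli transfer matrix, the monotonicity of $\ell_p$-norms on the $\ell_2$-sphere, and a careful reading of the strict equality conditions in the underlying power-mean inequality.
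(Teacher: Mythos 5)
Your proposal is correct and follows essentially the same route as the paper's proof: establish that $M^U$ is orthogonal so every row is an $\ell_2$-unit vector, compare $\ell_p$ with $\ell_2$ row-wise (with the equality case forcing a single $\pm 1$ entry per row, i.e., a signed permutation matrix, i.e., a Clifford unitary), and handle $p=2$ and the $q=\infty$ case of (3) via the row $M^U_{\vec 0}=(1,0,\ldots,0)$. You are merely more explicit than the paper about the orthogonality computation and the Pauli-normalizer characterization of the Clifford group.
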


\begin{proof}
First, for any unitary $U$, it is easy to see that  $M^{U}$ is an orthogonal matrix. Therefore, 
$\norm{M^U_{\vec{x}}}_2=1$ for any  $\vec{x}$ and $M^U_{\vec{0}}=(1,0,\ldots,0)$. Therefore, we have the statement in (3).

(1) For $0<p<2$, we have 
$\norm{M^U_{\vec{x}}}_p\geq \norm{M^U_{\vec{x}}}_2=1$ for any $\vec{x}$. 
Therefore, $\norm{M^{U}}_{p,q}\geq 1$. Besides, 
$\norm{M^{U}}_{p,q}= 1$ iff $\norm{M^U_{\vec{x}}}_p=1$ for any $\vec{x}$
iff every  row vector  $M^{U}_{\vec{x}}$ has only one nonzero element, which could only be $\pm 1$,
iff $U$ is a Clifford unitary.

(2) For $p>2$, $0<q<\infty$, we have 
$\norm{M^U_{\vec{x}}}_p\leq \norm{M^U_{\vec{x}}}_2=1$ for any $\vec{x}$. 
Therefore, $\norm{M^{U}}_{p,q}\leq 1$. Besides, 
$\norm{M^{U}}_{p,q}= 1$ iff $\norm{M^U_{\vec{x}}}_p=1$ for any $\vec{x}$
iff every  row  vector $M^{U}_{\vec{x}}$ has only one nonzero element, which could only be $\pm 1$,
iff $U$ is Clifford.
\end{proof}

Based on the above facts, it is easy to see that the $(p,q)$ norm of the representation matrix can be regarded as some 
resource measure of magic of quantum gates.

\begin{prop}\label{prop:clif_pq1}
Given a unitary channel $U$, the $(p,q)$ norm can be regarded as a 
resource measure satisfying the following properties

(1) (Faithfulness) For $0<p<2$, we have $\norm{M^{U}}_{p,q}\geq 1$, $\norm{M^{U}}_{p,q}= 1$ iff 
$U$ is Clifford unitary. 

(1') (Faithfulness) For $p>2$, $0<q<\infty$, we have $\norm{M^{U}}_{p,q}\leq 1$, $\norm{M^{U}}_{p,q}= 1$ iff 
$U$ is Cllifford unitary.

(2) (Invariance under Clifford unitaries)  $\norm{M^{U_1\circ U\circ U_2}}_{p,q}=\norm{M^{ U}}_{p,q}$ for any 
Clifford unitaries $U_1$ and $U_2$.

(3) (Multiplicity under tensor product) $\norm{M^{U_1 \ot U_2}}_{p,q}=\norm{M^{ U_1}}_{p,q}\norm{M^{ U_2}}_{p,q} $.

(4) (Convexity) For $p\geq 1, q\geq 1$, we have $\norm{M^{\lambda U_1 +(1-\lambda) U_2}}_{p,q}\leq \lambda\norm{M^{ U_1}}_{p,q}+(1-\lambda)\norm{M^{ U_2}}_{p,q}$ for $\lambda\in[0,1]$.

\end{prop}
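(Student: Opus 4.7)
The proposal is to verify each of the four (actually five) properties in turn, noting that most of them reduce directly to earlier lemmas and to standard facts about $(p,q)$ group norms viewed as a genuine matrix norm.

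First, parts (1) and (1') are essentially immediate corollaries of Lemma~\ref{lem:clif_pq}, which has already established the faithfulness statements verbatim. In the write-up I would simply cite that lemma and move on.

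For part (2), the invariance under Clifford unitaries, the plan is to apply the composition identity $M^{U_1\circ U\circ U_2}=M^{U_1}M^{U}M^{U_2}$ from Lemma~\ref{lem:prop_pq_u} and then invoke the lemma on left/right multiplication by $M^U$ for a Clifford $U$ (which was shown to leave the $(p,q)$ norm invariant because $M^{U_1}$ and $M^{U_2}$ act as signed permutations of the rows and columns). Two applications of that invariance give $\norm{M^{U_1}M^{U}M^{U_2}}_{p,q}=\norm{M^{U}}_{p,q}$.

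For part (3), the tensor product multiplicativity, the plan is to combine Lemma~\ref{lem:prop_pq_u}, which gives $M^{U_1\ot U_2}=M^{U_1}\ot M^{U_2}$, with Lemma~\ref{lem:multi_pq}, which says that the $(p,q)$ group norm is multiplicative under tensor products of matrices. The result follows in one line.

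For part (4), convexity for $p,q\geq 1$, the plan is again to use Lemma~\ref{lem:prop_pq_u} to rewrite the representation matrix as $M^{\lambda U_1+(1-\lambda)U_2}=\lambda M^{U_1}+(1-\lambda)M^{U_2}$, and then argue that $\norm{\cdot}_{p,q}$ satisfies the triangle inequality in this regime. The latter reduces to two applications of Minkowski: the inner $\ell_p$ norm on each row satisfies $\norm{(A+B)_i}_p\leq \norm{A_i}_p+\norm{B_i}_p$ for $p\geq 1$, and the outer normalized $\ell_q$ norm of the resulting vector of row norms then satisfies the triangle inequality for $q\geq 1$. Combining with absolute homogeneity under scaling by $\lambda$ and $1-\lambda$ yields the stated convexity bound.

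The only step that requires any genuine work is verifying that $\norm{\cdot}_{p,q}$ is a norm (and in particular satisfies the triangle inequality) for $p,q\geq 1$; the factor $1/N_1$ out front is a harmless rescaling and does not affect Minkowski. Everything else is a bookkeeping exercise on top of the lemmas that have already been proved, so I do not expect any nontrivial obstacle in assembling the proof.
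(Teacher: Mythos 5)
Your proposal is correct and follows essentially the same route as the paper's proof: parts (1) and (1') by direct citation of Lemma~\ref{lem:clif_pq}, part (2) via the composition identity of Lemma~\ref{lem:prop_pq_u} together with the invariance of the $(p,q)$ norm under left/right multiplication by signed permutation matrices, part (3) via Lemmas~\ref{lem:prop_pq_u} and~\ref{lem:multi_pq}, and part (4) via linearity of the representation matrix plus the triangle inequality for the $(p,q)$ group norm when $p,q\geq 1$. Your write-up of (4) is in fact slightly more explicit than the paper's one-line appeal to convexity of the $\ell_p$ and $\ell_q$ norms, but it is the same argument.
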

\begin{proof}
(1) and (1') come from  Lemma \ref{lem:clif_pq} directly.

(2) \begin{eqnarray}
\norm{M^{U_1\circ U\circ U_2}}_{p,q}=
\norm{M^{U_1}M^{U}M^{U_2}}_{p,q}
=\norm{M^{ U}}_{p,q},
\end{eqnarray}
where the first equality comes from Lemma \ref{lem:prop_pq_u} and the second equality comes from Lemma \ref{lem:clif_pq}.

(3)  \begin{eqnarray}
\norm{M^{U_1 \ot U_2}}_{p,q}=
\norm{M^{U_1} \ot M^{U_2}}_{p,q}
=\norm{M^{ U_1}}_{p,q}\norm{M^{ U_2}}_{p,q},
\end{eqnarray}
where the first equality comes from Lemma \ref{lem:prop_pq_u} and the second equality comes from Lemma \ref{lem:multi_pq}.

(4) comes directly from the convexity of $l_p$ and $l_q$ norm for $p\geq 1,q\geq 1$.

\end{proof}

\subsection{Bounds on the Rademacher complexity of quantum channels}

Let $p^*$ denote the H\"{o}lder conjugate of $p$, i.e., $\frac{1}{p}+\frac{1}{p^*}=1$.

\begin{lem}\label{lem:normpRaN1}
For any $N_1\times N_2$ real-valued matrix $M$, and any vector $\vec{v}\in\real^{N_2}$, we have 
\begin{eqnarray}\label{eq:normpRaN1}
\norm{M\vec{v}}_{p^*}
\leq N^{\max\set{\frac{1}{p^*},\frac{1}{q}}}_1
\norm{M}_{p,q}\norm{\vec{v}}_{p^*}.
\end{eqnarray}
\end{lem}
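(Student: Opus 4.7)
The plan is to reduce the inequality to a two-step estimate: first apply Hölder's inequality component-by-component to the matrix-vector product, and then compare the resulting $\ell^{p^*}$-norm of the row-norm vector with its $\ell^q$-norm (which is what appears in the definition of $\|M\|_{p,q}$).

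First, observe that the $i$-th component of $M\vec{v}$ is the inner product of the $i$-th row $M_i$ with $\vec{v}$, so Hölder's inequality in the conjugate pair $(p,p^*)$ gives
\begin{equation}
|(M\vec{v})_i| \;=\; |\langle M_i,\vec{v}\rangle| \;\leq\; \|M_i\|_p \, \|\vec{v}\|_{p^*} .
\end{equation}
Taking $\ell^{p^*}$-norms on both sides yields
\begin{equation}
\|M\vec{v}\|_{p^*} \;\leq\; \Big(\sum_{i=1}^{N_1} \|M_i\|_p^{p^*}\Big)^{1/p^*} \, \|\vec{v}\|_{p^*}.
\end{equation}
So it suffices to show that $\bigl(\sum_i \|M_i\|_p^{p^*}\bigr)^{1/p^*} \leq N_1^{\max\{1/p^*,1/q\}}\|M\|_{p,q}$, i.e.\ to compare the $\ell^{p^*}$ and $\ell^q$ norms of the vector of row-norms $a_i := \|M_i\|_p$.

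The next step is a standard $\ell^r$-norm monotonicity argument, split into two cases according to whether $p^* \leq q$ or $p^* > q$. If $q \geq p^*$, then viewing the coefficients $1/N_1$ as a probability measure and using monotonicity of $L^r$-norms in a probability space, $\bigl(\tfrac{1}{N_1}\sum_i a_i^{p^*}\bigr)^{1/p^*} \leq \bigl(\tfrac{1}{N_1}\sum_i a_i^{q}\bigr)^{1/q} = \|M\|_{p,q}$, which after multiplying by $N_1^{1/p^*}$ gives the bound with the exponent $1/p^*=\max\{1/p^*,1/q\}$. If $q < p^*$, then on $\mathbb{R}^{N_1}$ without the normalizing factor, the embedding $\ell^q \hookrightarrow \ell^{p^*}$ yields $\bigl(\sum_i a_i^{p^*}\bigr)^{1/p^*} \leq \bigl(\sum_i a_i^{q}\bigr)^{1/q} = N_1^{1/q}\|M\|_{p,q}$, and $1/q = \max\{1/p^*,1/q\}$ in this case. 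Combining the two cases gives exactly the factor $N_1^{\max\{1/p^*,1/q\}}$, and multiplying by $\|\vec{v}\|_{p^*}$ completes the proof.

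There is no serious obstacle: the argument is a clean two-line combination of Hölder with a well-known $\ell^r$-norm comparison. The only subtlety worth flagging is that the inequality $\|a\|_r \leq \|a\|_s$ reverses direction depending on whether the underlying measure is a probability measure (where larger exponents give smaller norms) or counting measure on $\{1,\dots,N_1\}$ (where smaller exponents give larger norms); handling both regimes in a unified way is precisely what produces the $\max$ in the exponent. One should also note that the argument extends to $q=\infty$ and $p=1$ (so $p^*=\infty$) with the usual conventions, which matches how the bound will later be applied in Theorems \ref{thm:one}--\ref{thm:Con_main2}.
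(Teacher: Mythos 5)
Your proof is correct and follows essentially the same route as the paper: a row-wise H\"older estimate giving $\norm{M\vec v}_{p^*}\leq\bigl(\sum_i\norm{M_i}_p^{p^*}\bigr)^{1/p^*}\norm{\vec v}_{p^*}$, followed by a two-case comparison of the $\ell^{p^*}$ and $\ell^q$ norms of the vector of row norms (normalized power-mean monotonicity when $q\geq p^*$, the counting-measure embedding when $q<p^*$), which is exactly how the paper obtains the factor $N_1^{\max\{1/p^*,1/q\}}$. No gaps.
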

\begin{proof}
First, let us prove the following 
inequality
\begin{eqnarray*}
\norm{M\vec{v}}_{p^*}
\leq N^{\frac{1}{p^*}}_1\norm{M}_{p,p^*}
\norm{\vec{v}}_{p^*}.
\end{eqnarray*}
This inequality holds because
\begin{eqnarray*}
\norm{M\vec{v}}_{p^*}^{p^*}
=\sum_{i}(M_i\vec{v})^{p^*}
\leq \sum_{i}\norm{M_i}^{p^*}_p\norm{\vec{v}}^{p^*}_{p^*}
=N^{\frac{1}{p^*}}_1\norm{M}_{p,p^*}
\norm{\vec{v}}_{p^*}.
\end{eqnarray*}

If $q>p^*$, then $\max\set{\frac{1}{p^*},\frac{1}{q}}=\frac{1}{p^*}$ and $\norm{M}_{p,q}\geq \norm{M}_{p,p^*}$. Hence the inequality Eq.~\eqref{eq:normpRaN1} reduces
to
\begin{eqnarray*}
\norm{M\vec{v}}_{p^*}
\leq N^{\frac{1}{p^*}}_1\norm{M}_{p,p^*}
\norm{\vec{v}}_{p^*}.
\end{eqnarray*}

If $q<p^*$, then $\max\set{\frac{1}{p^*},\frac{1}{q}}=\frac{1}{q}$ and $N^{1/q}_1\norm{M}_{p,q}\geq N^{1/p^*}_1\norm{M}_{p,p^*}$. Hence the inequality Eq.~\eqref{eq:normpRaN1}  
reduces to 
\begin{eqnarray*}
\norm{M\vec{v}}_{p^*}
\leq N^{\frac{1}{p^*}}_1\norm{M}_{p,p^*}
\norm{\vec{v}}_{p^*}.
\end{eqnarray*}

\end{proof}

\begin{lem}\label{lem:depn1}
For any $1\leq p\leq 2$
\begin{eqnarray}
\mathbb{E}_{\vec{\epsilon}}
\frac{1}{m}
\norm{\sum^m_{i=1}\epsilon_i\vec{v}_i}_{p^*}
\leq \frac{\sqrt{\min\set{p^*, 8n_0}}}{\sqrt{m}}\max_i\norm{\vec{v}_i}_{p^*}.
\end{eqnarray}
For $2<p<\infty$, we have 
\begin{eqnarray}
\mathbb{E}_{\vec{\epsilon}}
\frac{1}{m}
\norm{\sum^m_{i=1}\epsilon_i\vec{v}_i}_{p^*}
\leq \frac{\sqrt{p^*}}{m^{1/p*}}\max_i\norm{\vec{v}_i}_{p^*},
\end{eqnarray}
where $\vec{v}_i\in \real^N$. 
\end{lem}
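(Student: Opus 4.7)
The plan is to split into the two regimes $p^*\geq 2$ (part 1) and $1<p^*<2$ (part 2), handling each by combining Jensen's (or Lyapunov's) inequality with a coordinate-wise application of Khintchine's inequality.

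For part (1), I would first invoke Jensen to bound $\mathbb{E}\|\sum_i \epsilon_i \vec{v}_i\|_{p^*}$ by $(\mathbb{E}\|\sum_i \epsilon_i \vec{v}_i\|_{p^*}^{p^*})^{1/p^*}$. Expanding the $p^*$-th moment coordinate-wise and applying Khintchine's inequality with constant $\sqrt{p^*}$ to each coordinate $j$ gives $\mathbb{E}\|\sum_i\epsilon_i\vec v_i\|_{p^*}^{p^*} \leq (p^*)^{p^*/2}\sum_j(\sum_i v_{ij}^2)^{p^*/2}$. Since $p^*/2 \geq 1$, Minkowski's integral inequality lets one swap the $\ell_{p^*}(\ell_2)$ and $\ell_2(\ell_{p^*})$ norms, producing the bound $(p^*)^{p^*/2}\,m^{p^*/2}\max_i\|\vec v_i\|_{p^*}^{p^*}$. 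Taking the $p^*$-th root and dividing by $m$ yields the $\sqrt{p^*/m}\,\max_i\|\vec v_i\|_{p^*}$ estimate, which is one of the two quantities in the minimum.

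For the $\sqrt{8n_0}$ alternative in part (1), I would exploit that $\vec v_i \in \mathbb{R}^N$ with $N=4^{n_0}$. The idea is to rerun the same argument with a smaller auxiliary exponent $r\in[2,p^*]$: monotonicity of $\ell_r$ norms gives $\|x\|_{p^*}\leq \|x\|_r$ on the left-hand side, while the norm comparison $\|\vec v_i\|_r \leq N^{1/r-1/p^*}\|\vec v_i\|_{p^*}$ can be applied on the right-hand side. This yields $\mathbb{E}\|\sum_i\epsilon_i\vec v_i\|_{p^*} \leq \sqrt{rm}\,N^{1/r-1/p^*}\max_i\|\vec v_i\|_{p^*}$. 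Minimising $\sqrt{r}\,N^{1/r}$ subject to $r\geq 2$ is achieved at $r\sim \ln N$, and a routine calculation with $N=4^{n_0}$ produces a constant of the form $\sqrt{c\, n_0}$; the specific constant $8$ in $\sqrt{8n_0}$ emerges by optimising the trade-off carefully. Taking the better of the two bounds gives the stated $\sqrt{\min\{p^*,8n_0\}}$.

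For part (2) with $1<p^*<2$, the strategy is simpler. Jensen's inequality gives $\mathbb{E}\|\cdot\|_{p^*} \leq (\mathbb{E}\|\cdot\|_{p^*}^{p^*})^{1/p^*}$, and since $p^*\leq 2$, Lyapunov's inequality (rather than Khintchine) supplies $\mathbb{E}|\sum_i\epsilon_i v_{ij}|^{p^*} \leq (\sum_i v_{ij}^2)^{p^*/2}$. The subadditivity bound $(\sum_i a_i^2)^{p^*/2} \leq \sum_i |a_i|^{p^*}$, valid because the exponent $p^*/2\leq 1$, then collapses the double sum to $\sum_i\|\vec v_i\|_{p^*}^{p^*} \leq m\max_i\|\vec v_i\|_{p^*}^{p^*}$. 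Taking the $p^*$-th root, dividing by $m$, and using $1-1/p^*=1/p$ produces the stated $m^{-1/p}$ rate, with the factor $\sqrt{p^*}$ absorbed for uniformity with the first regime. The main obstacle I foresee is pinning down the precise constant $8$ in $\sqrt{8n_0}$: this requires using a sharp form of Khintchine's inequality in the $(p^*)^{p^*/2}$ step (or an alternative Gaussian comparison) together with a careful optimisation of the auxiliary exponent $r$ against $\ln N$; the remaining manipulations are essentially norm bookkeeping.
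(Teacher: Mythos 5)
Your proposal is correct and follows essentially the same strategy as the paper's proof: for the $\sqrt{p^*}$ bound, both arguments pass to the $p^*$-th moment via Jensen, apply Khintchine coordinate-wise with constant $\sqrt{p^*}$, and then use Minkowski (for $p^* \ge 2$) or subadditivity of $t\mapsto t^{p^*/2}$ (for $p^*<2$) to collapse the double sum, yielding the rates $m^{-1/2}$ and $m^{-1/p}$ respectively. The one place where you genuinely diverge is the dimension-dependent $\sqrt{8n_0}$ branch: the paper passes all the way to the $\ell_\infty$ norm via $\norm{x}_{p^*}\le N^{1/p^*}\norm{x}_\infty$ (with $N^{1/p^*}\le\sqrt{2}$ in the regime $p^*\ge 2\log_2 N=4n_0$) and then invokes a Massart-type maximal inequality to control $\mathbb{E}\max_j\abs{\sum_i\epsilon_i v_i(j)}$, whereas you stop at an intermediate exponent $r$ and rerun the Khintchine--Minkowski argument at level $r$, paying $N^{1/r-1/p^*}$ for the norm comparison; both choices give the same threshold $r=4n_0$ and the same constant $\sqrt{r}\,N^{1/r}=\sqrt{8n_0}$, with your route avoiding a separate maximal inequality at the cost of tracking the auxiliary exponent. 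One bookkeeping point worth making explicit: the choice $r=4n_0$ is admissible only when $p^*\ge 4n_0$ (you need $r\le p^*$ for $\norm{x}_{p^*}\le\norm{x}_{r}$), but in the complementary case $p^*<4n_0$ the minimum in $\sqrt{\min\{p^*,8n_0\}}$ is attained by $p^*$ anyway, so the plain Khintchine bound suffices --- exactly the case split the paper performs at the threshold $p^*=2\log_2 N$.
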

\begin{proof}
The proof is similar to that of Lemma 15 in \cite{Neyshabur15}. 
If
$1\leq p\leq \frac{2\log_2( N)}{2\log_2(N)-1}$, then $2\log_2 (N)\leq p^*$.
Thence,

\begin{eqnarray*}
\mathbb{E}_{\vec{\epsilon}}
\frac{1}{m}
\norm{\sum^m_{i=1}\epsilon_i\vec{v}_i}_{p^*}
&\leq& N^{\frac{1}{p^*}}
\mathbb{E}_{\vec{\epsilon}}
\frac{1}{m}
\norm{\sum^m_{i=1}\epsilon_i\vec{v}_i}_{\infty}\\
&\leq& N^{\frac{1}{2\log_2(N)}}
\mathbb{E}_{\vec{\epsilon}}
\frac{1}{m}
\norm{\sum^m_{i=1}\epsilon_i\vec{v}_i}_{\infty}\\
&\leq& \sqrt{2}
\mathbb{E}_{\vec{\epsilon}}
\frac{1}{m}
\norm{\sum^m_{i=1}\epsilon_i\vec{v}_i}_{\infty}\\
&=&\sqrt{2}
\mathbb{E}_{\vec{\epsilon}}
\frac{1}{m}\max_{j}
|\sum^m_{i=1}\epsilon_iv_i(j)|\\
&\leq&\sqrt{2}\frac{\sqrt{2\log(N)}}{m}
\max_{j}\norm{(v_i(j))_i}_2\\
&\leq&\sqrt{2}\frac{\sqrt{2\log(N)}}{\sqrt{m}}\max_i\norm{\vec{v}_i}_\infty\\
&\leq&\sqrt{2}\frac{\sqrt{2\log(N)}}{\sqrt{m}}\max_i\norm{\vec{v}_i}_{p^*}.
\end{eqnarray*}

If $  \frac{2\log_2(N)}{2\log_2(N)-1}<p<\infty$, then by
the Khintchine-Kahane inequality, we have 
\begin{eqnarray*}
\mathbb{E}_{\vec{\epsilon}}
\frac{1}{m}
\norm{\sum^m_{i=1}\epsilon_i\vec{v}_i}_{p^*}
\leq 
\frac{1}{m}
\left(
\sum_{j}\mathbb{E}_{\vec{\epsilon}}|\sum_i\epsilon_i
v_i(j)|^{p*}
\right)^{\frac{1}{p^*}}
\leq \frac{\sqrt{p^*}}{m}
\left(
\sum_{j}\norm{
(v_i(j))_i}^{p*}_2
\right)^{\frac{1}{p^*}},
\end{eqnarray*}
where 
\begin{equation}
\left(
\sum_{\vec{z}}\norm{
(v_i(j))_i}^{p*}_2
\right)^{\frac{1}{p^*}}
\leq
\left\{
\begin{array}{cc}
m^{1/2}\max_i \norm{\vec{v}_i}_{p^*}, p^*\geq 2, \\
m^{1/p^*}\max_i \norm{\vec{v}_i}_{p^*}, p^*< 2,
\end{array}
\right.
\end{equation}
and the first inequality comes from the Minkowski inequality 
 and the second inequality
from the fact that 
\begin{eqnarray*}
(x+y)^{p^*/2}\leq x^{p^*/2}+y^{p^*/2},
\end{eqnarray*}
for $p^*/2<1$.
Therefore 
\begin{equation}
\mathbb{E}_{\vec{\epsilon}}
\frac{1}{m}
\norm{\sum^m_{i=1}\epsilon_i\vec{v}_i}_{p^*}
\leq
\left\{
\begin{array}{cc}
\frac{\sqrt{p^*}}{m^{1/2}}\max_i \norm{\vec{v}_i}_{p^*}, p^*\geq 2,\\
\frac{\sqrt{p^*}}{m^{1/p}}\max_i \norm{\vec{v}_i}_{p^*}, p^*< 2.
\end{array}
\right.
\end{equation}

\end{proof}

\begin{lem}[Massart lemma \cite{shalev2014}]
Given a finite set $A\subset \real^m$, we have 
\begin{eqnarray}
R(A)\leq \max_{\vec{v}\in A}\norm{\vec{v}-\bar{\vec{v}}}_2\frac{\sqrt{2\log |A|}}{m},
\end{eqnarray}
where $\bar{\vec{v}}=\frac{1}{|A|}\sum_{\vec{v}\in A}\vec{v}$. 
\end{lem}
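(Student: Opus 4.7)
The plan is to prove the Massart lemma by the standard moment-generating-function (Chernoff) method together with Jensen's inequality. First, I would reduce to the centered case: set $\tilde A=\{\vec v-\bar{\vec v}:\vec v\in A\}$ and argue that $R(A)=R(\tilde A)$. In the sup-without-absolute-value convention, this is immediate because the shift contributes a term $-\sum_i\epsilon_i \bar v_i$ that is independent of the sup variable and has mean zero; in the sup-with-absolute-value convention (as used in the main body of the paper) one instead passes to $A\cup(-A)$, which is automatically centered at the origin and only doubles $|A|$ inside the logarithm. Either way, we may assume $\bar{\vec v}=\vec 0$ and abbreviate $L:=\max_{\vec v\in A}\|\vec v\|_2$.

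Second, for any parameter $t>0$, Jensen's inequality applied to the convex map $x\mapsto e^{tx}$ gives
\begin{equation*}
e^{t m R(A)}\;\le\;\mathbb{E}_{\vec\epsilon}\sup_{\vec v\in A}e^{t\sum_i\epsilon_i v_i}\;\le\;\sum_{\vec v\in A}\mathbb{E}_{\vec\epsilon}\,e^{t\sum_i\epsilon_i v_i},
\end{equation*}
where the second step trades the supremum over the finite set for a sum. Since the $\epsilon_i$ are independent Rademacher variables, each expectation factorizes and each factor is bounded by the classical sub-Gaussian estimate $\cosh(x)\le e^{x^2/2}$:
\begin{equation*}
\mathbb{E}_{\vec\epsilon}\,e^{t\sum_i\epsilon_i v_i}\;=\;\prod_i\cosh(t v_i)\;\le\;\prod_i e^{t^2 v_i^2/2}\;=\;e^{t^2\|\vec v\|_2^2/2}\;\le\;e^{t^2 L^2/2}.
\end{equation*}

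Combining these bounds yields $e^{t m R(A)}\le|A|\,e^{t^2 L^2/2}$, and taking logarithms gives the linear-in-$t$ inequality $m R(A)\le \frac{\log|A|}{t}+\frac{t L^2}{2}$. Optimizing over $t>0$, the minimum is attained at $t=\sqrt{2\log|A|}/L$, producing $R(A)\le L\sqrt{2\log|A|}/m$, which is the claimed bound. The only genuine subtlety is the centering reduction when the Rademacher complexity is defined with the absolute value inside the supremum; once that is handled, the Hoeffding estimate $\cosh(x)\le e^{x^2/2}$ and the optimization in $t$ are routine, so I do not anticipate an obstacle beyond that initial bookkeeping.
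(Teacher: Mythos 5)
The paper offers no proof of this lemma --- it is quoted as a known result with a citation to \cite{shalev2014} --- and your argument is exactly the standard moment-generating-function proof given in that reference (Jensen, union bound over the finite set, the Hoeffding estimate $\cosh(x)\le e^{x^2/2}$, and optimization in $t$), so it is correct. One small caveat on your centering remark: under the absolute-value convention, passing to $A\cup(-A)$ yields the \emph{uncentered} norm $\max_{\vec v\in A}\norm{\vec v}_2$ together with $\log(2|A|)$, not the centered norm with $\log|A|$ as literally stated; the statement as written is the no-absolute-value version of the lemma, for which your main argument goes through without that bookkeeping.
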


\begin{thm}[Restatement of Theorem \ref{thm:one}]
Given a set of  quantum circuits $\Phi$ from $n_0$ qubits to $n_1$ qubits with bounded $(p,q)$ norm $\norm{\cdot}_{p,q}$,
 the Rademacher complexity 
 on $m$ samples $S=\set{\vec{x_1},...,\vec{x}_m}$ satisfies the following bounds 
 
 (1) For $1\leq p\leq 2$, we have
\begin{eqnarray}
R_S(\mathcal{F}\circ \mathcal{C}_{\norm{\cdot}_{p,q}\leq \mu})
\leq \mu 4^{n_1\max\set{\frac{1}{p^*},\frac{1}{q}}}
\frac{\sqrt{\min\set{p^*, 8n_0}}}{\sqrt{m}}\norm{\vec{\alpha}}_p\max_i\norm{\vec{f}_I(\vec{x}_i)}_{p^*}.
\end{eqnarray}

(2) For $2<p<\infty$, we have
\begin{eqnarray}
R_S(\mathcal{F}\circ\mathcal{C}_{\norm{\cdot}_{p,q}\leq \mu})
\leq \mu 4^{n_1\max\set{\frac{1}{p^*},\frac{1}{q}}}
\frac{\sqrt{p^*}}{m^{1/p}}\norm{\vec{\alpha}}_p\max_i\norm{\vec{f}_I(\vec{x}_i)}_{p^*}.
\end{eqnarray}

\end{thm}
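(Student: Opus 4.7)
The plan is to expand $f_C(\vec x_i) = \Tr{C(\proj{\psi(\vec x_i)}) H}$ in the Pauli basis so that the objective becomes a bilinear form in the representation vectors of $H$ and $\proj{\psi(\vec x_i)}$ mediated by the representation matrix $M^C$. Concretely, using $H = \sum_{\vec z}\alpha_{\vec z}P_{\vec z}$, $\proj{\psi(\vec x_i)} = \sum_{\vec y}[\vec f_I(\vec x_i)]_{\vec y} P_{\vec y}$, the identity $C(P_{\vec y}) = \sum_{\vec z} M^C_{\vec z\vec y} P_{\vec z}$, and Pauli trace orthogonality, one obtains $f_C(\vec x_i) \propto \vec\alpha^T M^C \vec f_I(\vec x_i)$ with the proportionality depending only on $n_1$. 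This turns the supremum over quantum circuits into a supremum over representation matrices with $\|M^C\|_{p,q}\leq \mu$.

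Next, I would apply H\"older's inequality with conjugate exponents $p, p^*$ to separate the dependence on $H$ from the rest:
\begin{eqnarray*}
\left|\vec\alpha^T M^C \vec u\right|\leq \|\vec\alpha\|_p \|M^C \vec u\|_{p^*},
\qquad \vec u := \sum_i \epsilon_i \vec f_I(\vec x_i),
\end{eqnarray*}
and then invoke Lemma \ref{lem:normpRaN1} on $M^C$ (whose row dimension is $4^{n_1}$) to get
\begin{eqnarray*}
\|M^C \vec u\|_{p^*}\leq 4^{n_1\max\{1/p^*,1/q\}} \|M^C\|_{p,q} \|\vec u\|_{p^*}\leq 4^{n_1\max\{1/p^*,1/q\}} \mu \|\vec u\|_{p^*}.
\end{eqnarray*}
The supremum over $C$ is now absorbed into the factor $\mu$, and what remains is an expectation of $\|\vec u\|_{p^*}$ over the Rademacher sign choices.

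For the Rademacher average, I would apply Lemma \ref{lem:depn1} to the $4^{n_0}$-dimensional vectors $\vec f_I(\vec x_i)$, which immediately produces the two case distinctions: the factor $\sqrt{\min\{p^*, 8n_0\}/m}$ for $1\leq p\leq 2$ (coming from the Massart / Khintchine--Kahane regime, where the dimension $N=4^{n_0}$ contributes the $8n_0$ term via $\log_2 N = 2n_0$), and $\sqrt{p^*}/m^{1/p}$ for $2<p<\infty$. Collecting all the factors and pulling $\max_i\|\vec f_I(\vec x_i)\|_{p^*}$ out of the expectation gives exactly $K_p(S,H) = \|\vec\alpha\|_p \max_i \|\vec f_I(\vec x_i)\|_{p^*}$ in the statement.

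The main obstacle is arranging the Pauli-basis reduction so that a single matrix-vector identity isolates the three independent ingredients of the bound --- the observable $H$ (captured by $\|\vec\alpha\|_p$), the circuit class (captured by $\|M^C\|_{p,q}\leq \mu$), and the data (captured by $\max_i\|\vec f_I(\vec x_i)\|_{p^*}$) --- so that H\"older together with Lemmas \ref{lem:normpRaN1} and \ref{lem:depn1} can be chained cleanly without introducing extraneous dependence. The careful matching of the exponent $p^*$ in the Rademacher bound with the $p,q$ governing the row-wise geometry of $M^C$ is precisely what forces the exponent $\max\{1/p^*,1/q\}$ on the dimension factor $4^{n_1}$; verifying that the constants line up correctly across the two regimes is essentially bookkeeping once the reduction is in place.
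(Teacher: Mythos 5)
Your proposal is correct and follows essentially the same route as the paper's own proof: expand $f_C$ in the Pauli basis to get $\vec{\alpha}^T M^C \vec{f}_I(\vec{x}_i)$, apply H\"older's inequality to peel off $\norm{\vec{\alpha}}_p$, use Lemma \ref{lem:normpRaN1} to extract the factor $4^{n_1\max\{1/p^*,1/q\}}\norm{M^C}_{p,q}\leq 4^{n_1\max\{1/p^*,1/q\}}\mu$, and finish with Lemma \ref{lem:depn1} on the Rademacher average of $\norm{\sum_i\epsilon_i\vec{f}_I(\vec{x}_i)}_{p^*}$. The only cosmetic difference is that the paper first rewrites the supremum with the ratio $\mu/\norm{M^\Phi}_{p,q}$ and cancels it later, whereas you bound $\norm{M^C}_{p,q}$ by $\mu$ directly after invoking Lemma \ref{lem:normpRaN1}; these are equivalent.
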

\begin{proof}
First, we compute
\begin{eqnarray*}
R_S(\mathcal{F}\circ \mathcal{C}_{\norm{\cdot}_{p,q}\leq \mu})
&=&
\mathbb{E}_{\vec{\epsilon}}
\frac{1}{m}\sup_{\Phi\in \mathcal{C}_{\norm{\cdot}_{p,q}\leq \mu}}
\left| \sum^m_{i=1}\epsilon_i\vec{\alpha}\vec{f}_{\Phi}(\vec{x}_i)\right|\\
&\leq&
\mathbb{E}_{\vec{\epsilon}}
\frac{1}{m}\sup_{\Phi}
\frac{\mu}{\norm{M^\Phi}_{p,q}}
\left| \sum^m_{i=1}\epsilon_i\vec{\alpha}\vec{f}_{\Phi}(\vec{x}_i)\right|\\
&=&\mu\mathbb{E}_{\vec{\epsilon}}
\frac{1}{m}\sup_{\Phi}
\frac{1}{\norm{M^\Phi}_{p,q}}
\left| \sum^m_{i=1}\epsilon_i\vec{\alpha}\vec{f}_{\Phi}(\vec{x}_i)\right|\\
&\leq&
\mu\norm{\vec{\alpha}}_p\mathbb{E}_{\vec{\epsilon}}
\frac{1}{m}\sup_{\Phi}
\frac{1}{\norm{M^\Phi}_{p,q}}
\norm{ \sum^m_{i=1}\epsilon_i\vec{f}_{\Phi}(\vec{x}_i)}_{p^*}\\
&\leq&
\mu\norm{\vec{\alpha}}_p\mathbb{E}_{\vec{\epsilon}}
\frac{1}{m}\sup_{\Phi}
\frac{1}{\norm{M^\Phi}_{p,q}}
\norm{ \sum^m_{i=1}\epsilon_iM^{\Phi}\vec{f}_{I}(\vec{x}_i)}_{p^*}\\
&\leq&\mu\norm{\vec{\alpha}}_p N^{\max\set{\frac{1}{p^*},\frac{1}{q}}}_1\mathbb{E}_{\vec{\epsilon}}
\frac{1}{m}
\norm{\sum^m_{i=1}\epsilon_i\vec{f}_{I}(\vec{x}_i)}_{p^*},
\end{eqnarray*}
where the third inequality follows from Lemma \ref{lem:normpRaN1}.
Using Lemma \ref{lem:depn1}, we get the results of this theorem.

\end{proof}

\section{Single unital quantum channel}\label{apen:single_unital}

\subsection{\texorpdfstring{$(p,q)$}{(p,q)} group norm of the modified representation matrix of unital channels}

If a quantum channel $\Phi$ is unital, 
i.e., $\Phi(\mathbb{I})=\mathbb{I}$, 
then  the representation matrix $M^{\Phi}$ has the following form
\begin{equation}
M^{\Phi}=\Br{
\begin{array}{ccc}
1&\vec{0}^T\\
\vec{0}& \hat{M}^{\Phi}
\end{array}
}.
\end{equation}
We call $\hat{M}^\Phi$ the \textit{modified representation matrix} of $\Phi$. 
(Note that a unitary channel is a special case of a unital channel.)
 For a unital channel $\Phi$, we define  the $(p,q)$ \textit{group norm} of the modified
 representation matrix $\hat{M}^{\Phi}$ as follows:
 \begin{eqnarray}
\norm{\hat{M}^\Phi}_{p,q}
=\left(
\frac{1}{N}\sum_{\vec{x}\neq \vec{0}}
\left(\sum_{\vec{y}\neq \vec{0}}\left|M_{\vec{x},\vec{y}}\right|^p\right)^{\frac{q}{p}}
\right)^{\frac{1}{q}}
=\left(
\frac{1}{N}\sum_{\vec{x}\neq \vec{0}}
\norm{M^\Phi_{\vec{x}}}^q_p
\right)^{\frac{1}{q}},
\end{eqnarray}
where $N=4^n-1$.

We now state and prove the following properties of the $(p,q)$ norm of the representation matrix $\hat{M}^U$, where $U$ is a unitary channel. 

\begin{prop}
For any unitary channel $U$, we have the following relationships between
$\norm{M^{U}}_{p,q}$ and $\norm{\hat{M}^U}_{p,q}$, 

(1) For $0<p<2$, $0<q<\infty$ , we  have
\begin{eqnarray}
\norm{M^{U}}_{p,q}
\leq \norm{\hat{M}^U}_{p,q},
\end{eqnarray}
with equality iff $U$ is a Clifford unitary.

(2)
For $0<p<2$, $q=\infty $,  we have
\begin{eqnarray}
\norm{M^U}_{p,\infty}
=\norm{\hat{M}^U}_{p,\infty},
\end{eqnarray}
for any unitary $U$.

(3) For $p>2$, $q>0$, we have

\begin{eqnarray}
\norm{M^{U}}_{p,q}
\geq \norm{\hat{M}^U}_{p,q},
\end{eqnarray}
 with equality iff $U$ is a Clifford unitary
 
(4) For $p=2$, $q>0$
\begin{eqnarray}
\norm{M^{U}}_{p,q}
=\norm{\hat{M}^U}_{p,q}=1.
\end{eqnarray}
\end{prop}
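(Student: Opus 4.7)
The plan is to reduce each of the four claims to an elementary scalar comparison using the block form of $M^U$ together with the orthogonality of $M^U$ as a real matrix in the Pauli basis. Since $\Phi(\mathbb{I})=\mathbb{I}$ and also $\Phi^{\dagger}(\mathbb{I})=\mathbb{I}$ for a unitary channel, both the first row and the first column of $M^U$ equal $(1,0,\ldots,0)$; hence for $\vec{x}\neq\vec{0}$ one has $\norm{M^U_{\vec{x}}}_p=\norm{\hat{M}^U_{\vec{x}}}_p$, while $\norm{M^U_{\vec{0}}}_p=1$. Writing $S:=\sum_{\vec{x}\neq\vec{0}}\norm{\hat{M}^U_{\vec{x}}}_p^{q}$ for finite $q$, this gives
\begin{equation*}
\norm{M^U}_{p,q}^{q}=\frac{1+S}{4^{n}},\qquad \norm{\hat{M}^U}_{p,q}^{q}=\frac{S}{4^{n}-1},
\end{equation*}
and a direct manipulation shows $\norm{M^U}_{p,q}\leq\norm{\hat{M}^U}_{p,q}$ iff $S\geq 4^{n}-1$, with equality iff $S=4^{n}-1$. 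Every claim therefore reduces to deciding whether the average of $\norm{\hat{M}^U_{\vec{x}}}_p^{q}$ over $\vec{x}\neq\vec{0}$ lies above or below $1$.

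Next I would use that $M^U$ is a real orthogonal matrix, since unitary conjugation preserves the Hilbert--Schmidt inner product and the Paulis are orthonormal; in particular $\norm{\hat{M}^U_{\vec{x}}}_2=1$ for every $\vec{x}\neq\vec{0}$. The three finite-$q$ cases now follow from the monotonicity of $l_p$-norms in $p$. For $0<p<2$ one has $\norm{\hat{M}^U_{\vec{x}}}_p\geq\norm{\hat{M}^U_{\vec{x}}}_2=1$, whence $S\geq 4^{n}-1$ and $\norm{M^U}_{p,q}\leq\norm{\hat{M}^U}_{p,q}$, giving claim (1); for $p>2$ the reverse inequality $\norm{\hat{M}^U_{\vec{x}}}_p\leq 1$ gives claim (3); and for $p=2$ each row norm is exactly $1$, forcing both $(p,q)$ norms to $1$, which proves claim (4). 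Claim (2), $q=\infty$, needs a brief separate treatment because the average is replaced by a maximum, but for $p<2$ every row of $\hat{M}^U$ already has $p$-norm at least $1$, so the maximum over rows of $M^U$ coincides with the maximum over the nontrivial rows, i.e., with $\norm{\hat{M}^U}_{p,\infty}$.

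For the equality cases in (1) and (3) I would invoke the standard fact that if $\norm{v}_p=\norm{v}_2$ with $p\neq 2$ for a unit $l_2$-vector $v$, then $v$ has exactly one nonzero entry, which must be $\pm 1$; this follows from a one-line power-mean argument showing that $\norm{v}_p=\norm{v}_q$ for $p<q$ forces $\norm{v}_p=\norm{v}_{\infty}$, which in turn forces a single nonzero entry. Applied to every $\vec{x}\neq\vec{0}$, this says the nontrivial rows of $M^U$ are signed standard basis vectors, and combined with the trivial first row it says $M^U$ is a signed permutation matrix, which by Lemma~\ref{lem:cliff_M} is equivalent to $U$ being a Clifford unitary. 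The main obstacle is this equality analysis, since pinning down the structure of every nontrivial row requires the short power-mean lemma above; once that is in place, each of the four claims follows by direct substitution into the scalar identity, with the $q=\infty$ subcase handled by the analogous pointwise (rather than averaged) monotonicity of the $l_p$-norm.
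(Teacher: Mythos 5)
Your proof is correct and follows essentially the same route as the paper's: both arguments compare each nontrivial row's $l_p$ norm to its $l_2$ norm (which equals $1$ by orthogonality of $M^U$) and observe that $\norm{M^U}_{p,q}$ differs from $\norm{\hat{M}^U}_{p,q}$ only by averaging in the extra row $M^U_{\vec{0}}$ of norm $1$, your explicit reduction to the scalar comparison $S\gtrless 4^{n}-1$ being just a more concrete rendering of that averaging step, and your power-mean treatment of the equality case matching the paper's "each row has a single nonzero entry $\pm1$" criterion. The one point worth flagging is that, exactly like the paper, you leave the case $q=\infty$ of part (3) untreated, and there the stated equality characterization actually fails (e.g.\ $U=T\otimes\mathbb{I}$ on two qubits has a nontrivial row of $l_p$-norm $1$, so $\norm{M^U}_{p,\infty}=\norm{\hat{M}^U}_{p,\infty}=1$ for $p>2$ although $U$ is not Clifford).
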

\begin{proof}
(4) is obvious, as $M^U$ and $\hat{M}^U$ are orthogonal matrices. 
 
For $0<p<2$, $q>0$ we have
$\norm{M^U_{\vec{x}}}_p\geq \norm{M^U_{\vec{0}}}_p=1$ for any $\vec{x}\neq \vec{0}$. 
Therefore, $\norm{M^{U}}_{p,q}
\leq \norm{\hat{M}^U}_{p,q} $
for $0<q<\infty$ and 
$\norm{M^{U}}_{p,q}
= \norm{\hat{M}^U}_{p,q} $
for $q=\infty$.  Hence, we get (2).
Next, for $0<q<\infty$, 
 $\norm{M^{U}}_{p,q}
= \norm{\hat{M}^U}_{p,q} $
iff $\norm{M^U_{\vec{x}}}_p=1$ for any $\vec{x}\neq \vec{0}$
iff every  row vector  $M^{U}_{\vec{x}}$ has only one nonzero element, which could only  be $\pm 1$,
iff $U$ is a Clifford unitary. Hence, we get (1).

For $p>2$, $0<q\leq \infty$, we have 
$\norm{M^U_{\vec{x}}}_p\leq \norm{M^U_{\vec{0}}}_p=1$ for any $\vec{x}$. 
Therefore, $\norm{M^{U}}_{p,q}
\geq \norm{\hat{M}^U}_{p,q} $. Besides, 
$\norm{M^{U}}_{p,q}
= \norm{\hat{M}^U}_{p,q} $
 iff $\norm{M^U_{\vec{x}}}_p=1$ for any $\vec{x}$
iff every  row vector  $M^{U}_{\vec{x}}$ has only one nonzero element, which could only be $\pm 1$,
iff $U$ is a Clifford unitary. Therefore, we get (3).

\end{proof}

A direct consequence of the above proposition is the following corollary.
\begin{cor}\label{lem:hat_unitary}
Given a unitary channel $U$,  the $(p,q)$ group norm of the modified representation matrix $\hat{M}^U$ can be regarded as a 
resource measure which satisfies the following properties

(1) (Faithfulness) For $0<p<2$, $q>0$ we have $\norm{\hat{M}^{U}}_{p,q}\geq 1$, $\norm{\hat{M}^{U}}_{p,q}= 1$ iff 
$U$ is a Clifford unitary. 

(1') (Faithfulness) For $p>2$, $q>0$, we have $\norm{\hat{M}^{U}}_{p,q}\leq 1$, $\norm{\hat{M}^{U}}_{p,q}= 1$ iff 
$U$ is a Cllifford unitary.

(2) (Invariance under Clifford unitary)  $\norm{\hat{M}^{U_1\circ U\circ U_2}}_{p,q}=\norm{\hat{M}^{ U}}_{p,q}$ for any 
Clifford unitary $U_1$ and $U_2$.

(3) (Convexity) For $p\geq 1$, we have $\norm{\hat{M}^{\lambda U_1 +(1-\lambda) U_2}}_{p,q}\leq \lambda\norm{\hat{M}^{ U_1}}_{p,q}+(1-\lambda)\norm{\hat{M}^{ U_2}}_{p,q}$.

\end{cor}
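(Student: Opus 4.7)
The plan is to derive all four claims as direct consequences of the proposition immediately preceding the corollary (which relates $\norm{M^U}_{p,q}$ and $\norm{\hat{M}^U}_{p,q}$), combined with the Clifford-invariance results already established in Lemma \ref{lem:clif_pq} and Proposition \ref{prop:clif_pq1}. Since the corollary is advertised as ``a direct consequence,'' the proof should mostly be a matter of chaining inequalities and reusing earlier structural lemmas, with the only real subtlety appearing in the invariance claim.

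For items (1) and (1'), I would simply chain the two sets of inequalities. For $0<p<2$ and $q>0$, the preceding proposition gives $\norm{M^U}_{p,q}\leq \norm{\hat{M}^U}_{p,q}$ with equality iff $U$ is Clifford, and Lemma \ref{lem:clif_pq}(1) gives $\norm{M^U}_{p,q}\geq 1$ with equality iff $U$ is Clifford; composing yields $\norm{\hat{M}^U}_{p,q}\geq 1$, and simultaneous saturation of both inequalities forces $U$ to be Clifford. For $p>2$, the two inequalities flip direction to $\norm{\hat{M}^U}_{p,q}\leq \norm{M^U}_{p,q}\leq 1$, and the equality analysis is identical.

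For item (2), the key observation is that every unitary fixes the identity Pauli ($UIU^\dagger=I$), so the signed permutation matrix $M^{U_i}$ associated with a Clifford $U_i$ (by Lemma \ref{lem:cliff_M}) must satisfy $M^{U_i}_{\vec{0},\vec{0}}=1$ with zeros elsewhere in the $\vec{0}$-th row and column. Hence $M^{U_i}$ is block diagonal, $M^{U_i}=\diag(1,\hat{M}^{U_i})$, where $\hat{M}^{U_i}$ is itself a signed permutation matrix on the reduced index set $\{0,1,2,3\}^n\setminus\{\vec{0}\}$. Applying Lemma \ref{lem:prop_pq_u} to $U_1\circ U\circ U_2$ and restricting to the non-identity block then gives $\hat{M}^{U_1\circ U\circ U_2}=\hat{M}^{U_1}\hat{M}^U\hat{M}^{U_2}$, and the same row/column permutation argument used earlier (in the Clifford-invariance lemma for $\norm{M^U}_{p,q}$) applies verbatim to conclude $\norm{\hat{M}^{U_1\circ U\circ U_2}}_{p,q}=\norm{\hat{M}^U}_{p,q}$. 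This block-diagonal reduction is the main obstacle in the proof; everything else is formal.

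For item (3), I would use the linearity $\hat{M}^{\lambda U_1+(1-\lambda)U_2}=\lambda\hat{M}^{U_1}+(1-\lambda)\hat{M}^{U_2}$, inherited directly from the linearity of $\Phi\mapsto M^\Phi$ in Lemma \ref{lem:prop_pq_u} together with the fact that the $\vec{0}$-block is identical for all unital channels and hence drops out. Applying the triangle inequality for the $l_p$ norm row-wise (valid for $p\geq 1$) and then the triangle inequality for the $l_q$ norm (tacitly assuming $q\geq 1$, as in the parallel statement of Proposition \ref{prop:clif_pq1}(4)) yields $\norm{\hat{M}^{\lambda U_1+(1-\lambda)U_2}}_{p,q}\leq \lambda\norm{\hat{M}^{U_1}}_{p,q}+(1-\lambda)\norm{\hat{M}^{U_2}}_{p,q}$, completing the proof.
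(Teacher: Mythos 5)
Your proposal is correct and follows essentially the route the paper intends: the paper offers no explicit proof beyond ``a direct consequence of the above proposition,'' and your chaining of that proposition with Lemma~\ref{lem:clif_pq} for (1) and (1'), the block-diagonal reduction $\hat{M}^{U_1\circ U\circ U_2}=\hat{M}^{U_1}\hat{M}^U\hat{M}^{U_2}$ mirroring Proposition~\ref{prop:clif_pq1} for (2), and linearity plus the triangle inequality for (3) is exactly the argument being elided. You also rightly flag that (2) genuinely needs the block structure rather than the cited proposition alone, and that (3) tacitly requires $q\geq 1$.
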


\begin{prop}\label{prop:hat_unitary}
Let $U_1$ and $U_2$ be unitary channels.

(1) For $0<p<2$, $0<q<\infty$, we have 
\begin{eqnarray}
\norm{\hat{M}^{U_1}\ot \hat{M}^{U_2}}_{p,q}
\geq \norm{\hat{M}^{U_1\ot U_2}}_{p,q},
\end{eqnarray}
with equality iff $U_1$ and $U_2$ are Clifford unitaries. 

(2) For $0<p<2$, $q=\infty$, we have 
\begin{eqnarray}
\norm{\hat{M}^{U_1}\ot \hat{M}^{U_2}}_{p,\infty}
= \norm{\hat{M}^{U_1\ot U_2}}_{p,\infty},
\end{eqnarray}
for any unitaries $U_1$ and $U_2$.

(3) For $p>2$, $0<q\leq \infty$, we have 
\begin{eqnarray}
\norm{\hat{M}^{U_1}\ot \hat{M}^{U_2}}_{p,q}
\leq  \norm{\hat{M}^{U_1\ot U_2}}_{p,q}.
\end{eqnarray}
For $p>2$, $0<q<\infty$,  ``='' holds iff $U_1$ and $U_2$ are Clifford unitary.
\end{prop}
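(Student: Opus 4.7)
The plan is to exploit the block structure of $M^{U}$ that arises from unitarity: for any unitary channel $U$, we have $M^{U}_{\vec z,\vec 0}=\delta_{\vec z,\vec 0}$ (from unitality) and $M^{U}_{\vec 0,\vec x}=\delta_{\vec x,\vec 0}$ (from trace preservation, since $\Tr{U P_{\vec x} U^\dagger}=\Tr{P_{\vec x}}$). Since $M^{U_1\ot U_2}=M^{U_1}\ot M^{U_2}$ by Lemma \ref{lem:prop_pq_u}, the matrix $\hat{M}^{U_1\ot U_2}$ is obtained by deleting only the single row and column indexed by $(\vec 0,\vec 0)$, whereas $\hat{M}^{U_1}\ot\hat{M}^{U_2}$ deletes every row or column whose index has some zero component. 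This asymmetry is the entire source of the inequality.

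I would then split the rows of $\hat{M}^{U_1\ot U_2}$, indexed by $(\vec{x}_1,\vec{x}_2)\neq(\vec 0,\vec 0)$, into three cases: (a) $\vec{x}_1=\vec 0$, $\vec{x}_2\neq\vec 0$; (b) $\vec{x}_1\neq\vec 0$, $\vec{x}_2=\vec 0$; (c) both $\vec{x}_1,\vec{x}_2\neq\vec 0$. A short computation using the $\delta$-structure of the zeroth rows and columns shows that the $p$-norms of these rows are $\norm{\hat{M}^{U_2}_{\vec{x}_2}}_p$, $\norm{\hat{M}^{U_1}_{\vec{x}_1}}_p$, and $\norm{\hat{M}^{U_1}_{\vec{x}_1}}_p\norm{\hat{M}^{U_2}_{\vec{x}_2}}_p$, respectively. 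Writing $N_i=4^{n_i}-1$, $a=\norm{\hat{M}^{U_1}}_{p,q}^q$, and $b=\norm{\hat{M}^{U_2}}_{p,q}^q$, summation over the $N_1 N_2+N_1+N_2$ remaining rows yields the closed form
\[
\norm{\hat{M}^{U_1\ot U_2}}_{p,q}^q=\frac{N_1 a+N_2 b+N_1 N_2\, ab}{N_1 N_2+N_1+N_2},
\]
whereas $\norm{\hat{M}^{U_1}\ot\hat{M}^{U_2}}_{p,q}^q=ab$ by Lemma \ref{lem:multi_pq}.

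Subtracting $ab$ from the previous identity produces the signed quantity $[N_1 a(1-b)+N_2 b(1-a)]/(N_1 N_2+N_1+N_2)$, whose sign is entirely determined by whether $a,b\geq 1$ or $a,b\leq 1$, and which vanishes iff $a=b=1$. Corollary \ref{lem:hat_unitary} supplies exactly what is needed: for $0<p<2$ one has $a,b\geq 1$, giving $\norm{\hat{M}^{U_1\ot U_2}}_{p,q}\leq \norm{\hat{M}^{U_1}\ot\hat{M}^{U_2}}_{p,q}$ with equality iff both $U_i$ are Clifford, which is part (1); for $p>2$ one has $a,b\leq 1$ and the inequality flips, yielding part (3). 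The $q=\infty$ case is handled by the analogous per-row analysis, which gives $\norm{\hat{M}^{U_1\ot U_2}}_{p,\infty}=\max\{A,B,AB\}$ with $A=\norm{\hat{M}^{U_1}}_{p,\infty}$ and $B=\norm{\hat{M}^{U_2}}_{p,\infty}$: for $0<p<2$ we have $A,B\geq 1$ so the maximum equals $AB$, yielding the equality in part (2), while for $p>2$ we have $A,B\leq 1$ so the maximum equals $\max\{A,B\}\geq AB$, matching part (3) at $q=\infty$.

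The main obstacle is bookkeeping: cleanly extracting the three row-types from $\hat{M}^{U_1\ot U_2}$ and tracking the counting factor $N_1 N_2+N_1+N_2$ versus the $N_1 N_2$ that appears for $\hat{M}^{U_1}\ot\hat{M}^{U_2}$. Once the closed-form identity above is in hand, the sign analysis reduces mechanically to the faithfulness already established in Corollary \ref{lem:hat_unitary}, and the equality conditions follow transparently.
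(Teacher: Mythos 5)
Your proposal is correct and follows essentially the same route as the paper's proof: both use multiplicativity of the $(p,q)$ group norm under tensor products, split the row indices of $\hat{M}^{U_1\ot U_2}$ into the three cases according to which of $\vec{x}_1,\vec{x}_2$ vanish, arrive at the same closed form $\bigl(N_1a+N_2b+N_1N_2\,ab\bigr)/(N_1N_2+N_1+N_2)$ versus $ab$, and settle the sign and equality conditions via the faithfulness of $\norm{\hat{M}^{U}}_{p,q}$, treating $q=\infty$ by the analogous per-row maximum. The only difference is cosmetic: you subtract $ab$ and factor the difference as $N_1a(1-b)+N_2b(1-a)$, whereas the paper reduces to comparing $ab$ with $(N_1a+N_2b)/(N_1+N_2)$ and invokes two elementary auxiliary inequalities.
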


(4) For $p=2$, $0<q\leq \infty$, we have 
\begin{eqnarray}
\norm{\hat{M}^{U_1}\ot \hat{M}^{U_2}}_{2,q}
= \norm{\hat{M}^{U_1\ot U_2}}_{2,q}=1.
\end{eqnarray}

\begin{proof}

(3) is obvious as both $\hat{M}^{U_1}\ot \hat{M}^{U_2}$ and 
$ \hat{M}^{U_1\ot U_2}$ are orthogonal matrices.

Using the property
\begin{eqnarray*}
\norm{\hat{M}^{U_1}\ot \hat{M}^{U_2}}_{p,q}
=\norm{\hat{M}^{U_1}}_{p,q}\norm{\hat{M}^{U_2}}_{p,q},
\end{eqnarray*}
we find that for $0<q<\infty$,
\begin{eqnarray*}
\norm{\hat{M}^{U_1}\ot \hat{M}^{U_2}}^q_{p,q}
&=&\left(\frac{1}{N_1}\sum_{\vec{x}_1\neq \vec{0}} \norm{M^{U_1}_{\vec{x}_1}}^q_p\right)
\left(\frac{1}{N_2}\sum_{\vec{x}_2\neq \vec{0}} \norm{M^{U_2}_{\vec{x}_2}}^q_p\right)\\
\norm{\hat{M}^{U_1\ot U_2}}^q_{p,q}
&=&\left(\frac{1}{N_1N_2+N_1+N_2}\right)
\sum_{(\vec{x}_1,\vec{x}_2)\neq (\vec{0},\vec{0})}
\norm{M^{U_1}_{\vec{x}_1}}^q_p\norm{M^{U_2}_{\vec{x}_2}}^q_p,\\
&=&\left(\frac{1}{N_1N_2+N_1+N_2}\right)
\left(\sum_{\vec{x}_1\neq \vec{0},\vec{x}_2\neq\vec{0}}
\norm{M^{U_1}_{\vec{x}_1}}^q_p\norm{M^{U_2}_{\vec{x}_2}}^q_p
+\sum_{\vec{x}_1\neq \vec{0},\vec{x}_2=\vec{0}}
\norm{M^{U_1}_{\vec{x}_1}}^q_p\norm{M^{U_2}_{\vec{x}_2}}^q_p
+\sum_{\vec{x}_1=\vec{0},\vec{x}_2\neq \vec{0}}
\norm{M^{U_1}_{\vec{x}_1}}^q_p\norm{M^{U_2}_{\vec{x}_2}}^q_p
\right)\\
&=&\left(\frac{1}{N_1N_2+N_1+N_2}\right)
\left(\sum_{\vec{x}_1\neq \vec{0},\vec{x}_2\neq\vec{0}}
\norm{M^{U_1}_{\vec{x}_1}}^q_p\norm{M^{U_2}_{\vec{x}_2}}^q_p
+\sum_{\vec{x}_1\neq \vec{0}}
\norm{M^{U_1}_{\vec{x}_1}}^q_p
+\sum_{\vec{x}_2\neq \vec{0}}
\norm{M^{U_2}_{\vec{x}_2}}^q_p
\right),
\end{eqnarray*}
where $N_1=4^{n_1}-1, N_2=4^{n_2}-1$. 
Hence to compare $\norm{\hat{M}^{U_1}\ot \hat{M}^{U_2}}_{p,q}$ and $ \norm{\hat{M}^{U_1\ot U_2}}_{p,q}$, 
we need only to compare 

\begin{eqnarray*}
\left(\frac{1}{N_1}\sum_{\vec{x}_1\neq \vec{0}} \norm{M^{U_1}_{\vec{x}_1}}^q_p\right)
\left(\frac{1}{N_2}\sum_{\vec{x}_2\neq \vec{0}} \norm{M^{U_2}_{\vec{x}_2}}^q_p\right),
\end{eqnarray*} 
and 
\begin{eqnarray*}
\frac{1}{N_1+N_2}\left(\sum_{\vec{x}_1\neq \vec{0}}
\norm{M^{U_1}_{\vec{x}_1}}^q_p
+\sum_{\vec{x}_2\neq \vec{0}}
\norm{M^{U_2}_{\vec{x}_2}}^q_p
\right).
\end{eqnarray*}
To this end, let us consider a simple inequality first. 
It is easy to verify the following two inequalities:

(1) For $a,b\geq 1$, we have
\begin{eqnarray}
ab\geq \frac{N_1a+N_2b}{N_1+N_2}.
\end{eqnarray}
Morevover, equality holds iff $a=b=1$.

(2) For $0<a,b\leq1$, we have
\begin{eqnarray}
ab\leq\frac{N_1a+N_2b}{N_1+N_2}.
\end{eqnarray}
Moreover, equality holds iff $a=b=1$.

Thus, for $0<p<2$, $0<q<\infty$, we have $\norm{M^U_{\vec{x}}}^q_p\geq 1$ for any $\vec{x}\neq \vec{0}$, and 
 $\norm{M^U_{\vec{x}}}^q_p=1$ for all $\vec{x}\neq 0$ iff $U$ is Clifford. 
Let 
\begin{eqnarray*}
a=\frac{1}{N_1}\sum_{\vec{x}_1\neq \vec{0}} \norm{M^{U_1}_{\vec{x}_1}}^q_p,\\
b=\frac{1}{N_2}\sum_{\vec{x}_2\neq \vec{0}} \norm{M^{U_2}_{\vec{x}_2}}^q_p.
\end{eqnarray*}
 Then by the first inequality (1), we have 
\begin{eqnarray*}
\left(\frac{1}{N_1}\sum_{\vec{x}_1\neq \vec{0}} \norm{M^{U_1}_{\vec{x}_1}}^q_p\right)
\left(\frac{1}{N_2}\sum_{\vec{x}_2\neq \vec{0}} \norm{M^{U_2}_{\vec{x}_2}}^q_p\right)
\geq
\frac{1}{N_1+N_2}\left(\sum_{\vec{x}_1\neq \vec{0}}
\norm{M^{U_1}_{\vec{x}_1}}^q_p
+\sum_{\vec{x}_2\neq \vec{0}}
\norm{M^{U_2}_{\vec{x}_2}}^q_p
\right).
\end{eqnarray*}
Therefore, 
for $0<p<2$, $0<q<\infty$, we have 
\begin{eqnarray*}
\norm{\hat{M}^{U_1}\ot \hat{M}^{U_2}}_{p,q}
\geq \norm{\hat{M}^{U_1\ot U_2}}_{p,q},
\end{eqnarray*}
where equality holds iff $U_1$ and $U_2$ are Clifford unitary.

Similarly,  
for $p>2$, $0<q<\infty$, we have $\norm{M^U_{\vec{x}}}^q_p\leq 1$ for any $\vec{x}\neq \vec{0}$, and 
 $\norm{M^U_{\vec{x}}}^q_p=1$ for all $\vec{x}\neq 0$ iff $U$ is Clifford.

Let 
\begin{eqnarray*}
a=\frac{1}{N_1}\sum_{\vec{x}_1\neq \vec{0}} \norm{M^{U_1}_{\vec{x}_1}}^q_p,\\
b=\frac{1}{N_2}\sum_{\vec{x}_2\neq \vec{0}} \norm{M^{U_2}_{\vec{x}_2}}^q_p.
\end{eqnarray*}
 Then by the second inequality (2), we have 
\begin{eqnarray*}
\left(\frac{1}{N_1}\sum_{\vec{x}_1\neq \vec{0}} \norm{M^{U_1}_{\vec{x}_1}}^q_p\right)
\left(\frac{1}{N_2}\sum_{\vec{x}_2\neq \vec{0}} \norm{M^{U_2}_{\vec{x}_2}}^q_p\right)
\leq
\frac{1}{N_1+N_2}\left(\sum_{\vec{x}_1\neq \vec{0}}
\norm{M^{U_1}_{\vec{x}_1}}^q_p
+\sum_{\vec{x}_2\neq \vec{0}}
\norm{M^{U_2}_{\vec{x}_2}}^q_p
\right).
\end{eqnarray*}
Therefore,
for $p>2$, $0<q<\infty$, we have 
\begin{eqnarray*}
\norm{\hat{M}^{U_1}\ot \hat{M}^{U_2}}_{p,q}
\leq  \norm{\hat{M}^{U_1\ot U_2}}_{p,q}.
\end{eqnarray*}
Moreover, equality holds iff $U_1$ and $U_2$ are Clifford unitary.

 Now, let us consider the case where $q=\infty$. 
 For $q=\infty$, we have 
 \begin{eqnarray*}
\norm{\hat{M}^{U_1}\ot \hat{M}^{U_2}}_{p,\infty}
=\max_{\vec{x}_1\neq \vec{0}} \norm{M^{U_1}_{\vec{x}_1}}_p
\max_{\vec{x}_2\neq \vec{0}} \norm{M^{U_2}_{\vec{x}_2}}_p,
 \end{eqnarray*}
 and 
 \begin{eqnarray*}
 \norm{\hat{M}^{U_1\ot U_2}}_{p,\infty}
=\max_{(\vec{x}_1,\vec{x}_2)\neq (\vec{0},\vec{0})}
\norm{M^{U_1}_{\vec{x}_1}}_p\norm{M^{U_2}_{\vec{x}_2}}_p
=\max\left\{ \max_{\vec{x}_1\neq \vec{0}} \norm{M^{U_1}_{\vec{x}_1}}_p
\max_{\vec{x}_2\neq \vec{0}} \norm{M^{U_2}_{\vec{x}_2}}_p, \max_{\vec{x}_1\neq \vec{0}} \norm{M^{U_1}_{\vec{x}_1}}_p,
\max_{\vec{x}_2\neq \vec{0}}
\norm{M^{U_2}_{\vec{x}_2}}_p\right\}.
 \end{eqnarray*}
 Hence, 
 for $0<p<2$, we have 
$ \norm{M^{U}_{\vec{x}}}_p\geq 1$ for any $\vec{x}\neq \vec{0}$; therefore,
\begin{eqnarray*}
\max_{\vec{x}_1\neq \vec{0}} \norm{M^{U_1}_{\vec{x}_1}}_p\max_{\vec{x}_2\neq \vec{0}} \norm{M^{U_2}_{\vec{x}_2}}_p
\geq \max\left\{\max_{\vec{x}_1\neq \vec{0}} \norm{M^{U_1}_{\vec{x}_1}}_p,
\max_{\vec{x}_2\neq \vec{0}}
\norm{M^{U_2}_{\vec{x}_2}}_p\right\}.
\end{eqnarray*}
That is, 
 \begin{eqnarray*}
 \norm{\hat{M}^{U_1}\ot \hat{M}^{U_2}}_{p,\infty}
 =\norm{\hat{M}^{U_1\ot U_2}}_{p,\infty}.
 \end{eqnarray*}
 
For $p>2$, we have 
$ \norm{M^{U}_{\vec{x}}}_p\leq 1$ for any $\vec{x}\neq \vec{0}$; therefore,
\begin{eqnarray*}
\max_{\vec{x}_1\neq \vec{0}} \norm{M^{U_1}_{\vec{x}_1}}_p\max_{\vec{x}_2\neq \vec{0}} \norm{M^{U_2}_{\vec{x}_2}}_p
\leq \max\left\{\max_{\vec{x}_1\neq \vec{0}} \norm{M^{U_1}_{\vec{x}_1}}_p,
\max_{\vec{x}_2\neq \vec{0}}
\norm{M^{U_2}_{\vec{x}_2}}_p\right\}.
\end{eqnarray*}

\end{proof}

\subsection{Rademacher complexity of single unital quantum circuit}
In this subsection, we will assume for simplicity that the observable $H$ is traceless, which implies that
$\alpha_{\vec{0}}=0$.
\begin{thm}[Restatement of Theorem \ref{thm:two}]
Given the set of unital  quantum circuits $\Phi$ from $n_0$ qubits to $n_1$ qubits with bounded $(p,q)$ norm of the 
modified representation matrix,
 the Rademacher complexity 
 on $m$ samples $S=\set{\vec{x_1},...,\vec{x}_m}$ satisfies the following bounds 
 
 (1) For $1\leq p\leq 2$, we have
\begin{eqnarray}
R_S(\mathcal{F}\circ \mathcal{C}^{n_0,n_1}_{\norm{\hat{\cdot}}_{p,q}\leq \mu})
\leq \mu N^{\max\set{\frac{1}{p^*},\frac{1}{q}}}_1
\frac{\sqrt{\min\set{p^*, 8n_0}}}{\sqrt{m}}\norm{\hat{\vec{\alpha}}}_p\max_i\norm{\hat{\vec{f}}_I(\vec{x}_i)}_{p^*},
\end{eqnarray}
where $N_1=4^{n_1}-1$.

(2) For $2<p<\infty$, we have
\begin{eqnarray}
R_S(\mathcal{F}\circ\mathcal{C}^{n_0,n_1}_{\norm{\hat{\cdot}}\leq \mu})
\leq \mu N^{\max\set{\frac{1}{p^*},\frac{1}{q}}}_1
\frac{\sqrt{p^*}}{m^{1/p}}\norm{\hat{\vec{\alpha}}}_p\max_i\norm{\hat{\vec{f}}_I(\vec{x}_i)}_{p^*}.
\end{eqnarray}

\end{thm}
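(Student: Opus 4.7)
The plan is to follow the proof of Theorem \ref{thm:one} closely, while exploiting two simplifications specific to the present setting. First, because $H$ is traceless, its representation vector has $\alpha_{\vec{0}} = \frac{1}{2^{n_1}}\Tr{H} = 0$, so
\begin{equation*}
\vec{\alpha}^{\top}\vec{f}_\Phi(\vec{x}_i) = \hat{\vec{\alpha}}^{\top}\hat{\vec{f}}_\Phi(\vec{x}_i).
\end{equation*}
Second, if $\Phi$ is unital then $M^\Phi$ has the block form $\mathrm{diag}(1, \hat{M}^\Phi)$; combined with the fact that the $\vec{0}$-th component of the representation vector of any density operator on $n$ qubits is the constant $1/2^n$, it follows that the non-trivial components of $\Phi(\proj{\psi(\vec{x}_i)})$ transform as
\begin{equation*}
\hat{\vec{f}}_\Phi(\vec{x}_i) = \hat{M}^\Phi\,\hat{\vec{f}}_I(\vec{x}_i).
\end{equation*}

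Using these two reductions, I would rewrite the Rademacher complexity in terms of $\hat{\vec{\alpha}}$, $\hat{M}^\Phi$, and $\hat{\vec{f}}_I(\vec{x}_i)$, divide and multiply the summand by $\norm{\hat{M}^\Phi}_{p,q}$ (which is bounded by $\mu$ over $\mathcal{C}^{n_0,n_1}_{\norm{\hat{\cdot}}_{p,q}\leq \mu}$), pull $\hat{M}^\Phi$ out of the sum by linearity, and apply H\"older's inequality to split off $\norm{\hat{\vec{\alpha}}}_p$. I would then invoke Lemma \ref{lem:normpRaN1} with the $N_1 \times N_1$ matrix $\hat{M}^\Phi$, where $N_1 = 4^{n_1} - 1$, to produce the factor $N_1^{\max\set{1/p^*, 1/q}}\norm{\hat{M}^\Phi}_{p,q}$. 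One factor of $\norm{\hat{M}^\Phi}_{p,q}$ will then cancel against the earlier division and be replaced by $\mu$, yielding the prefactor $\mu N_1^{\max\set{1/p^*, 1/q}}\norm{\hat{\vec{\alpha}}}_p$.

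The remaining task is to bound the purely input-side Rademacher average $\mathbb{E}_{\vec{\epsilon}}\frac{1}{m}\norm{\sum_i \epsilon_i \hat{\vec{f}}_I(\vec{x}_i)}_{p^*}$ via Lemma \ref{lem:depn1}. Since $\hat{\vec{f}}_I(\vec{x}_i)$ lives in $\real^{N_0}$ with $N_0 = 4^{n_0}-1$, we have $\log_2 N_0 < 2n_0$, so the same $\sqrt{\min\set{p^*, 8n_0}}/\sqrt{m}$ threshold (for $1\leq p\leq 2$) and $\sqrt{p^*}/m^{1/p}$ bound (for $2 < p < \infty$) as in Theorem \ref{thm:one} continue to apply, producing the final factor $\max_i \norm{\hat{\vec{f}}_I(\vec{x}_i)}_{p^*}$. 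Combined with $\norm{\hat{\vec{\alpha}}}_p$, this yields $\hat{K}_p(S,H)$, completing both inequalities.

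I do not foresee any serious obstacle: the argument is a near-verbatim adaptation of the proof of Theorem \ref{thm:one}. The one subtle point worth emphasizing is the interplay between the two reductions in the first step --- the tracelessness of $H$ is needed to discard the $\vec{0}$-th coordinate of $\vec{\alpha}$, and the unitality of $\Phi$ is needed for $M^\Phi$ to decouple into a trivial scalar block plus the non-trivial $\hat{M}^\Phi$; dropping either assumption would leave a residual constant-in-$\Phi$ term that would spoil the clean reduction to purely ``hatted'' quantities and would prevent the resulting factor from scaling with the smaller dimension $N_1 = 4^{n_1}-1$ rather than $4^{n_1}$.
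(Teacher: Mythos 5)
Your proposal is correct and follows essentially the same route as the paper's proof: the tracelessness of $H$ to reduce to $\hat{\vec{\alpha}}$, the unitality of $\Phi$ to get $\hat{\vec{f}}_\Phi(\vec{x}_i)=\hat{M}^\Phi\hat{\vec{f}}_I(\vec{x}_i)$, then the divide-and-multiply by $\norm{\hat{M}^\Phi}_{p,q}$, H\"older, Lemma \ref{lem:normpRaN1} (whose bound depends only on the number of rows $N_1=4^{n_1}-1$, so your describing $\hat{M}^\Phi$ as $N_1\times N_1$ rather than $(4^{n_1}-1)\times(4^{n_0}-1)$ is a harmless slip), and Lemma \ref{lem:depn1}. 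No gaps.
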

\begin{proof}
Since  $\alpha_{\vec{x}}=0$, it follows that $\vec{\alpha}=(0,\hat{\vec{\alpha}})$,  where $\hat{\vec{\alpha}}\in \real^{N_1}$. Hence, 
\begin{eqnarray*}
f_{\Phi}(\vec{x})
&=&\Tr{\Phi(\proj{\psi(\vec{x})})H}\\
&=&\sum_{\vec{z}\neq \vec{0}}\alpha_{\vec{z}} \Tr{\Phi(\proj{\psi(\vec{x}_i)})P_{\vec{z}}}\\
&=&\hat{\vec{\alpha}}\hat{\vec{f}}_{\Phi}(\vec{x}),
\end{eqnarray*}
where  $\hat{\vec{f}}_{\Phi}(\vec{x})=(\vec{f}^{\vec{z}}_{\Phi}(\vec{x}))_{\vec{z}\neq \vec{0}}\in\real^{N_1}$
and $N_1=4^{n_1}-1$.  Similarly, for unital quantum channels $\Phi$, we have 
\begin{eqnarray}
\hat{\vec{f}}_{\Phi}(\vec{x})
=\hat{M}^{\Phi}
\hat{\vec{f}}_{I}(\vec{x}).
\end{eqnarray}
Therefore, we have

\begin{eqnarray*}
R_S(\mathcal{F}\circ \mathcal{C}^{n_0,n_1}_{\norm{\hat{\cdot}}\leq \mu})
&=&
\mathbb{E}_{\vec{\epsilon}}
\frac{1}{m}\sup_{\Phi\in \mathcal{C}^{n_0,n_1}_{\norm{\hat{\cdot}}\leq \mu}}
\left| \sum^m_{i=1}\epsilon_i\hat{\vec{\alpha}}\hat{\vec{f}}_{\Phi}(\vec{x}_i)\right|\\
&\leq&
\mathbb{E}_{\vec{\epsilon}}
\frac{1}{m}\sup_{\Phi}
\frac{\mu}{\norm{\hat{M}^\Phi}_{p,q}}
\left| \sum^m_{i=1}\epsilon_i\hat{\vec{\alpha}}\hat{\vec{f}}_{\Phi}(\vec{x}_i)\right|\\
&=&\mu\mathbb{E}_{\vec{\epsilon}}
\frac{1}{m}\sup_{\Phi}
\frac{1}{\norm{\hat{M}^\Phi}_{p,q}}
\left| \sum^m_{i=1}\epsilon_i\hat{\vec{\alpha}}\hat{\vec{f}}_{\Phi}(\vec{x}_i)\right|\\
&\leq&
\mu\norm{\hat{\vec{\alpha}}}_p\mathbb{E}_{\vec{\epsilon}}
\frac{1}{m}\sup_{\Phi}
\frac{1}{\norm{\hat{M}^\Phi}_{p,q}}
\norm{ \sum^m_{i=1}\epsilon_i\hat{\vec{f}}_{\Phi}(\vec{x}_i)}_{p^*}\\
&\leq&
\mu \norm{\hat{\vec{\alpha}}}_p\mathbb{E}_{\vec{\epsilon}}
\frac{1}{m}\sup_{\Phi}
\frac{1}{\norm{\hat{M}^\Phi}_{p,q}}
\norm{ \sum^m_{i=1}\epsilon_i\hat{M}^{\Phi}\hat{\vec{f}}_{I}(\vec{x}_i)}_{p^*}\\
&\leq&\mu \norm{\hat{\vec{\alpha}}}_p N^{\max\set{\frac{1}{p^*},\frac{1}{q}}}_1\mathbb{E}_{\vec{\epsilon}}
\frac{1}{m}
\norm{\sum^m_{i=1}\epsilon_i\hat{\vec{f}}_{I}(\vec{x}_i)}_{p^*},
\end{eqnarray*}
where the third inequality come from the Lemma \ref{lem:normpRaN1}.
Using Lemma \ref{lem:depn1}, we get the results of the theorem.

\end{proof}

\section{Deep quantum circuits}\label{apen:deep}
Consider a  depth-$l$ quantum circuit, where  
each layer of the quantum circuit is a treated as a quantum channel. 
We denote the depth-$l$ quantum circuit as $\vec{C}_l$ as follows
\begin{eqnarray}
\vec{C}_l=(\Phi_l, \Phi_{l-1}, \cdots, \Phi_1)
\end{eqnarray}
where the $i$-th layer $\Phi_i: \mathcal{L}((\complex^2)^{\ot n_{i-1}})\to \mathcal{L}((\complex^2)^{\ot n_i})$ (See Figure \ref{fig1}).

Let us define $\mathcal{C}^{l,\vec{n}}$ with the width vector $\vec{n}=(n_l,...,n_0)$ to be the set of 
all depth-$l$ quantum circuits $\vec{C}_{l}=(\Phi_l, \Phi_{l-1}, \cdots, \Phi_1)$, where the $i$-th layer $\Phi_i: \mathcal L((\complex^2)^{\ot n_{i-1}})\to \mathcal L((\complex^2)^{\ot n_i})$.
In this section, we introduce three resource measures to quantify the amount of magic in
quantum circuits by making use of the $(p,q)$ group norm.

 \begin{figure}[!ht]
  \center{\includegraphics[width=10cm]  {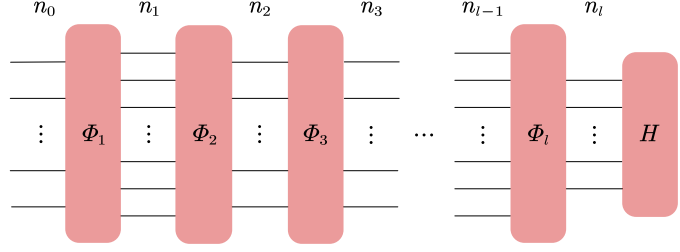}}     
  \caption{A diagram of a depth-$l$ quantum circuit}
  \label{fig1}
 \end{figure}

\subsection{Multiplication \texorpdfstring{$(p,q)$}{(p,q)} depth-norm}
In this subsection, let us define the multiplication $(p,q)$ depth-norm for  depth-$l$ quantum circuits
$\vec{C}_l=(\Phi_l,\Phi_{l-1},\cdots,\Phi_1)$
as follows
\begin{eqnarray}
\mu_{p,q}(\vec{C}_l)
=\prod^{l}_{i=1}\norm{M^{\Phi_i}}_{p,q}.
\end{eqnarray}

\begin{prop}\label{prop:prop_mult_p}
The multiplication $(p,q)$ depth-norm satisfies the following properties:

(1) Given a depth-$l$ quantum circuit $\vec{C}_l$ and a depth-$m$ quantum circuit $\vec{C}_m$, we have 
\begin{eqnarray}
\mu_{p,q}(\vec{C}_l\circ \vec{C}_m)
=\mu_{p,q}(\vec{C}_l)\mu_{p,q}(\vec{C}_m),
\end{eqnarray}
where $\vec{C}_l\circ \vec{C}_m:=(\vec{C}_l, \vec{C}_m)$.

(2) Given two depth-$l$ quantum circuits $C_l$ and $C'_l$, we have
\begin{eqnarray}
\mu_{p,q}(\vec{C}_l\ot \vec{C}'_l)
=\mu_{p,q}(\vec{C}_l)\mu_{p,q}(\vec{C}'_l).
\end{eqnarray}
where $\vec{C}_l\ot \vec{C}'_l:=(\Phi_l\ot \Phi'_l,\ldots,\Phi_1\ot\Phi'_1)$
for $\vec{C}_l=(\Phi_l,\ldots,\Phi_1), \vec{C}'_l=(\Phi'_l,\ldots,\Phi'_1)$.
\end{prop}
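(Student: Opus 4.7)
The plan is to observe that both identities are essentially bookkeeping: the multiplicative depth-norm $\mu_{p,q}$ is defined as a product over layers, so any operation on circuits that acts layer-wise (concatenation) or that acts on each individual layer via an identity that is already multiplicative in $\|\cdot\|_{p,q}$ (tensor product) will automatically yield a multiplicative law. Both parts therefore reduce to unfolding the definition and invoking earlier results.

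For part (1), I would simply unpack the definition $\vec{C}_l \circ \vec{C}_m := (\vec{C}_l, \vec{C}_m)$, which means that the concatenated object is a depth-$(l+m)$ circuit whose layer list is the concatenation of the two layer lists. Labeling the layers $\Psi_1,\ldots,\Psi_{l+m}$ with $\Psi_{m+j} = \Phi_j$ for the layers from $\vec{C}_l$ and $\Psi_j$ equal to the $j$-th layer of $\vec{C}_m$, the product $\prod_{i=1}^{l+m}\|M^{\Psi_i}\|_{p,q}$ splits cleanly into the product over the first $m$ indices times the product over the last $l$ indices, giving $\mu_{p,q}(\vec{C}_m)\mu_{p,q}(\vec{C}_l)$. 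So (1) is essentially a one-line computation and requires no technical input beyond the definition.

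For part (2), the strategy is the same but needs two external facts. By Lemma \ref{lem:prop_pq_u}, representation matrices are multiplicative under tensor products at the single-channel level, so $M^{\Phi_i\otimes \Phi'_i} = M^{\Phi_i}\otimes M^{\Phi'_i}$ for every $i$. Then by Lemma \ref{lem:multi_pq}, the $(p,q)$ group norm is itself multiplicative under tensor products of matrices, giving
\begin{eqnarray}
\|M^{\Phi_i\otimes \Phi'_i}\|_{p,q}
= \|M^{\Phi_i}\|_{p,q}\,\|M^{\Phi'_i}\|_{p,q}.
\end{eqnarray}
Taking the product over $i=1,\ldots,l$ and regrouping yields $\mu_{p,q}(\vec{C}_l\otimes \vec{C}'_l) = \mu_{p,q}(\vec{C}_l)\,\mu_{p,q}(\vec{C}'_l)$.

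Honestly, there is no real obstacle: the only thing to be careful about is matching the two definitions of circuit operations ($\circ$ appends layers, $\otimes$ acts layer-wise) with the fact that $\mu_{p,q}$ is a product over the layer index, and in the tensor case also invoking the two multiplicativity lemmas for $M^{\Phi}$ and $\|\cdot\|_{p,q}$. Neither step requires any estimate or inequality, so the whole proof is a short, direct calculation.
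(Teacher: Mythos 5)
Your proposal is correct and takes essentially the same route as the paper, whose proof of this proposition simply states that both properties follow directly from the definition of $\mu_{p,q}$; you have merely made explicit the splitting of the product over layers in part (1) and the appeal to Lemma \ref{lem:prop_pq_u} and Lemma \ref{lem:multi_pq} in part (2), which is exactly what the paper's one-line proof leaves implicit.
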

\begin{proof}
These two properties follow directly from the 
definition of $\mu_{p,q}$.
\end{proof}
Note that for the depth-$l$ quantum circuit $\vec{C}_l$, where each layer contains only unitary gates, i.e., $\vec{C}_l=(U_l, U_{l-1}, \cdots, U_1)$, $\mu_{p,q}$  can be viewed as a resource measure of magic.
\begin{lem}
Given a depth-$l$ quantum circuit $\vec{C}_l=(U_l, U_{l-1}, \cdots, U_1)$, we have 

(1) (Faithfulness) For $0<p<2$, $q>0$, it holds that 
$\mu_{p,q}(\vec{C}_l)\geq 1$, and 
 $\mu_{p,q}(\vec{C}_l)=1$ iff 
 $\vec{C}_l$ is a Clifford circuit, i.e., each $U_i$ is a Clifford unitary.

(1') (Faithfulness) For $p>2$, $0<q<\infty$, it holds that
$\mu_{p,q}(\vec{C}_l)\leq 1$, and  $\mu_{p,q}(\vec{C}_l)=1$ iff 
 $\vec{C}_l$ is a Clifford circuit.
 
 
 (2) (Invariance under Clifford circuit) For $p>0$, $q>0$, we have $\mu_{p,q}(\vec{C}_1\circ \vec{C}_l\circ \vec{C}_2)=\mu_{p,q}(\vec{C}_l)$ if $\vec{C}_1$, $\vec{C}_2$ are Clifford circuits.
\end{lem}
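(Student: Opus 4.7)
The plan is to reduce every claim to the single-layer analogues already established in Lemma \ref{lem:clif_pq} and Proposition \ref{prop:clif_pq1}, using that the multiplication depth-norm factorizes as $\mu_{p,q}(\vec{C}_l)=\prod_{i=1}^l \|M^{U_i}\|_{p,q}$ layer by layer.

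For part (1), I would invoke Lemma \ref{lem:clif_pq}(1), which gives $\|M^{U_i}\|_{p,q}\geq 1$ for each layer when $0<p<2$. Taking the product over $i=1,\ldots,l$ yields $\mu_{p,q}(\vec{C}_l)\geq 1$. Because every factor is at least $1$, the product equals $1$ iff each $\|M^{U_i}\|_{p,q}=1$; the equality case of Lemma \ref{lem:clif_pq}(1) then identifies this with each $U_i$ being Clifford, i.e., $\vec{C}_l$ being a Clifford circuit. Part (1') is completely parallel: for $p>2$, $0<q<\infty$, invoke Lemma \ref{lem:clif_pq}(2), which now gives an upper bound of $1$ per layer (and equality iff each $U_i$ is Clifford); the product is then a finite product of numbers in $(0,1]$, so both the upper bound and the equality characterization transfer immediately.

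For part (2), write the Clifford circuits as $\vec{C}_1=(V_a,\ldots,V_1)$ and $\vec{C}_2=(W_b,\ldots,W_1)$, where each $V_i$ and $W_j$ is a Clifford unitary. Then $\vec{C}_1\circ \vec{C}_l\circ \vec{C}_2$ is the concatenated depth-$(a+l+b)$ circuit, and by the definition of $\mu_{p,q}$ we have
\begin{equation*}
\mu_{p,q}(\vec{C}_1\circ \vec{C}_l\circ \vec{C}_2)
=\prod_{i=1}^{a} \|M^{V_i}\|_{p,q}\cdot \prod_{k=1}^{l} \|M^{U_k}\|_{p,q}\cdot \prod_{j=1}^{b} \|M^{W_j}\|_{p,q}.
\end{equation*}
Lemma \ref{lem:cliff_M} implies that each $M^{V_i}$ and $M^{W_j}$ is a signed permutation matrix, so every row has exactly one nonzero entry of magnitude $1$; hence $\|M^{V_i}\|_{p,q}=\|M^{W_j}\|_{p,q}=1$ for all $p,q>0$, and the Clifford factors collapse, leaving $\mu_{p,q}(\vec{C}_l)$.

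The only mild obstacle is bookkeeping: I need to be explicit that the composition $\vec{C}_1\circ \vec{C}_l\circ \vec{C}_2$ refers to concatenation of layer sequences in the sense of Proposition \ref{prop:prop_mult_p}(1), and note that the invariance in (2) holds for the full range $p,q>0$ without any extra equality-case analysis, precisely because Clifford layers contribute exactly $1$ to the product for every admissible $(p,q)$.
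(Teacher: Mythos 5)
Your proposal is correct and follows essentially the same route as the paper, which simply cites the single-layer characterization (Lemma \ref{lem:clif_pq}) together with the multiplicativity of $\mu_{p,q}$ under concatenation (Proposition \ref{prop:prop_mult_p}); you have merely written out the product-factorization and equality-case bookkeeping explicitly.
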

\begin{proof}
This lemma follows directly from Lemma \ref{lem:clif_pq} and  Proposition \ref{prop:prop_mult_p}.
\end{proof}

Next, let us denote the set of depth-$l$ quantum circuits $\vec{C}_l$ with bounded depth-norm $\mu_{p,q}$ as
$\mathcal{C}^{l,\vec{n}}_{\mu_{p,q}\leq \mu}$, that is,
\begin{eqnarray}
\mathcal{C}^{l,\vec{n}}_{\mu_{p,q}\leq \mu}:=
\set{\vec{C}_l\in \mathcal{C}^{l,\vec{n}}: \mu_{p,q}(\vec{C}_l)\leq \mu}.
\end{eqnarray}

\begin{lem}\label{lem:dep_norm1}
Given the set of depth-$l$ quantum circuits $\mathcal{C}^{l,\vec{n}}$
and the set of depth-$l$ quantum circuits $\mathcal{C}^{l,\vec{n'}}$, where $\vec{n}=(\vec{n}',n_l)$ and 
$\vec{n}'=(\vec{n}'',n_{l-1})$, then $ \forall \vec{\epsilon}\in \set{\pm1}^m$, we have 
\begin{eqnarray}
\sup_{\vec{C}_l\in \mathcal{C}^{l,\vec{n}}}
\frac{1}{\mu_{p,q}(\vec{C}_l)}
\norm{ \sum^m_{i=1}\epsilon_i\vec{f}_{C_l}(\vec{x}_i)}_{p^*}
\leq 4^{n_l\max\set{\frac{1}{p^*},\frac{1}{q}}}_{l}
\sup_{\vec{C}_{l-1}\in \mathcal{C}^{l,\vec{n}}}
\frac{1}{\mu_{p,q}(\vec{C}_{l-1})}
\norm{ \sum^m_{i=1}\epsilon_i\vec{f}_{C_{l-1}}(\vec{x}_i)}_{p^*}.
\end{eqnarray}
Thus, 
\begin{eqnarray}
\sup_{\vec{C}_l\in \mathcal{C}^{l,\vec{n}}}
\frac{1}{\mu_{p,q}(\vec{C}_l)}
\norm{ \sum^m_{i=1}\epsilon_if_{C_l}(\vec{x}_i)}_{p^*}
\leq
\prod^{l}_{i=1}4^{n_i\max\set{\frac{1}{p^*},\frac{1}{q}}}
\norm{\sum^m_{i=1}\epsilon_if_{I}(\vec{x}_i)}_{p^*}.
\end{eqnarray}

\end{lem}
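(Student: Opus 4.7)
The plan is to prove the stated inequality by induction on the depth $l$, with the key observation being that the representation vector $\vec{f}_{C_l}(\vec{x}_i)$ evolves linearly under composition of quantum channels. Specifically, writing $\vec{C}_l=(\Phi_l,\vec{C}_{l-1})$ where $\vec{C}_{l-1}=(\Phi_{l-1},\ldots,\Phi_1)\in \mathcal{C}^{l-1,\vec{n}'}$, the composition law from Lemma \ref{lem:prop_pq_u} gives
\begin{eqnarray*}
\vec{f}_{C_l}(\vec{x}_i) = M^{\Phi_l}\,\vec{f}_{C_{l-1}}(\vec{x}_i),
\end{eqnarray*}
so by linearity $\sum_i \epsilon_i \vec{f}_{C_l}(\vec{x}_i) = M^{\Phi_l}\bigl(\sum_i \epsilon_i \vec{f}_{C_{l-1}}(\vec{x}_i)\bigr)$. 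This algebraic identity is what makes the inductive step go through cleanly.

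For the inductive step, I would apply Lemma \ref{lem:normpRaN1} to the matrix $M^{\Phi_l}$ (whose row dimension is $N_1 = 4^{n_l}$) acting on the vector $\vec{v} = \sum_i \epsilon_i \vec{f}_{C_{l-1}}(\vec{x}_i)$, yielding
\begin{eqnarray*}
\left\| \sum_i \epsilon_i \vec{f}_{C_l}(\vec{x}_i)\right\|_{p^*} \leq 4^{n_l \max\{1/p^*,1/q\}} \,\|M^{\Phi_l}\|_{p,q}\,\left\|\sum_i \epsilon_i \vec{f}_{C_{l-1}}(\vec{x}_i)\right\|_{p^*}.
\end{eqnarray*}
The multiplicative factorization $\mu_{p,q}(\vec{C}_l) = \|M^{\Phi_l}\|_{p,q}\cdot \mu_{p,q}(\vec{C}_{l-1})$ from Proposition \ref{prop:prop_mult_p}(1) then allows the $\|M^{\Phi_l}\|_{p,q}$ factor to cancel when we divide through by $\mu_{p,q}(\vec{C}_l)$. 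Taking the supremum over $\vec{C}_l\in\mathcal{C}^{l,\vec{n}}$, which decouples into independent suprema over $\Phi_l$ and over $\vec{C}_{l-1}\in \mathcal{C}^{l-1,\vec{n}'}$, gives the first inequality of the lemma.

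For the second inequality, iterate the first bound $l$ times, peeling off one layer at each step, and use as the base case the observation that a depth-$0$ circuit is the identity channel, so $\vec{f}_{I}(\vec{x}_i)$ is just the representation vector of $\proj{\psi(\vec{x}_i)}$. Each iteration contributes a factor $4^{n_i \max\{1/p^*,1/q\}}$, and the product of these factors yields $\prod_{i=1}^l 4^{n_i\max\{1/p^*,1/q\}}$.

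I do not anticipate a genuine obstacle in this proof: the result is essentially a telescoping consequence of the multiplicative property of $\mu_{p,q}$ combined with the operator-norm-style bound in Lemma \ref{lem:normpRaN1}. The only subtlety is bookkeeping—ensuring that the sup decouples across layers (which it does because the width vector fixes the domain/codomain of each layer, so the $\Phi_i$'s can be chosen independently) and correctly tracking the row dimension $N_1 = 4^{n_l}$ at each step so the exponent of $4$ aggregates to $\sum_i n_i\max\{1/p^*,1/q\}$ rather than something involving $\|\vec{n}\|_1$ in a different way.
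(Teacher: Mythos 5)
Your proposal is correct and follows essentially the same route as the paper: peel off the top layer via the composition identity $\vec{f}_{C_l}=M^{\Phi_l}\vec{f}_{C_{l-1}}$, apply Lemma \ref{lem:normpRaN1} with row dimension $4^{n_l}$, cancel $\norm{M^{\Phi_l}}_{p,q}$ against the factorization of $\mu_{p,q}$, and iterate. The paper presents the single peeling step and then asserts the iterated bound, which is exactly your induction.
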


\begin{proof}
The lemma follows from
\begin{eqnarray*}
\sup_{\vec{C}_l\in \mathcal{C}^{l,\vec{n}}}
\frac{1}{\mu_{p,q}(\vec{C}_l)}
\norm{ \sum^m_{i=1}\epsilon_i\vec{f}_{C_l}(\vec{x}_i)}_{p^*}
&=&\sup_{\vec{C}_{l-1}\in \mathcal{C}^{l-1,\vec{n}'}}
\frac{1}{\mu_{p,q}(\vec{C}_{l-1})\norm{M^{\Phi_l}}_{p,q}}
\norm{ \sum^m_{i=1}\epsilon_iM^{\Phi_l}\vec{f}_{C_{l-1}}(\vec{x}_i)}_{p^*}\\
&\leq& 4^{n_l\max\set{\frac{1}{p^*},\frac{1}{q}}}
\sup_{\vec{C}_{l-1}\in \mathcal{C}^{l-1,\vec{n}'}}
\frac{1}{\mu_{p,q}(\vec{C}_{l-1})}
\norm{\sum^m_{i=1}\epsilon_i\vec{f}_{C_{l-1}}(\vec{x}_i)}_{p^*},
\end{eqnarray*}
where the inequality follows from Lemma \ref{lem:normpRaN1}.

\end{proof}
\begin{thm}\label{thm:main1}
Given the set of depth-$l$ quantum circuits with bounded depth-norm $\mu_{p,q}$,
 the Rademacher complexity 
 on $m$ samples $S=\set{\vec{x_1},\ldots,\vec{x}_m}$ satisfies the following bounds 
 
 (1) For $1\leq p\leq 2$, we have
\begin{eqnarray}
R_S(\mathcal{F}\circ \mathcal{C}^{l,\vec{n}}_{\mu_{p,q}\leq \mu})
\leq \mu 4^{(\sum^{l}_{i=1}n_i)\max\set{\frac{1}{p^*},\frac{1}{q}}}
\frac{\sqrt{\min\set{p^*, 8n_0}}}{\sqrt{m}}\norm{\vec{\alpha}}_p\max_i\norm{\vec{f}_I(\vec{x}_i)}_{p^*}.
\end{eqnarray}

(2) For $2<p<\infty$, we have
\begin{eqnarray}
R_S(\mathcal{F}\circ \mathcal{C}^{l,\vec{n}}_{\mu_{p,q}\leq \mu})
\leq \mu 4^{(\sum^{l}_{i=1}n_i)\max\set{\frac{1}{p^*},\frac{1}{q}}}
\frac{\sqrt{p^*}}{m^{1/p}}\norm{\vec{\alpha}}_p\max_i\norm{\vec{f}_I(\vec{x}_i)}_{p^*}.
\end{eqnarray}

\end{thm}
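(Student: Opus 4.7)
The plan is to adapt the single-channel argument of Theorem \ref{thm:one} to the layered setting by iterating Lemma \ref{lem:normpRaN1} once per layer, which is precisely the content of Lemma \ref{lem:dep_norm1}. First I would expand
\begin{align*}
R_S\bigl(\mathcal{F}\circ \mathcal{C}^{l,\vec{n}}_{\mu_{p,q}\leq \mu}\bigr)
= \mathbb{E}_{\vec{\epsilon}} \frac{1}{m} \sup_{\vec{C}_l \in \mathcal{C}^{l,\vec{n}}_{\mu_{p,q}\leq \mu}} \left| \sum_{i=1}^m \epsilon_i\, \vec{\alpha}\, \vec{f}_{C_l}(\vec{x}_i) \right|,
\end{align*}
where $\vec{\alpha}$ is the Pauli representation vector of $H$ and $\vec{f}_{C_l}(\vec{x}_i)$ is the Pauli representation of $C_l(\proj{\psi(\vec{x}_i)})$. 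As in the proof of Theorem \ref{thm:one}, I would insert the normalization factor $\mu/\mu_{p,q}(\vec{C}_l) \geq 1$, which permits enlarging the supremum from the constrained set to all of $\mathcal{C}^{l,\vec{n}}$, and then apply H\"older's inequality to pull out $\|\vec{\alpha}\|_p$. The problem reduces to controlling
\begin{align*}
\mu\, \|\vec{\alpha}\|_p\, \mathbb{E}_{\vec{\epsilon}} \frac{1}{m} \sup_{\vec{C}_l \in \mathcal{C}^{l,\vec{n}}} \frac{1}{\mu_{p,q}(\vec{C}_l)} \Bigl\| \sum_{i=1}^m \epsilon_i \vec{f}_{C_l}(\vec{x}_i) \Bigr\|_{p^*}.
\end{align*}

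The next step is to eliminate the supremum over circuits using Lemma \ref{lem:dep_norm1}. This lemma propagates the bound through all $l$ layers by combining the multiplicative identity $\mu_{p,q}(\vec{C}_l) = \|M^{\Phi_l}\|_{p,q}\, \mu_{p,q}(\vec{C}_{l-1})$ with Lemma \ref{lem:normpRaN1} applied to $M^{\Phi_l}$ at each stage, using that $\vec{f}_{C_l}(\vec{x}_i) = M^{\Phi_l} \vec{f}_{C_{l-1}}(\vec{x}_i)$. After $l$ iterations the denominator $\mu_{p,q}(\vec{C}_l)$ cancels exactly against the accumulated $\prod_i \|M^{\Phi_i}\|_{p,q}$ factors, leaving a multiplicative prefactor $\prod_{i=1}^{l} 4^{n_i \max\{1/p^*, 1/q\}} = 4^{(\sum_{i=1}^l n_i)\max\{1/p^*, 1/q\}}$ together with $\|\sum_i \epsilon_i \vec{f}_I(\vec{x}_i)\|_{p^*}$, where $\vec{f}_I$ is the Pauli representation under the identity channel.

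Finally, I would apply Lemma \ref{lem:depn1} to bound the remaining Rademacher average $\mathbb{E}_{\vec{\epsilon}} \frac{1}{m} \|\sum_i \epsilon_i \vec{f}_I(\vec{x}_i)\|_{p^*}$, which produces the two regime-dependent prefactors $\sqrt{\min\{p^*, 8n_0\}}/\sqrt{m}$ for $1 \leq p \leq 2$ and $\sqrt{p^*}/m^{1/p}$ for $2 < p < \infty$, along with $\max_i \|\vec{f}_I(\vec{x}_i)\|_{p^*}$. Assembling these three ingredients yields both stated bounds.

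The bulk of the technical content is already absorbed into Lemma \ref{lem:dep_norm1}, so the chief point of care is verifying that the telescoping of the depth-norm denominators against the per-layer applications of Lemma \ref{lem:normpRaN1} is consistent, so that the final exponent of $4$ is additive in the widths $n_i$ rather than, for instance, scaling with the largest layer width or with the dimension at the output. Beyond that check, the proof is a direct assembly of Lemmas \ref{lem:normpRaN1}, \ref{lem:depn1}, and \ref{lem:dep_norm1}.
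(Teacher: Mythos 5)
Your proposal is correct and follows essentially the same route as the paper's proof: normalize by $\mu/\mu_{p,q}(\vec{C}_l)$, apply H\"older to extract $\norm{\vec{\alpha}}_p$, telescope through the layers via Lemma \ref{lem:dep_norm1} (which is itself the iterated form of Lemma \ref{lem:normpRaN1} combined with the multiplicativity of $\mu_{p,q}$), and finish with Lemma \ref{lem:depn1}. The additivity of the exponent in the widths $n_i$ is exactly as the paper obtains it, so no gap remains.
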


\begin{proof}
These bounds follow from
\begin{eqnarray*}
R_S(\mathcal{F}\circ \mathcal{C}^{l,\vec{n}}_{\mu_{p,q}\leq \mu})
&=&
\mathbb{E}_{\vec{\epsilon}}
\frac{1}{m}\sup_{\vec{C}_l\in \mathcal{C}^{l,\vec{n}}_{\mu_{p,q}\leq \mu}}
\left| \sum^m_{i=1}\epsilon_i\vec{\alpha}\vec{f}_{C_l}(\vec{x}_i)\right|\\
&\leq&
\mathbb{E}_{\vec{\epsilon}}
\frac{1}{m}\sup_{\vec{C}_l\in \mathcal{C}^{l,\vec{n}}}
\frac{\mu}{\mu_{p,q}(\vec{C}_l)}
\left| \sum^m_{i=1}\epsilon_i\vec{\alpha}\vec{f}_{C_l}(\vec{x}_i)\right|\\
&=&\mu\mathbb{E}_{\vec{\epsilon}}
\frac{1}{m}\sup_{\vec{C}_l\in \mathcal{C}^{l,\vec{n}}}
\frac{1}{\mu_{p,q}(\vec{C}_l)}
\left| \sum^m_{i=1}\epsilon_i\vec{\alpha}\vec{f}_{C_l}(\vec{x}_i)\right|\\
&\leq&
\mu\norm{\vec{\alpha}}_p\mathbb{E}_{\vec{\epsilon}}
\frac{1}{m}\sup_{C_l\in \mathcal{C}^{l,\vec{n}}}
\frac{1}{\mu_{p,q}(\vec{C}_l)}
\norm{\sum^m_{i=1}\epsilon_i\vec{f}_{C_l}(\vec{x}_i)}_{p^*}
\\
&\leq&\mu \norm{\vec{\alpha}}_p\prod^{l}_{i=1}4^{n_i\max\set{\frac{1}{p^*},\frac{1}{q}}}\mathbb{E}_{\vec{\epsilon}}
\frac{1}{m}
\norm{\sum^m_{i=1}\epsilon_i\vec{f}_{I}(\vec{x}_i)}_{p^*},
\end{eqnarray*}
where the the second inequality comes from Lemma \ref{lem:dep_norm1}. 
Using Lemma \ref{lem:depn1}, we obtain the results of the theorem.
\end{proof}

To get rid of the exponential dependence on the width of the quantum neural network, 
we need to take $p^*\geq \sum^{l-1}_{i=1}n_i$, $q\geq \sum^{l-1}_{i=1}n_i$. For example,
we could
take $p=1, q=\infty$.

\begin{prop}
Given the set of depth-$l$ quantum circuits with bounded  $\mu_{1,\infty}$ norm,
 the Rademacher complexity 
 on $m$ samples $S=\set{\vec{x_1},\ldots,\vec{x}_m}$ satisfies the following bounds:

\begin{eqnarray}
R_S(\mathcal{F}\circ \mathcal{C}^{l,\vec{n}}_{\mu_{1,\infty}\leq \mu})
\leq \mu
\frac{\sqrt{ 8n_0}}{\sqrt{m}}\norm{\vec{\alpha}}_1\max_i\norm{\vec{f}_I(\vec{x}_i)}_{\infty}.
\end{eqnarray}

\end{prop}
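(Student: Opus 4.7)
The plan is to derive this proposition as a direct specialization of Theorem~\ref{thm:main1} with the parameter choice $p=1$, $q=\infty$. Since $p=1$ lies in the range $1\leq p\leq 2$, part (1) of Theorem~\ref{thm:main1} applies.

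First I would compute the relevant quantities for this choice of $(p,q)$. The H\"older conjugate is $p^*=\infty$, so
\begin{equation}
\max\left\{\tfrac{1}{p^*},\tfrac{1}{q}\right\}=\max\{0,0\}=0,
\end{equation}
and consequently the width-dependent factor collapses:
\begin{equation}
4^{\left(\sum_{i=1}^{l}n_i\right)\max\{1/p^*,1/q\}}=4^{0}=1.
\end{equation}
This is the key simplification that kills the exponential dependence on $\norm{\vec{n}}_1$ noted in the remark preceding the proposition.

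Second, I would evaluate $\min\{p^*, 8n_0\}$. Since $p^*=\infty$, the minimum is $8n_0$, so the Rademacher factor becomes $\sqrt{8n_0}/\sqrt{m}$. Likewise, $\norm{\vec{\alpha}}_p=\norm{\vec{\alpha}}_1$ and $\max_i\norm{\vec{f}_I(\vec{x}_i)}_{p^*}=\max_i\norm{\vec{f}_I(\vec{x}_i)}_\infty$.

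Substituting all of these into the first bound of Theorem~\ref{thm:main1} yields
\begin{equation}
R_S(\mathcal{F}\circ \mathcal{C}^{l,\vec{n}}_{\mu_{1,\infty}\leq \mu})
\leq \mu \,\frac{\sqrt{8n_0}}{\sqrt{m}}\,\norm{\vec{\alpha}}_1\,\max_i\norm{\vec{f}_I(\vec{x}_i)}_\infty,
\end{equation}
which is exactly the claimed bound. There is no real obstacle here: the content of the proposition is not a new estimate but rather the observation that the $(p,q)=(1,\infty)$ choice is the natural one for which the bound of Theorem~\ref{thm:main1} is free of exponential width blow-up, at the (mild) cost of replacing $\sqrt{p^*}$ by $\sqrt{8n_0}$ under the $\min$.
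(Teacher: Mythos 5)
Your proposal is correct and coincides with the paper's own (implicit) derivation: the paper presents this proposition as a direct specialization of Theorem~\ref{thm:main1} to $p=1$, $q=\infty$, noting just beforehand that this choice removes the exponential width dependence. Your substitutions ($p^*=\infty$, $\max\{1/p^*,1/q\}=0$, $\min\{p^*,8n_0\}=8n_0$) are exactly the intended calculation.
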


We denote the set of  depth-$l$ variational quantum circuits with parameters $\vec{\theta}$ and fixed structure 
$\mathcal{A}$ by $\mathcal{C}^{l,n}_{\mathcal{A},\vec{\theta}}$. By `fixed structure', we mean that the position of each parametrized gate is fixed.  
Then, 
 for the $i$-th layer of the
variational quantum circuit, denoted as $\Phi_i(\vec{\theta})$,  let us define
$\mu_i:=\sup_{\vec{\theta}_i}\norm{M^{\Phi_i(\vec{\theta})}}_{1,\infty}$.
Therefore, for any such depth-$l$ variational quantum circuit with a fixed structure, we have 
$\mu_{p,q}\leq \prod_i \mu_i$. It follows that the Rademacher complexity of the class of depth-$l$ variational quantum circuits with fixed structure
is bounded by 
\begin{eqnarray}
R_S(\mathcal{F}\circ \mathcal{C}^{l,n}_{\mathcal{A},\vec{\theta}})
\leq \prod_i \mu_i\frac{\sqrt{ 8n_0}}{\sqrt{m}}\norm{\vec{\alpha}}_1\max_i\norm{\vec{f}_I(\vec{x}_i)}_{\infty}.
\end{eqnarray}

\subsection{Summation \texorpdfstring{$(p,q)$}{(p,q)} depth-norm}
In this subsection, let us define the summation
$(p,q)$ depth-norm for depth-$l$ quantum circuits $\vec{C}_l=(\Phi_l,...,\Phi_1)$ as follows:
\begin{eqnarray}
\nu^{(r)}_{p,q}(\vec{C}_l)
=\left( \frac{1}{l}\sum^l_{i=1}\norm{M^{\Phi_i}}^r_{p,q}\right)^{\frac{1}{r}},
\end{eqnarray}
for any $r> 0$. For example, if we take $r=1$, then
\begin{eqnarray}
\nu_{p,q}(\vec{C}_l)
=\frac{1}{l}\sum^{l}_{i=1}\norm{M^{\Phi_i}}_{p,q},
\end{eqnarray}
which is the average value of the amount of resources in each layer of the quantum circuit. 

\begin{prop}\label{prop:prop_sum_p}
The  summation $(p,q)$ depth-norm satisfy the following properties:

(1) Given a depth-$l$ quantum circuit $\vec{C}_l$ and a depth-$m$ quantum circuit $\vec{C}_m$, we have 
\begin{eqnarray}
(l+m)(\nu^{(r)}_{p,q}(\vec{C}_l\circ \vec{C}_m))^r
=l(\nu^{(r)}_{p,q}(\vec{C}_l))^r+m(\nu^{(r)}_{p,q}(\vec{C}_m))^r.
\end{eqnarray}

(2) Given two depth-$l$ quantum circuits $C_l$ and $C'_l$, we have
\begin{eqnarray}
\nu^{(r)}_{p,q}(\vec{C}_l\ot \vec{C}'_l)
\leq \nu^{(s)}_{p,q}(\vec{C}_l)\nu^{(t)}_{p,q}(\vec{C}'_l),
\end{eqnarray}
where $r,s,t>0$ and $\frac{1}{s}+\frac{1}{t}=\frac{1}{r}$.
\end{prop}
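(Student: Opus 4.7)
The plan is to prove both parts directly from the definition of $\nu^{(r)}_{p,q}$, using the multiplicativity of the $(p,q)$ group norm under tensor products (Lemma \ref{lem:multi_pq}) and Hölder's inequality.

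For (1), I would simply unpack the definition of the composed circuit. Writing $\vec{C}_l\circ\vec{C}_m = (\Phi_l, \ldots, \Phi_1, \Psi_m, \ldots, \Psi_1)$ for $\vec{C}_l=(\Phi_l,\ldots,\Phi_1)$ and $\vec{C}_m=(\Psi_m,\ldots,\Psi_1)$, the combined circuit has depth $l+m$, so
\begin{equation*}
(l+m)(\nu^{(r)}_{p,q}(\vec{C}_l\circ \vec{C}_m))^r = \sum^l_{i=1}\norm{M^{\Phi_i}}^r_{p,q} + \sum^m_{j=1}\norm{M^{\Psi_j}}^r_{p,q},
\end{equation*}
and the two sums equal $l(\nu^{(r)}_{p,q}(\vec{C}_l))^r$ and $m(\nu^{(r)}_{p,q}(\vec{C}_m))^r$ respectively. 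This step is essentially bookkeeping.

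For (2), the first step is to invoke Lemma \ref{lem:multi_pq} layer-by-layer: since $M^{\Phi_i\ot\Phi'_i}=M^{\Phi_i}\ot M^{\Phi'_i}$ by Lemma \ref{lem:prop_pq_u}, we have $\norm{M^{\Phi_i\ot\Phi'_i}}_{p,q}=\norm{M^{\Phi_i}}_{p,q}\norm{M^{\Phi'_i}}_{p,q}$. Setting $a_i=\norm{M^{\Phi_i}}^r_{p,q}$ and $b_i=\norm{M^{\Phi'_i}}^r_{p,q}$, this reduces the claim to showing
\begin{equation*}
\frac{1}{l}\sum^l_{i=1}a_ib_i \;\leq\; \Bigl(\frac{1}{l}\sum_i a_i^{s/r}\Bigr)^{r/s}\Bigl(\frac{1}{l}\sum_i b_i^{t/r}\Bigr)^{r/t}.
\end{equation*}
The second step is to apply Hölder's inequality with conjugate exponents $s/r$ and $t/r$, which are indeed conjugate since $\frac{1}{s/r}+\frac{1}{t/r}=r(\frac{1}{s}+\frac{1}{t})=1$ by hypothesis. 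This yields $\sum_i a_ib_i\leq(\sum_i a_i^{s/r})^{r/s}(\sum_i b_i^{t/r})^{r/t}$, and dividing by $l=l^{r/s+r/t}$ and distributing the factors of $l$ appropriately gives the displayed inequality. Raising both sides to the $1/r$ power yields the stated bound.

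Neither part presents a real obstacle: (1) is mere rewriting, and (2) only requires combining the already-established tensor multiplicativity of $\norm{\cdot}_{p,q}$ with Hölder's inequality in the right exponents. The only point worth double-checking is the arithmetic identity $r/s+r/t=1$, which is exactly the hypothesis $1/s+1/t=1/r$, ensuring Hölder applies and the normalizing factor $1/l$ comes out correctly.
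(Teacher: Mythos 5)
Your proof is correct and follows the same route as the paper, which simply asserts that (1) follows from the definition and (2) from H\"older's inequality; you have filled in the details (tensor multiplicativity of the $(p,q)$ norm from Lemmas \ref{lem:prop_pq_u} and \ref{lem:multi_pq}, then H\"older with exponents $s/r$ and $t/r$, which are valid conjugates since $s,t>r$). Nothing further is needed.
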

\begin{proof}
(1) follows directly from the definition of $\nu^{(r)}_{p,q}$, and 
(2) follows directly from H\"older's inequality. 
\end{proof}

Similarly,  for the depth-$l$ quantum circuit $\vec{C}_l$, where each layer only contains 
unitary gates, i.e., $\vec{C}_l=(U_l, U_{l-1}, \cdots, U_1)$, $\nu^r_{p,q}$  can be viewed as a resource measure of magic. 
\begin{lem}
Given a depth-$l$ quantum circuits $\vec{C}_l=(U_l, U_{l-1}, \cdots, U_1)$, we have 

(1) (Faithfulness) For $0<p<2$, $q>0$, $r>0$: 
$\nu^{(r)}_{p,q}(\vec{C}_l)\geq 1$, and 
 $\nu^{(r)}_{p,q}(\vec{C}_l)=1$ iff 
 $\vec{C}_l$ is a Clifford circuit.

(1') (Faithfulness) For $p>2$, $0<q<\infty$, $r>0$:
$\nu^{(r)}_{p,q}(\vec{C}_l)\leq 1$, and  $\nu^{(r)}_{p,q}(\vec{C}_l)=1$ iff 
 $\vec{C}_l$ is a Clifford circuit.

 (2) (Nonincreasing under Clifford circuits) For $0<p<2, q>0,r>0$, we have $\nu^{(r)}_{p,q}(\vec{C}_1\circ \vec{C}_l\circ \vec{C}_2)\leq \nu^{(r)}_{p,q}(\vec{C}_l)$ if $\vec{C}_1$, $\vec{C}_2$ are Clifford circuit.
 
 (2')  (Nondecreasing under Clifford circuits)
  For $p>2, 0<q<\infty,r>0$, we have $\nu^{(r)}_{p,q}(\vec{C}_1\circ \vec{C}_l\circ \vec{C}_2)\geq \nu^{(r)}_{p,q}(\vec{C}_l)$ if $\vec{C}_1$, $\vec{C}_2$ are Clifford circuit.

\end{lem}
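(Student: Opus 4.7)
The plan is to reduce the four claims to the single-layer facts already established, namely that for any unitary channel $U$, Lemma \ref{lem:clif_pq} gives $\norm{M^U}_{p,q}\geq 1$ when $0<p<2,\,q>0$ (with equality iff $U$ is Clifford) and $\norm{M^U}_{p,q}\leq 1$ when $p>2,\,0<q<\infty$ (with equality iff $U$ is Clifford). Raising these single-layer inequalities to the $r$-th power and averaging will yield everything, since $\nu^{(r)}_{p,q}(\vec{C}_l)$ is exactly the $r$-th power mean of the quantities $\norm{M^{U_i}}_{p,q}$.

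For (1), I would write
\begin{eqnarray*}
(\nu^{(r)}_{p,q}(\vec{C}_l))^r
=\frac{1}{l}\sum_{i=1}^{l}\norm{M^{U_i}}_{p,q}^{r}
\geq \frac{1}{l}\sum_{i=1}^{l}1^{r}=1,
\end{eqnarray*}
using $\norm{M^{U_i}}_{p,q}\geq 1$ termwise; equality holds iff every term equals $1$, i.e.\ iff every $U_i$ is Clifford. Claim (1') is identical with both inequalities reversed, valid because in the regime $p>2,\,0<q<\infty$ each $\norm{M^{U_i}}_{p,q}\le 1$.

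For (2), write $\vec{C}_1=(V_k,\ldots,V_1)$ and $\vec{C}_2=(W_s,\ldots,W_1)$ with each $V_j,W_j$ Clifford, so $\norm{M^{V_j}}_{p,q}=\norm{M^{W_j}}_{p,q}=1$ by Lemma \ref{lem:clif_pq}. Setting $A:=\sum_{i=1}^{l}\norm{M^{U_i}}_{p,q}^{r}$, the concatenated circuit of depth $k+l+s$ satisfies
\begin{eqnarray*}
(\nu^{(r)}_{p,q}(\vec{C}_1\circ\vec{C}_l\circ\vec{C}_2))^{r}
=\frac{k+s+A}{k+l+s},
\qquad
(\nu^{(r)}_{p,q}(\vec{C}_l))^{r}=\frac{A}{l}.
\end{eqnarray*}
The inequality $(k+s+A)/(k+l+s)\leq A/l$ is equivalent to $l(k+s)\leq (k+s)A$, i.e.\ to $A\geq l$, which is exactly the bound established in (1). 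Case (2') is the mirror argument: in the $p>2$ regime we have $A\leq l$, and the same algebra reverses the sign of the inequality to give nondecrease.

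The only technical subtlety is making sure the algebraic averaging step is done in the right direction for each regime and that the ``Clifford layer contributes a $1$ to the sum'' identity is applied to each factor inside $\vec{C}_1$ and $\vec{C}_2$ separately (rather than treating them as monolithic channels), but this is immediate from the summation definition of $\nu^{(r)}_{p,q}$. I do not expect any real obstacle; the entire content of the lemma is a one-line power-mean argument driven by the single-layer faithfulness statements of Lemma \ref{lem:clif_pq}, together with the trivial fact that inserting terms equal to $1$ into the mean of numbers all $\geq 1$ (resp.\ all $\leq 1$) decreases (resp.\ increases) the mean.
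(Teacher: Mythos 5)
Your proof is correct and follows essentially the same route as the paper, which simply invokes the single-layer faithfulness/invariance facts of Lemma \ref{lem:clif_pq} together with the composition identity $(l+m)(\nu^{(r)}_{p,q}(\vec{C}_l\circ\vec{C}_m))^r=l(\nu^{(r)}_{p,q}(\vec{C}_l))^r+m(\nu^{(r)}_{p,q}(\vec{C}_m))^r$ of Proposition \ref{prop:prop_sum_p}; your explicit power-mean computation and the reduction of (2), (2') to $A\geq l$ (resp. $A\leq l$) is just that argument written out in full.
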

\begin{proof}
This lemma comes directly from Proposition  \ref{lem:clif_pq} and \ref{prop:prop_sum_p}.

\end{proof}
Note that for $p>2, 0<q<\infty$, we can define the resource measure as $1-\nu^{(r)}_{p,q}$, in which case the resource measure also satisfies the properties of 
faithfulness and nonincreasing-ness under Clifford circuits.

Let us define the set of depth-$l$ quantum circuits with bounded  $\nu^{(r)}_{p,q}$ norm by $\mathcal{C}^{l,\vec{n}}_{\nu^{(r)}_{p,q}\leq \nu}$.
It is easy to see the following relationship 
$\nu^{(r)}_{p,q}$ and $\mu_{p,q}$
\begin{eqnarray}
\nu^{(r)}_{p,q}(\vec{C}_l)
\geq \mu_{p,q}(\vec{C}_l)^{1/l},
\end{eqnarray}
which follows directly from the Arithmetic Mean-Geometric Mean inequality. 
Hence we have 
\begin{eqnarray}
\mathcal{C}^{l,\vec{n}}_{\nu^{(r)}_{p,q}\leq \nu}\subseteq \mathcal{C}^{l,\vec{n}}_{\mu_{p,q}\leq \nu^l}.
\end{eqnarray}

This allows us to obtain the following result on the Rademacher complexity of quantum circuits with bounded  $\nu^r_{p,q}$ norm directly from Theorem \ref{thm:main1}.

\begin{thm}[Restatement of Theorem \ref{thm:Con_main1}]
Given the set of depth-$l$ quantum circuits with bounded $\nu^{(r)}_{p,q}$,
the Rademacher complexity 
 on $m$ independent samples $S=\set{\vec{x_1},...,\vec{x}_m}$ satisfies the following bounds 
 
 (1) For $1\leq p\leq 2$, we have
\begin{eqnarray}
R_S(\mathcal{F}\circ \mathcal{C}^{l,\vec{n}}_{\nu^{(r)}_{p,q}\leq \nu})
\leq \nu^l 4^{(\sum^{l}_{i=1}n_i)\max\set{\frac{1}{p^*},\frac{1}{q}}}
\frac{\sqrt{\min\set{p^*, 8n_0}}}{\sqrt{m}}\norm{\vec{\alpha}}_p\max_i\norm{\vec{f}_I(\vec{x}_i)}_{p^*}.
\end{eqnarray}

(2) For $2<p<\infty$, we have
\begin{eqnarray}
R_S(\mathcal{F}\circ \mathcal{C}^{l,\vec{n}}_{\nu^{(r)}_{p,q}\leq \nu})
\leq \nu^l 4^{(\sum^{l}_{i=1}n_i)\max\set{\frac{1}{p^*},\frac{1}{q}}}
\frac{\sqrt{p^*}}{m^{1/p}}\norm{\vec{\alpha}}_p\max_i\norm{\vec{f}_I(\vec{x}_i)}_{p^*}.
\end{eqnarray}

\end{thm}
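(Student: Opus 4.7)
The plan is to obtain this theorem as a direct corollary of Theorem \ref{thm:main1}, by relating the summation depth-norm $\nu^{(r)}_{p,q}$ to the multiplication depth-norm $\mu_{p,q}$ via the power mean--geometric mean inequality. The key observation is that $\nu^{(r)}_{p,q}(\vec{C}_l)$ is the $r$-th power mean of the nonnegative layer norms $\{\norm{M^{\Phi_i}}_{p,q}\}_{i=1}^l$, while $\mu_{p,q}(\vec{C}_l)^{1/l}$ is the geometric mean of the same quantities, so a classical inequality between the two means will immediately translate a bound on $\nu^{(r)}_{p,q}$ into a bound on $\mu_{p,q}$.

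First, I would invoke the power mean--geometric mean inequality to obtain $\nu^{(r)}_{p,q}(\vec{C}_l) \geq \mu_{p,q}(\vec{C}_l)^{1/l}$ for every $r>0$ and every depth-$l$ circuit $\vec{C}_l$. Rearranging, any circuit satisfying $\nu^{(r)}_{p,q}(\vec{C}_l)\leq \nu$ automatically satisfies $\mu_{p,q}(\vec{C}_l)\leq \nu^l$, yielding the set inclusion
\begin{equation*}
\mathcal{C}^{l,\vec{n}}_{\nu^{(r)}_{p,q}\leq \nu}\;\subseteq\;\mathcal{C}^{l,\vec{n}}_{\mu_{p,q}\leq \nu^l}.
\end{equation*}
This inclusion is exactly the bridge to Theorem \ref{thm:main1}, which is stated in terms of $\mu_{p,q}$-bounded circuit classes.

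Second, I would apply the elementary monotonicity of the Rademacher complexity under taking subsets of hypothesis classes: if $\mathcal{H}_1\subseteq\mathcal{H}_2$ then $R_S(\mathcal{H}_1)\leq R_S(\mathcal{H}_2)$. Applied at the function-class level to $\mathcal{F}\circ\mathcal{C}^{l,\vec{n}}_{\nu^{(r)}_{p,q}\leq \nu}$ and $\mathcal{F}\circ\mathcal{C}^{l,\vec{n}}_{\mu_{p,q}\leq \nu^l}$, this reduces the problem to bounding the Rademacher complexity of the larger class. I would then invoke Theorem \ref{thm:main1} with the parameter $\mu$ replaced by $\nu^l$, substituting directly into both the $1\leq p\leq 2$ and $2<p<\infty$ cases to recover the two claimed bounds; the case split is inherited verbatim from Theorem \ref{thm:main1}.

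Since the genuinely substantive work (the layer-by-layer peeling via Lemma \ref{lem:normpRaN1}, the Khintchine--Kahane/Massart estimates in Lemma \ref{lem:depn1}, and the telescoping supremum over circuit layers in Lemma \ref{lem:dep_norm1}) is already packaged inside Theorem \ref{thm:main1}, there is essentially no remaining obstacle beyond correctly identifying the inclusion. The only point warranting mild care is to ensure the power mean inequality is applied in the right direction for arbitrary $r>0$, which is classical (the $r$-power mean is nondecreasing in $r$, with the geometric mean as the $r\to 0^+$ limit), so the reduction goes through uniformly in $r$.
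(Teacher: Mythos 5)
Your proposal is correct and follows essentially the same route as the paper: the appendix derives $\nu^{(r)}_{p,q}(\vec{C}_l)\geq \mu_{p,q}(\vec{C}_l)^{1/l}$ from the AM--GM inequality, concludes the inclusion $\mathcal{C}^{l,\vec{n}}_{\nu^{(r)}_{p,q}\leq \nu}\subseteq \mathcal{C}^{l,\vec{n}}_{\mu_{p,q}\leq \nu^l}$, and then applies Theorem \ref{thm:main1} with $\mu=\nu^l$. Your extra care in using the power-mean form of the inequality for general $r>0$ (rather than only $r=1$) is a minor but welcome refinement of the paper's argument.
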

To get rid of the exponential dependence on the width of quantum neural networks, 
we need to take $p^*\geq \sum^{l-1}_{i=1}n_i$ and $q\geq \sum^{l-1}_{i=1}n_i$. For example,
we could 
take $p=1, q=\infty$.

\begin{prop}
Given the set of depth-$l$ quantum circuits with bounded $\nu^{(r)}_{1,\infty}$, the Rademacher complexity 
 on $m$ independent samples $S=\set{\vec{x_1},\ldots,\vec{x}_m}$ satisfies the following bounds 
 
\begin{eqnarray}
R_S(\mathcal{F}\circ \mathcal{C}^{l,\vec{n}}_{\nu^{(r)}_{1,\infty}\leq \nu})
\leq \nu^l 
\frac{\sqrt{ 8n_0}}{\sqrt{m}}\norm{\vec{\alpha}}_p\max_i\norm{\vec{f}_I(\vec{x}_i)}_{\infty}.
\end{eqnarray}

\end{prop}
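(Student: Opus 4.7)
The plan is to obtain this bound as a direct specialization of the restated Theorem \ref{thm:Con_main1} just above, by taking $p=1$ and $q=\infty$. Since these indices satisfy $1\leq p\leq 2$, the relevant estimate is part (1) of that theorem, so I would substitute and simplify each factor in turn; no new inequality is needed.

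First, with $p=1$ the H\"older conjugate is $p^*=\infty$, so together with $q=\infty$ we have $\max\{1/p^*,1/q\}=0$. This collapses the exponential-in-width prefactor, since
\begin{equation}
4^{(\sum_{i=1}^l n_i)\max\{1/p^*,1/q\}}=4^{0}=1,
\end{equation}
which is precisely the reason for choosing these particular indices. Next, $\min\{p^*,8n_0\}=\min\{\infty,8n_0\}=8n_0$, which produces the $\sqrt{8n_0/m}$ factor in the stated bound.

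Finally, the data-dependent quantities $\|\vec\alpha\|_p$ and $\max_i\|\vec f_I(\vec x_i)\|_{p^*}$ specialize to $\|\vec\alpha\|_1$ and $\max_i\|\vec f_I(\vec x_i)\|_\infty$ respectively (I read the $\|\vec\alpha\|_p$ appearing in the proposition as a typographical remnant for $\|\vec\alpha\|_1$). Putting the pieces together gives exactly the claimed bound
\begin{equation}
R_S(\mathcal{F}\circ \mathcal{C}^{l,\vec{n}}_{\nu^{(r)}_{1,\infty}\leq \nu})\leq \nu^l\,\frac{\sqrt{8n_0}}{\sqrt{m}}\,\|\vec\alpha\|_1\,\max_i\|\vec f_I(\vec x_i)\|_\infty.
\end{equation}
There is no genuine obstacle: all analytic work has already been carried out in proving Theorem \ref{thm:Con_main1} via Lemmas \ref{lem:normpRaN1}, \ref{lem:depn1}, and \ref{lem:dep_norm1}, together with the containment $\mathcal{C}^{l,\vec{n}}_{\nu^{(r)}_{p,q}\leq\nu}\subseteq\mathcal{C}^{l,\vec{n}}_{\mu_{p,q}\leq\nu^l}$ derived from the AM--GM inequality. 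The only care required is to verify that $(p,q)=(1,\infty)$ lies in the regime covered by case (1) of the theorem, which it evidently does, so the proposition follows by plugging in.
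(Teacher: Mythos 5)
Your proposal is correct and matches the paper's own (implicit) argument exactly: the proposition is stated as a direct specialization of Theorem \ref{thm:Con_main1} to $p=1$, $q=\infty$, where $p^*=\infty$ kills the width-dependent prefactor and $\min\{p^*,8n_0\}=8n_0$ yields the stated rate. Your reading of $\norm{\vec{\alpha}}_p$ as a typographical remnant for $\norm{\vec{\alpha}}_1$ is also the right interpretation.
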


\subsection{\texorpdfstring{$(p,q)$}{(p,q)} path norm}

Let us define the $(p,q)$ path-norm for the depth-$l$ quantum circuits 
$\vec{C}_l=(\Phi_l, \Phi_{l-1},...,\Phi_1)$. First, for a fixed output $P_{\vec{z}}$,
where $\vec{z}\in\set{0,1,2,3}^{n_l}$,  let us define
\begin{eqnarray}
\gamma^{(\vec{z})}_{p}(\vec{C}_l)
=\left(
\sum_{\substack{v_0\to v_1\to..\to v_{out},\\
v_{out}=\vec{z}}}
|M^{\Phi_l}_{\vec{z}v_{l-1}}M^{\Phi_{l-1}}_{v_{l-1}v_{l-2}}\cdots M^{\Phi_{1}}_{v_{1}v_{0}} |^p
\right)^{1/p}.
\end{eqnarray}
Hence, we can define the $(p,q)$ path-norm for the depth-$l$ quantum circuits   as follows,
\begin{eqnarray}
\gamma_{p,q}(\vec{C}_l)
=\left(\frac{1}{4^{n_l}}\sum_{\vec{z}}\gamma^{(\vec{z})}_p(\vec{C}_l)^q\right)^{1/q}.
\end{eqnarray}

\begin{prop}\label{prop:prop_path_p}
The multiplication $(p,q)$ path-norm satisfies the following properties:

(1) Given a depth-$l$ quantum circuit $\vec{C}_l$ and a depth-$m$ quantum circuit $\vec{C}_m$, we have 
\begin{eqnarray}
\gamma_{p,q}(\vec{C}_l\circ \vec{C}_m)
\leq \gamma_{p,q}(\vec{C}_l)\gamma_{p,\infty}(\vec{C}_m).
\end{eqnarray}

(2) Given two depth-$l$ quantum circuits $\vec{C}_l$ and $\vec{C}'_l$, we have
\begin{eqnarray}
\gamma_{p,q}(\vec{C}_l\ot \vec{C}'_l)
=\gamma_{p,q}(\vec{C}_l)\gamma_{p,q}(\vec{C}'_l).
\end{eqnarray}
\end{prop}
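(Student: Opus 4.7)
The plan is to unfold the definition of $\gamma^{(\vec z)}_p$ in both parts and exploit the fact that, along a fixed path, the product of matrix entries factorises either by concatenation (part 1) or by tensor structure (part 2). Throughout, the bottleneck will be keeping track of the normalising factor $1/4^{n_l}$ appearing in the outer $\ell_q$ average, since this is what converts the clean equality at the level of per-output quantities $\gamma^{(\vec z)}_p$ into the stated bound on $\gamma_{p,q}$.

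For part (1), I would split every path in the composed circuit $\vec C_l\circ\vec C_m$ of total depth $l+m$ at the interface vertex $v_m$, which lives in $\{0,1,2,3\}^{n_m}=\{0,1,2,3\}^{n_0^{(l)}}$. Since the path-product factors through $v_m$, summing $|\cdot|^p$ yields
\begin{equation}
\gamma^{(\vec z)}_p(\vec C_l\circ\vec C_m)^p
= \sum_{v_m} A^{(\vec z,v_m)}_p(\vec C_l)\,\gamma^{(v_m)}_p(\vec C_m)^p,
\end{equation}
where $A^{(\vec z,v_m)}_p(\vec C_l)$ is the analogous sum over paths in $\vec C_l$ with fixed starting vertex $v_m$, so that $\sum_{v_m}A^{(\vec z,v_m)}_p(\vec C_l)=\gamma^{(\vec z)}_p(\vec C_l)^p$. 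Pulling out $\max_{v_m}\gamma^{(v_m)}_p(\vec C_m)^p$ from the sum gives $\gamma^{(\vec z)}_p(\vec C_l\circ\vec C_m)\le \gamma^{(\vec z)}_p(\vec C_l)\cdot\max_{v_m}\gamma^{(v_m)}_p(\vec C_m)$. Taking the $\ell_q$ average over $\vec z$ with the $1/4^{n_l}$ weight, the first factor yields $\gamma_{p,q}(\vec C_l)$, and I would identify the second factor as $\gamma_{p,\infty}(\vec C_m)$ using the standard limit $\bigl(\tfrac1N\sum_i a_i^q\bigr)^{1/q}\to\max_i a_i$ as $q\to\infty$, which is precisely how $\gamma_{p,\infty}$ is defined.

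For part (2), I would use Lemma \ref{lem:prop_pq_u}, which gives $M^{\Phi_i\otimes\Phi'_i}=M^{\Phi_i}\otimes M^{\Phi'_i}$, so that entries factorise as $M^{\Phi_i}_{v_iv_{i-1}}M^{\Phi'_i}_{v'_iv'_{i-1}}$ along a path indexed by pairs $(v_i,v'_i)$. Consequently the sum of $|\cdot|^p$ over paths splits into a product of two independent sums, yielding the pointwise identity $\gamma^{(\vec z,\vec z\,')}_p(\vec C_l\otimes\vec C_l')=\gamma^{(\vec z)}_p(\vec C_l)\,\gamma^{(\vec z\,')}_p(\vec C_l')$. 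Raising to the $q$-th power and averaging over outputs with the combined weight $1/4^{n_l+n_l'}$ factorises cleanly into two independent $1/4^{n_l}$ and $1/4^{n_l'}$ averages, proving the multiplicativity $\gamma_{p,q}(\vec C_l\otimes\vec C_l')=\gamma_{p,q}(\vec C_l)\,\gamma_{p,q}(\vec C_l')$.

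The main obstacle I anticipate is purely bookkeeping rather than conceptual: writing out the path sums for the composition in part (1) requires carefully naming the intermediate vertex and the partial sums $A^{(\vec z,v_m)}_p(\vec C_l)$ so that the identification $\sum_{v_m}A^{(\vec z,v_m)}_p(\vec C_l)=\gamma^{(\vec z)}_p(\vec C_l)^p$ is manifest; and recognising $\max_{v_m}\gamma^{(v_m)}_p(\vec C_m)$ as $\gamma_{p,\infty}(\vec C_m)$ is where the asymmetry between $q$ (kept on the left) and $\infty$ (forced on the right) enters the bound. Part (2) is entirely an equality-of-summations argument and should not present any genuine difficulty beyond indexing the tensor-product path.
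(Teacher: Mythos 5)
Your proposal is correct and follows essentially the same route as the paper: part (1) is proved by splitting each path at the interface vertex $v_m$, recognizing the inner sum as $\gamma^{(v_m)}_p(\vec{C}_m)^p$, pulling out $\max_{v_m}\gamma^{(v_m)}_p(\vec{C}_m)=\gamma_{p,\infty}(\vec{C}_m)$, and identifying what remains as $\gamma^{(\vec z)}_p(\vec{C}_l)^p$; part (2) is the same pointwise factorization $\gamma^{(\vec z_1\vec z_2)}_p(\vec C_l\otimes\vec C_l')=\gamma^{(\vec z_1)}_p(\vec C_l)\gamma^{(\vec z_2)}_p(\vec C_l')$ via $M^{\Phi\otimes\Phi'}=M^{\Phi}\otimes M^{\Phi'}$, followed by factorizing the normalized $\ell_q$ average. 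No gaps.
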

\begin{proof}\hfill

(1) holds because 
\begin{eqnarray*}
\gamma^{(\vec{z})}_p(\vec{C}_l\circ \vec{C}_m)
&=&\left(\sum_{v_0\to\ldots\to v_{m}
\to v_{m+1}\to...\to v_{l+m}=\vec{z}}
|M_{\vec{z}\vec{v}_{l+m-1}}|^p
\ldots|M_{\vec{v}_{m+1}\vec{v}_m}|^p
\ldots|M_{\vec{v}_1\vec{v}_0}|^p\right)^{1/p}\\
&=&\left(\sum_{v_{m}
\to v_{m+1}\to\ldots\to v_{l+m}=\vec{z}}
|M_{\vec{z}\vec{v}_{l+m-1}}|^p
\ldots|M_{\vec{v}_{m+1}\vec{v}_m}|^p (\gamma^{\vec{v}_m}_{p}(\vec{C}_m))^p\right)^{1/p}\\
&\leq& \left(\sum_{v_{m}
\to v_{m+1}\to \ldots \to v_{l+m}=\vec{z}}
|M_{\vec{z}\vec{v}_{l+m-1}}|^p
\ldots|M_{\vec{v}_{m+1}\vec{v}_m}|^p \right)^{1/p} \max_{\vec{v}_m}\gamma^{(\vec{v}_m)}_{p}(\vec{C}_m)\\
&=&\gamma^{(\vec{z})}_{p}(\vec{C}_l)\gamma_{p,\infty}(\vec{C}_m).
\end{eqnarray*}
Therefore, we have $\gamma_{p,q}(\vec{C}_l\circ \vec{C}_m)
\leq \gamma_{p,q}(\vec{C}_l)\gamma_{p,\infty}(\vec{C}_m)$.

And (2) holds because
\begin{eqnarray*}
\gamma^{(\vec{z}_1\vec{z}_2)}_{p}(\vec{C}_l\ot \vec{C}'_l)
&=&\left(
\sum_{v_0\to v_1\to v_2\to\ldots\to v_{out},
v_{out}=\vec{z}_1\vec{z}_2}
|M^{\Phi_l\ot \Phi'_l}_{\vec{z}v_{l-1}}M^{\Phi_{l-1}\ot\Phi'_{l-1}}_{v_{l-1}v_{l-2}}\cdots M^{\Phi_{1}\ot\Phi'_1}_{v_{1}v_{0}} |^p
\right)^{1/p}\\
&=&\gamma^{(\vec{z}_1)}_{p}(\vec{C}_l)\gamma^{(\vec{z}_2)}_{p}(\vec{C}'_l).
\end{eqnarray*}
where the second equality comes from the fact that $M^{\Phi\ot\Phi'}=M^{\Phi}\ot M^{\Phi'}$.

\end{proof}

\begin{prop}\label{prop:gamma_C}
For any depth-$l$ quantum circuit $\vec{C}_l$, we have the following relationship:
For any $0<p\leq 1$, $q>0$, we have
\begin{eqnarray}
\gamma_{p,q}(\vec{C}_l)
\geq \norm{M^{C_l}}_{p,q}.
\end{eqnarray}
\end{prop}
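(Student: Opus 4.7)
The plan is to reduce the inequality for the $(p,q)$ group norm to a rowwise inequality and then use the subadditivity of $t\mapsto t^p$ on $[0,\infty)$ for $0<p\leq 1$, which gives $|\sum_i a_i|^p\leq \sum_i|a_i|^p$. This rowwise-to-pathwise comparison is the only nontrivial input needed.

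First, I would unpack the composition. Since $M^{C_l}=M^{\Phi_l}M^{\Phi_{l-1}}\cdots M^{\Phi_1}$ by Lemma \ref{lem:prop_pq_u}, each entry can be written as a sum over intermediate indices,
\begin{equation}
M^{C_l}_{\vec{z},\vec{x}}
=\sum_{v_1,\ldots,v_{l-1}}
M^{\Phi_l}_{\vec{z}v_{l-1}}M^{\Phi_{l-1}}_{v_{l-1}v_{l-2}}\cdots M^{\Phi_1}_{v_1\vec{x}},
\end{equation}
so each entry of $M^{C_l}$ is an unsigned superposition of path weights. The key step is to fix a row index $\vec{z}$ and apply the $p$-subadditivity inequality entrywise to obtain
\begin{equation}
\norm{M^{C_l}_{\vec{z}}}^p_p
=\sum_{\vec{x}}\Bigl|\sum_{v_1,\ldots,v_{l-1}} M^{\Phi_l}_{\vec{z}v_{l-1}}\cdots M^{\Phi_1}_{v_1\vec{x}}\Bigr|^p
\leq \sum_{\vec{x}}\sum_{v_1,\ldots,v_{l-1}}\bigl|M^{\Phi_l}_{\vec{z}v_{l-1}}\cdots M^{\Phi_1}_{v_1\vec{x}}\bigr|^p.
\end{equation}
Identifying $v_0=\vec{x}$ and $v_l=\vec{z}$, the right-hand side is exactly $\gamma^{(\vec{z})}_p(\vec{C}_l)^p$, so $\norm{M^{C_l}_{\vec{z}}}_p\leq \gamma^{(\vec{z})}_p(\vec{C}_l)$ for every $\vec{z}$.

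Finally, I would raise this rowwise bound to the $q$-th power, average over $\vec{z}\in\{0,1,2,3\}^{n_l}$ (with weight $1/4^{n_l}$), and take a $q$-th root, yielding
\begin{equation}
\norm{M^{C_l}}_{p,q}
=\Bigl(\tfrac{1}{4^{n_l}}\sum_{\vec{z}}\norm{M^{C_l}_{\vec{z}}}^q_p\Bigr)^{1/q}
\leq \Bigl(\tfrac{1}{4^{n_l}}\sum_{\vec{z}}\gamma^{(\vec{z})}_p(\vec{C}_l)^q\Bigr)^{1/q}
=\gamma_{p,q}(\vec{C}_l).
\end{equation}
The main (and only) obstacle is recognizing that the hypothesis $0<p\leq 1$ is precisely what is needed for the subadditivity step; for $p>1$ this direction of the inequality generally fails because $|\sum a_i|^p$ can exceed $\sum|a_i|^p$ due to constructive addition of path amplitudes. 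Note also that the monotonicity in $q$ is not needed because the same pointwise-in-$\vec{z}$ bound propagates through any $\ell^q$-averaging with positive weights.
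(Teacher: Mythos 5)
Your proof is correct and follows essentially the same route as the paper's: both reduce the claim to the rowwise bound $\norm{M^{C_l}_{\vec{z}}}_p\leq \gamma^{(\vec{z})}_p(\vec{C}_l)$ obtained by expanding $M^{C_l}$ as a product of the layer matrices and applying the subadditivity $|\sum_i a_i|^p\leq\sum_i|a_i|^p$ valid for $0<p\leq 1$. Your write-up is in fact slightly more careful, since you make the final $\ell^q$-averaging step explicit and note why the restriction $p\leq 1$ is essential, whereas the paper leaves these points implicit.
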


\begin{proof}
To prove this result, we only need to prove that 
for any $\vec{z}$, we have 
\begin{eqnarray*}
\gamma^{(\vec{z})}_{p}(\vec{C}_l)
\leq \norm{M^{C_l}_{\vec{z}}}_p.
\end{eqnarray*}
This is because 
\begin{eqnarray*}
\norm{M^{C_l}_{\vec{z}}}_p
&=&
\left(
\sum_{\vec{v_0}}|\sum_{\vec{v}_1,...,\vec{v}_{l-1}}M^{\Phi_l}_{\vec{x}\vec{v}_{l-1}}M^{\Phi_{l-1}}_{\vec{v}_{l-1}\vec{v}_{l-2}}\cdots M^{\Phi_{1}}_{\vec{v}_{1}\vec{v}_{0}} |^p
\right)^{1/p}\\
&\leq& \left(
\sum_{\vec{v_0}}\sum_{\vec{v}_1,...,\vec{v}_{l-1}}|M^{\Phi_l}_{\vec{x}\vec{v}_{l-1}}M^{\Phi_{l-1}}_{\vec{v}_{l-1}\vec{v}_{l-2}}\cdots M^{\Phi_{1}}_{\vec{v}_{1}\vec{v}_{0}} |^p
\right)^{1/p}\\
&=&\gamma^{(\vec{z})}_p(\vec{C}_l).
\end{eqnarray*}
\end{proof}

For a depth-$l$ quantum circuit $\vec{C}_l$, where each layer contains only
unitary gates, i.e., $\vec{C}_l=(U_l, U_{l-1},\cdots,  U_1)$, the path norm $\gamma_{p,q}$  can be viewed as a resource measure of magic.

\begin{lem}\label{lem:crit_path_duplicate}
Given a depth-$l$ quantum circuit $\vec{C}_l=(U_l, U_{l-1},\cdots, U_1)$, we have 

(1) (Faithfulness) For $0<p\leq 1$, $q>0$: $\gamma_{p,q}(\vec{C}_l)\geq 1$, 
 $\gamma_{p,q}(\vec{C}_l)= 1$ iff $\vec{C}_l$ is a Clifford circuit.

(2)  (Invariance under Clifford circuit)
$
\gamma_{p,q}(\vec{C}_1\circ \vec{C}_l\circ \vec{C}_2)
=\gamma_{p,q}(\vec{C}_l)
$ if $\vec{C}_1$ and $\vec{C}_2$ are Clifford circuits.
\end{lem}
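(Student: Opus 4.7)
The whole argument rests on the one-step recurrence
\begin{equation*}
\gamma_p^{(\vec{z})}(\vec{C}_l)^p = \sum_{\vec{w}}|M^{\Phi_l}_{\vec{z}\vec{w}}|^p\,\gamma_p^{(\vec{w})}(\vec{C}_{l-1})^p,
\end{equation*}
obtained from the definition by factoring off the topmost edge of each path. Both parts of the lemma will be reduced to this.

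For part (1), I first prove by induction on $l$ the pointwise bound $\gamma_p^{(\vec{z})}(\vec{C}_l)\ge 1$ for every index $\vec{z}$. The base case $l=1$ gives $\gamma_p^{(\vec{z})}(\vec{C}_1)=\|M^{U_1}_{\vec{z}}\|_p$; since $M^{U_1}$ is orthogonal every row has $\ell^2$-norm $1$, and for $0<p\le 1<2$ the $\ell^p$-norm of a unit $\ell^2$-vector is at least $1$, with equality exactly when the row has a single nonzero entry of $\pm 1$, i.e., when $U_1$ is Clifford by Lemma~\ref{lem:cliff_M}. The inductive step plugs $\gamma_p^{(\vec{w})}(\vec{C}_{l-1})\ge 1$ into the recurrence to give $\gamma_p^{(\vec{z})}(\vec{C}_l)^p\ge \|M^{U_l}_{\vec{z}}\|_p^p\ge 1$. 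Averaging over $\vec{z}$ yields $\gamma_{p,q}(\vec{C}_l)\ge 1$. For the equality characterisation, the direction ``Clifford $\Rightarrow \gamma_{p,q}=1$'' comes from saturating every inequality in the chain. Conversely, $\gamma_{p,q}(\vec{C}_l)=1$ forces $\gamma_p^{(\vec{z})}(\vec{C}_l)=1$ for every $\vec{z}$. Chasing equality back through the recurrence then requires both $\|M^{U_l}_{\vec{z}}\|_p=1$ for every $\vec{z}$ -- which forces $U_l$ to be Clifford -- and $\gamma_p^{(\vec{w})}(\vec{C}_{l-1})=1$ for every $\vec{w}$ in the row support of $M^{U_l}$. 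Because a Clifford $U_l$ acts as a bijection on the Pauli labels, these supports cover all of $\set{0,1,2,3}^{n_{l-1}}$, and the induction hypothesis applied to $\vec{C}_{l-1}$ closes the argument.

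For part (2), iterating across each Clifford layer in $\vec{C}_1$ and $\vec{C}_2$ reduces the claim to invariance under pre- or post-composition by a single Clifford unitary. Prepending a Clifford $U_b$ as the new bottom layer adjoins an extra inner sum $\sum_{\vec{v}_{-1}}|M^{U_b}_{\vec{v}_0\vec{v}_{-1}}|^p$ to the path sum, which equals $1$ because every row of the signed-permutation matrix $M^{U_b}$ is a signed basis vector; hence each $\gamma_p^{(\vec{z})}$ is unchanged. Appending a Clifford $U_a$ as the new top layer contracts against $M^{U_a}_{\vec{z}\vec{v}_l}$, which picks out a single nonzero entry $v_l=\pi^{-1}(\vec{z})$, so $\gamma_p^{(\vec{z})}(U_a\circ\vec{C}_l)=\gamma_p^{(\pi^{-1}(\vec{z}))}(\vec{C}_l)$; averaging over $\vec{z}$ merely relabels the sum and preserves $\gamma_{p,q}$.

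The delicate step is the converse direction of part (1): a single scalar identity $\gamma_{p,q}(\vec{C}_l)=1$ has to constrain every individual layer. The recurrence makes this manageable because equality propagates inward -- once $U_l$ is pinned down as Clifford, bijectivity of its Pauli action transfers the identity $\gamma_p^{(\vec{w})}(\vec{C}_{l-1})=1$ to \emph{every} $\vec{w}$, enabling the induction -- but keeping the bookkeeping of which row supports are hit is where the argument demands the most care.
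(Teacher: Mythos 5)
Your proof is correct, and it takes a genuinely different route from the paper's. The paper proves the inequality in part (1) by combining Proposition~\ref{prop:gamma_C} (for $0<p\leq 1$, $\gamma_{p,q}(\vec{C}_l)\geq \norm{M^{C_l}}_{p,q}$, where $M^{C_l}=M^{U_l}\cdots M^{U_1}$ is the representation matrix of the composed unitary) with Lemma~\ref{lem:clif_pq}, and it disposes of part (2) by pointing to the composition property in Proposition~\ref{prop:prop_path_p}. You instead peel off one layer at a time via the recurrence $\gamma_p^{(\vec{z})}(\vec{C}_l)^p=\sum_{\vec{w}}|M^{U_l}_{\vec{z}\vec{w}}|^p\gamma_p^{(\vec{w})}(\vec{C}_{l-1})^p$ and run an induction on the pointwise bound $\gamma_p^{(\vec{z})}\geq 1$, handling the Clifford layers by direct computation with signed permutation matrices. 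What your approach buys is significant: the paper's chain $\gamma_{p,q}\geq\norm{M^{C_l}}_{p,q}\geq 1$ cannot by itself deliver the stated equality characterization, since $\norm{M^{C_l}}_{p,q}=1$ only detects whether the \emph{product} $U_l\cdots U_1$ is Clifford (e.g.\ $U_2=T^{-1}$, $U_1=T$ composes to the identity while neither layer is Clifford), whereas the lemma asserts that each layer must be Clifford; your equality-chasing through the recurrence, using that a Clifford top layer acts bijectively on Pauli labels so its row supports cover every column index, is exactly the argument needed to close that gap. Likewise, your direct verification of part (2) -- the prepended Clifford contributes an inner sum $\sum_{\vec{v}_{-1}}|M^{U_b}_{\vec{v}_0\vec{v}_{-1}}|^p=1$ and the appended Clifford merely permutes the labels $\vec{z}$ -- is more self-contained than the paper's appeal to the submultiplicative bound $\gamma_{p,q}(\vec{C}_l\circ\vec{C}_m)\leq\gamma_{p,q}(\vec{C}_l)\gamma_{p,\infty}(\vec{C}_m)$, which gives only one direction of the claimed equality without an additional inverse-composition argument.
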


\begin{proof}
$\gamma_{p,q}(\vec{C}_l)\geq 1$
comes from the facts that 
$\gamma_{p,q}(\vec{C}_l)
\geq \norm{M^{C_l}}_{p,q}$ and $\norm{M^{C_l}}_{p,q}\geq 1$ (by Lemma \ref{lem:clif_pq}).

Finally, the invariance under Clifford circuits has been proved in 
Proposition \ref{prop:prop_path_p}.

\end{proof}

Let us define the normalized  representation matrix  of the quantum channel in the depth-$l$ quantum circuit $C_l$ as follows
\begin{eqnarray}
m^{\Phi_{k+1}}_{\vec{z}\vec{x},p}
=\frac{M^{\Phi_{k+1}}_{\vec{z}\vec{x}}\gamma^{(\vec{x})}_{p}(\vec{C}_k)}{\gamma^{(\vec{z})}_{p}(\vec{C}_{k+1})}.
\end{eqnarray}

It is easy to see that for any row vector 
$m^{\Phi_{k+1}}_{\vec{z},p}$, we have 
\begin{eqnarray}
\norm{m^{\Phi_{k+1}}_{\vec{z},p}}_p
=\left(\sum_{\vec{x}}|m^{\Phi_{k+1}}_{\vec{z}\vec{x},p}|^p\right)^{1/p}
=1, \forall \vec{z}.
\end{eqnarray}
Besides, it is easy to verify that
\begin{eqnarray}
\gamma^{(\vec{z})}_{p}(\vec{C}_l)m^{\Phi_{l}}_{\vec{z}}m^{\Phi_{l-1}}....m^{\Phi_{1}}
=M^{\Phi_l}_{\vec{z}}
M^{\Phi_{l-1}}
\cdots 
M^{\Phi_1}.
\end{eqnarray}
Therefore
\begin{eqnarray}
f_{C_l}(\vec{x})
=\vec{\alpha}\vec{f}_{C_l}(\vec{x})
=\vec{\alpha}D(\gamma(\vec{C}_l))
\vec{\tilde{f}}_{C_l}(\vec{x}),
\end{eqnarray}
where
$D(\gamma(C_l))=\diag(\gamma^{(\vec{z})}_p)_{\vec{z}}$, 
$\vec{\tilde{f}}_{C_l}(\vec{x})=m^{\Phi_{l}}\vec{\tilde{f}}_{C_{l}}(\vec{x})$, 
$\vec{\tilde{f}}_{C_1}(\vec{x})=m^{\Phi_{1}}\vec{f}_{I}(\vec{x})$.
It is easy to see that 
\begin{eqnarray}
\norm{D(C_l)m^{\Phi_l}}_{p,q}
=\left(\frac{1}{N}\sum_{\vec{z}}\gamma^{(\vec{z})}_{p}(C_l)^q
\right)^{\frac{1}{q}}.
\end{eqnarray}

\begin{lem}\label{lem:pathn}
For any $p^*>0$, the following statement holds for any $k$-depth quantum circuits
$\vec{C}_k=(\Phi_k, \Phi_{k-1},..., \Phi_1)$, 
\begin{eqnarray}
\mathbb{E}_{\vec{\epsilon}}
\frac{1}{m}\sup_{\vec{C}_{k}\in \mathcal{C}^{k, \vec{n}}}
\norm{\sum^m_{i=1}\epsilon_i\vec{\tilde{f}}_{C_{k}}(\vec{x}_i)}_{p^*}
\leq 4^{n_k\frac{1}{p^*}}\mathbb{E}_{\vec{\epsilon}}
\frac{1}{m}
\sup_{\vec{C}_{k-1}\in \mathcal{C}^{k-1, \vec{n}}}\norm{\sum^m_{i=1}\epsilon_i\vec{\tilde{f}}_{C_{k-1}}(\vec{x}_i)}_{p^*}.
\end{eqnarray}
\end{lem}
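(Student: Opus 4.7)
The plan is to peel off the top layer $\Phi_k$ by using the normalization that was built into $m^{\Phi_k}$, and to exploit the fact that the resulting matrix has unit $(p,p^*)$ group norm.

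First I would fix a Rademacher sign pattern $\vec{\epsilon}$ and a circuit $\vec{C}_k = (\Phi_k, \vec{C}_{k-1})$. Using the recursive identity $\vec{\tilde{f}}_{C_k}(\vec{x}) = m^{\Phi_k}\vec{\tilde{f}}_{C_{k-1}}(\vec{x})$, linearity gives
\begin{equation}
\sum_{i=1}^m \epsilon_i \vec{\tilde{f}}_{C_k}(\vec{x}_i) = m^{\Phi_k}\left(\sum_{i=1}^m \epsilon_i \vec{\tilde{f}}_{C_{k-1}}(\vec{x}_i)\right).
\end{equation}
Now I would apply Lemma \ref{lem:normpRaN1} with the choice $q = p^*$ to the $4^{n_k}\times 4^{n_{k-1}}$ matrix $m^{\Phi_k}$ and the vector $\vec{v} = \sum_i \epsilon_i \vec{\tilde{f}}_{C_{k-1}}(\vec{x}_i)$, obtaining
\begin{equation}
\Bigl\lVert m^{\Phi_k}\vec{v}\Bigr\rVert_{p^*} \leq 4^{n_k/p^*}\,\norm{m^{\Phi_k}}_{p,p^*}\,\norm{\vec{v}}_{p^*}.
\end{equation}

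The key observation, which I would verify as the second step, is that $\norm{m^{\Phi_k}}_{p,p^*} = 1$. This is immediate from the construction: it was already noted in the excerpt that every row of $m^{\Phi_k}$ has $\ell_p$ norm equal to $1$, and one checks this directly by using the recursion $(\gamma^{(\vec{z})}_p(\vec{C}_k))^p = \sum_{\vec{x}} |M^{\Phi_k}_{\vec{z}\vec{x}}|^p (\gamma^{(\vec{x})}_p(\vec{C}_{k-1}))^p$, which gives $\sum_{\vec{x}} |m^{\Phi_k}_{\vec{z}\vec{x},p}|^p = 1$ for every row index $\vec{z}$. Consequently, the $(p,p^*)$ group norm collapses to $\bigl(\frac{1}{4^{n_k}}\sum_{\vec{z}} 1\bigr)^{1/p^*} = 1$.

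With these two ingredients, the resulting bound $4^{n_k/p^*}\norm{\sum_i \epsilon_i \vec{\tilde{f}}_{C_{k-1}}(\vec{x}_i)}_{p^*}$ no longer depends on the top layer $\Phi_k$. Taking the supremum over $\vec{C}_k \in \mathcal{C}^{k,\vec{n}}$ therefore factorizes: the $\Phi_k$-part is vacuous, while the remaining supremum is over $\vec{C}_{k-1} \in \mathcal{C}^{k-1,\vec{n}}$. Averaging over $\vec{\epsilon}$ and dividing by $m$ completes the argument. No real obstacle arises; the only point that needs care is that even though $m^{\Phi_k}$ a priori depends on the whole subcircuit $\vec{C}_{k-1}$ through the normalizing factor $\gamma^{(\vec{x})}_p(\vec{C}_{k-1})$, its unit-row-norm property is layer-local, so the $(p,p^*)$ norm is uniformly $1$ and the decoupling of the suprema is legitimate.
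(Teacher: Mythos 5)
Your proof is correct and follows essentially the same route as the paper's: both peel off the top layer via $\vec{\tilde{f}}_{C_k}=m^{\Phi_k}\vec{\tilde{f}}_{C_{k-1}}$ and use the unit $\ell_p$ row norms of $m^{\Phi_k}$ together with a row-wise H\"older bound to eliminate $\Phi_k$ at the cost of the factor $4^{n_k/p^*}$, after which the suprema decouple exactly as you note. The paper obtains the factor through $\norm{\cdot}_{p^*}\le 4^{n_k/p^*}\norm{\cdot}_\infty$ rather than by invoking Lemma~\ref{lem:normpRaN1} with $q=p^*$ and $\norm{m^{\Phi_k}}_{p,p^*}=1$, but the underlying computation is identical.
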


\begin{proof}
This lemma comes from the fact that
\begin{eqnarray*}
\mathbb{E}_{\vec{\epsilon}}
\frac{1}{m}\sup_{\vec{C}_{k}\in \mathcal{C}^{k, \vec{n}}}
\norm{\sum^m_{i=1}\epsilon_i\vec{\tilde{f}}_{\vec{C}_{k}}(\vec{x}_i)}_{p^*}
&\leq&4^{n_k\frac{1}{p^*}}\mathbb{E}_{\vec{\epsilon}}
\frac{1}{m}\sup_{\vec{C}_{k}\in \mathcal{C}^{k, \vec{n}}}
\norm{\sum^m_{i=1}\epsilon_i\vec{\tilde{f}}_{C_{k}}(\vec{x}_i)}_{\infty}\\
&=&4^{n_k\frac{1}{p^*}}\mathbb{E}_{\vec{\epsilon}}
\frac{1}{m}\sup_{\vec{C}_{k-1}\in \mathcal{C}^{k-1, \vec{n}}}
\sup_{\vec{z}}\left|\sum^m_{i=1}\epsilon_im^{\Phi_{k}}_{\vec{z}}\vec{\tilde{f}}_{C_{k-1}}(\vec{x}_i)\right|\\
&\leq& 4^{n_k\frac{1}{p^*}}\mathbb{E}_{\vec{\epsilon}}
\frac{1}{m}
\sup_{\vec{C}_{k-1}\in \mathcal{C}^{k-1, \vec{n}}}\norm{\sum^m_{i=1}\epsilon_i\vec{\tilde{f}}_{C_{k-1}}(\vec{x}_i)}_{p^*}.
\end{eqnarray*}

\end{proof}

\begin{thm}\label{thm:main2duplicate}
Given the set of depth-$l$ quantum circuits with bounded path norm $\gamma_{p,q}$,
the Rademacher complexity 
 on $m$ independent samples $S=\set{\vec{x_1},\ldots,\vec{x}_m}$ satisfies the following bounds 
 
 (1) For $1\leq p\leq 2$, we have
\begin{eqnarray}
R_S(\mathcal{F}\circ \mathcal{C}^{l,\vec{n}}_{\gamma_{p,q}(\vec{C}_l)\leq \gamma})
\leq \gamma 4^{n_l\max\set{\frac{1}{p^*},\frac{1}{q}}}
\prod^{l-1}_{i=1}4^{n_i\frac{1}{p^*}}
\frac{\sqrt{\min\set{p^*, 8n_0}}}{\sqrt{m}}\norm{\vec{\alpha}}_p\max_i\norm{\vec{f}_I(\vec{x}_i)}_{p^*}.
\end{eqnarray}

(2) For $2<p<\infty$, we have
\begin{eqnarray}
R_S(\mathcal{F}\circ \mathcal{C}^{l,\vec{n}}_{\gamma_{p,q}(\vec{C}_l)\leq \gamma})
\leq \gamma 4^{n_l\max\set{\frac{1}{p^*},\frac{1}{q}}}
\prod^{l-1}_{i=1}4^{n_i\frac{1}{p^*}}
\frac{\sqrt{p^*}}{m^{1/p}}\norm{\vec{\alpha}}_p\max_i\norm{\vec{f}_I(\vec{x}_i)}_{p^*}.
\end{eqnarray}

\end{thm}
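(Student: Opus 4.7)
The plan is to combine the factorization $\vec{f}_{C_l}(\vec{x})=D(\gamma(\vec{C}_l))\vec{\tilde{f}}_{C_l}(\vec{x})$ recorded just before Lemma~\ref{lem:pathn} with the normalization-and-peel strategy used in the proofs of Theorems~\ref{thm:one} and~\ref{thm:main1}. I would first bound
\begin{equation*}
R_S(\mathcal{F}\circ\mathcal{C}^{l,\vec{n}}_{\gamma_{p,q}\leq\gamma})
\leq \gamma\,\mathbb{E}_{\vec{\epsilon}}\frac{1}{m}\sup_{\vec{C}_l\in\mathcal{C}^{l,\vec{n}}}
\frac{1}{\gamma_{p,q}(\vec{C}_l)}
\left|\vec{\alpha}\,D(\gamma(\vec{C}_l))\sum_{i=1}^{m}\epsilon_i\vec{\tilde{f}}_{C_l}(\vec{x}_i)\right|,
\end{equation*}
then apply H\"older's inequality with conjugate exponents $(p,p^*)$ to split off $\norm{\vec{\alpha}}_p$, leaving the $\ell_{p^*}$ norm of $D(\gamma(\vec{C}_l))\sum_i\epsilon_i\vec{\tilde{f}}_{C_l}(\vec{x}_i)$.

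Second, using the recursion $\vec{\tilde{f}}_{C_l}(\vec{x})=m^{\Phi_l}\vec{\tilde{f}}_{C_{l-1}}(\vec{x})$, the vector inside the $\ell_{p^*}$ norm factors as $D(\gamma(\vec{C}_l))\,m^{\Phi_l}\bigl(\sum_i\epsilon_i\vec{\tilde{f}}_{C_{l-1}}(\vec{x}_i)\bigr)$. Because each row of $m^{\Phi_l}$ has unit $\ell_p$ norm, the $\vec{z}$-th row of $D(\gamma(\vec{C}_l))m^{\Phi_l}$ has $\ell_p$ norm equal to $\gamma^{(\vec{z})}_p(\vec{C}_l)$, so $\norm{D(\gamma(\vec{C}_l))m^{\Phi_l}}_{p,q}=\gamma_{p,q}(\vec{C}_l)$. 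Invoking Lemma~\ref{lem:normpRaN1} with $N_1=4^{n_l}$ produces a factor $4^{n_l\max\{1/p^*,1/q\}}\gamma_{p,q}(\vec{C}_l)$, which exactly cancels the normalizing denominator $1/\gamma_{p,q}(\vec{C}_l)$ and leaves only a supremum over $\vec{C}_{l-1}$.

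Third, I would iterate Lemma~\ref{lem:pathn}, which peels off one inner layer at a time at cost $4^{n_k/p^*}$, and apply it $l-1$ times to reach
\begin{equation*}
\mathbb{E}_{\vec{\epsilon}}\frac{1}{m}\norm{\sum_{i=1}^m\epsilon_i\vec{f}_I(\vec{x}_i)}_{p^*},
\end{equation*}
accumulating the product $\prod_{i=1}^{l-1}4^{n_i/p^*}$. Finally, Lemma~\ref{lem:depn1} applied to $\vec{v}_i=\vec{f}_I(\vec{x}_i)\in\real^{4^{n_0}}$ yields the $\sqrt{\min\{p^*,8n_0\}}/\sqrt{m}$ factor for $1\leq p\leq 2$ and the $\sqrt{p^*}/m^{1/p}$ factor for $2<p<\infty$, giving both advertised bounds.

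The main obstacle is the asymmetric treatment of the outermost layer versus the inner ones: for $\Phi_l$ one must use the full $(p,q)$-group-norm machinery of Lemma~\ref{lem:normpRaN1} so that the $\gamma_{p,q}(\vec{C}_l)$ it produces cancels the denominator and converts the constraint $\gamma_{p,q}\leq\gamma$ into a clean prefactor; for inner layers only the weaker row-normalization $\norm{m^{\Phi_i}_{\vec{z}}}_p=1$ is available, which is exactly what drives the $4^{n_i/p^*}$ factors via the $\ell_{p^*}\to\ell_\infty$ conversion built into Lemma~\ref{lem:pathn}. Once the bookkeeping of indices and the cancellation between Lemma~\ref{lem:normpRaN1} and the normalizer is lined up correctly, the rest is a direct assembly of existing ingredients.
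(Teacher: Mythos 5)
Your proposal is correct and follows essentially the same route as the paper's proof: the same initial normalization by $\gamma/\gamma_{p,q}(\vec{C}_l)$, the same H\"older step to extract $\norm{\vec{\alpha}}_p$, the same observation that $\norm{D(\gamma(\vec{C}_l))m^{\Phi_l}}_{p,q}=\gamma_{p,q}(\vec{C}_l)$ so that Lemma~\ref{lem:normpRaN1} cancels the normalizer, followed by $l-1$ applications of Lemma~\ref{lem:pathn} and a final appeal to Lemma~\ref{lem:depn1}. Your closing remark about the asymmetry between the outermost layer and the inner layers accurately captures the structure of the paper's argument.
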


\begin{proof}

First we compute the following
\begin{eqnarray*}
R_S(\mathcal{F}\circ \mathcal{C}^{l,\vec{n}}_{\gamma_{p,q}(\vec{C}_l)\leq \gamma})
&=&
\mathbb{E}_{\vec{\epsilon}}
\frac{1}{m}\sup_{\vec{C}_l\in \mathcal{C}^{l,\vec{n}}_{\gamma_{p,q}(\vec{C}_l)\leq \gamma}}
\left| \sum^m_{i=1}\epsilon_i\vec{\alpha}\vec{f}_{C_l}(\vec{x}_i)\right|\\
&=&
\mathbb{E}_{\vec{\epsilon}}
\frac{1}{m}\sup_{\vec{C}_l\in \mathcal{C}^{l,\vec{n}}}
\frac{\gamma}{\gamma_{p,q}(\vec{C}_l)}
\left| \sum^m_{i=1}\epsilon_i\vec{\alpha}D(C_l)\vec{\tilde{f}}_{C_l}(\vec{x}_i)\right|
\\
&\leq&\norm{\vec{\alpha}}_p
\mathbb{E}_{\vec{\epsilon}}
\frac{1}{m}\sup_{\vec{C}_l\in \mathcal{C}^{l,\vec{n}}}\frac{\gamma}{\gamma_{p,q}(\vec{C}_l)}
\norm{\sum^m_{i=1}\epsilon_iD(C_l)\vec{\tilde{f}}_{C_l}(\vec{x}_i)}_{p^*}
\\
&=&\norm{\vec{\alpha}}_p
\mathbb{E}_{\vec{\epsilon}}
\frac{1}{m}\sup_{\vec{C}_l\in \mathcal{C}^{l,\vec{n}}}\frac{\gamma}{\gamma_{p,q}(\vec{C}_l)}
\norm{\sum^m_{i=1}\epsilon_iD(C_l)m^{\Phi_l}\vec{\tilde{f}}_{C_{l-1}}(\vec{x}_i)}_{p^*}
\\
&\leq&\norm{\vec{\alpha}}_p
\mathbb{E}_{\vec{\epsilon}}
\frac{1}{m}\sup_{\vec{C}_l\in \mathcal{C}^{l,\vec{n}}}\frac{\gamma}{\gamma_{p,q}(\vec{C}_l)}
4^{n_l\max\set{\frac{1}{p^*},\frac{1}{q}}}\norm{D(C_l)m^{\Phi_l}}_{p,q}
\norm{\sum^m_{i=1}\epsilon_i\vec{\tilde{f}}_{C_{l-1}}(\vec{x}_i)}_{p^*}\\
&=&\gamma\norm{\vec{\alpha}}_p 4^{n_l\max\set{\frac{1}{p^*},\frac{1}{q}}}
\mathbb{E}_{\vec{\epsilon}}
\frac{1}{m}\sup_{\vec{C}_l\in \mathcal{C}^{l,\vec{n}}}\norm{\sum^m_{i=1}\epsilon_i\vec{\tilde{f}}_{C_{l-1}}(\vec{x}_i)}_{p^*}
\\
&\leq&
\gamma \norm{\vec{\alpha}}_p4^{n_l\max\set{\frac{1}{p^*},\frac{1}{q}}}_l
\prod^{l-1}_{i=1}4^{n_i\frac{1}{p^*}}
\mathbb{E}_{\vec{\epsilon}}
\frac{1}{m}\norm{\sum^m_{i=1}\epsilon_i\vec{\tilde{f}}_{I}(\vec{x}_i)}_{p^*},
\end{eqnarray*}
where the last inequality comes from Lemma \ref{lem:pathn}. The theorem follows from this and Lemma \ref{lem:depn1}.

\end{proof}

\section{Deep unital quantum circuits}\label{apen:deep_unital}

In this section, let us consider  depth-$l$ unital quantum circuits, where
each layer of the quantum circuit is a quantum channel. Furthermore, we shall assume that 
the observable $H$ is traceless.
Unlike the previous section, we consider the $(p,q) $ norm of the modified representation matrix, where
$\mathbb{I}$, and only $\mathbb{I}$, is mapped to $\mathbb{I}$.

 \subsection{Modified multiplication \texorpdfstring{$(p,q)$}{(p,q)} depth-norm}
Let us define the modified multiplication  $(p,q)$ depth-norm for  depth-$l$ unital quantum circuits
$\vec{C}_l=(\Phi_l, \Phi_{l-1}, \cdots, \Phi_1)$
as follows
\begin{eqnarray}
\hat{\mu}_{p,q}(\vec{C}_l)
=\prod^{l}_{i=1}\norm{\hat{M}^{\Phi_i}}_{p,q}.
\end{eqnarray}

\begin{lem}
Given a depth-$l$ quantum circuit $\vec{C}_l=(U_l, U_{l-1}, \cdots, U_1)$, we have 

(1) (Faithfulness) For $0<p<2$, $q>0$: 
$\hat{\mu}_{p,q}(\vec{C}_l)\geq 1$, and 
 $\hat{\mu}_{p,q}(\vec{C}_l)=1$ iff 
 $\vec{C}_l$ is a Clifford circuit.

(1') (Faithfulness) For $p>2$, $q>0$:
$\hat{\mu}_{p,q}(\vec{C}_l)\leq 1$, and  $\hat{\mu}_{p,q}(\vec{C}_l)=1$ iff 
 $\vec{C}_l$ is a Clifford circuit.

 (2) (Invariance under Clifford circuit) For $p>0, q>0$, we have $\hat{\mu}_{p,q}(\vec{C}_1\circ \vec{C}_l\circ \vec{C}_2)=\hat{\mu}_{p,q}(\vec{C}_l)$ if $\vec{C}_1$, $\vec{C}_2$ are Clifford circuit.
\end{lem}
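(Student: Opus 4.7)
The plan is to reduce all three claims to the single-channel properties established in Corollary \ref{lem:hat_unitary}, exploiting the fact that $\hat{\mu}_{p,q}$ is defined as a product over layers. I would begin by recording the trivial multiplicative identity
\begin{eqnarray*}
\hat{\mu}_{p,q}(\vec{A}\circ\vec{B})=\hat{\mu}_{p,q}(\vec{A})\,\hat{\mu}_{p,q}(\vec{B}),
\end{eqnarray*}
which follows immediately from the definition $\hat{\mu}_{p,q}(\vec{C}_l)=\prod_i\norm{\hat{M}^{\Phi_i}}_{p,q}$, since concatenation just appends lists of layers. This is the exact analogue of Proposition \ref{prop:prop_mult_p}(1) with $M^{\Phi_i}$ replaced by $\hat{M}^{\Phi_i}$, and the argument is identical.

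For (1), I would invoke Corollary \ref{lem:hat_unitary}(1): for $0<p<2$ and $q>0$, each single-layer factor satisfies $\norm{\hat{M}^{U_i}}_{p,q}\geq 1$, so their product obeys $\hat{\mu}_{p,q}(\vec{C}_l)\geq 1$. Equality in a product of terms each bounded below by $1$ forces every factor to equal $1$, which by Corollary \ref{lem:hat_unitary}(1) is equivalent to every $U_i$ being Clifford, i.e., $\vec{C}_l$ being a Clifford circuit. Part (1') is the verbatim dual: for $p>2$, $q>0$, Corollary \ref{lem:hat_unitary}(1') yields $\norm{\hat{M}^{U_i}}_{p,q}\leq 1$ for each layer, the product inherits the same upper bound, and equality reduces layerwise to the Clifford characterization.

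For (2), two applications of the multiplicative identity give
\begin{eqnarray*}
\hat{\mu}_{p,q}(\vec{C}_1\circ\vec{C}_l\circ\vec{C}_2)
=\hat{\mu}_{p,q}(\vec{C}_1)\,\hat{\mu}_{p,q}(\vec{C}_l)\,\hat{\mu}_{p,q}(\vec{C}_2).
\end{eqnarray*}
Since every layer $V$ of $\vec{C}_1$ and $\vec{C}_2$ is Clifford, Corollary \ref{lem:hat_unitary} (together with the $p=2$ case, where $\hat{M}^U$ is orthogonal for every unitary $U$ and so $\norm{\hat{M}^V}_{2,q}=1$) gives $\norm{\hat{M}^V}_{p,q}=1$ for every $p,q>0$. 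Hence $\hat{\mu}_{p,q}(\vec{C}_1)=\hat{\mu}_{p,q}(\vec{C}_2)=1$ and the right-hand side collapses to $\hat{\mu}_{p,q}(\vec{C}_l)$, which is the invariance claim.

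The whole argument is essentially bookkeeping once Corollary \ref{lem:hat_unitary} is available; the only subtlety is that part (2) quantifies over all $p,q>0$, so one must remember to handle the borderline case $p=2$ by the orthogonality of $\hat{M}^U$ rather than by a strict faithfulness inequality.
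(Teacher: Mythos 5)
Your proposal is correct and follows essentially the same route as the paper, whose proof simply cites Corollary \ref{lem:hat_unitary} for the layerwise facts and Proposition \ref{prop:prop_mult_p} for multiplicativity under concatenation; you have merely written out the bookkeeping (product of factors each $\geq 1$ or $\leq 1$, equality forcing each factor to be $1$) that the paper leaves implicit. Your explicit handling of the $p=2$ borderline in part (2) is a small but welcome addition that the paper glosses over.
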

\begin{proof}
This lemma follows directly from Proposition  \ref{lem:hat_unitary} and \ref{prop:prop_mult_p}.
\end{proof}

\begin{lem}\label{lem:dep_normU1}
Given the set of depth-$l$ unital quantum circuit $\mathcal{C}^{l,\vec{n}}$
and the set of depth-$l$ unital quantum circuit $\mathcal{C}^{l,\vec{n'}}$, where $\vec{n}=(\vec{n}',n_l)$ and 
$\vec{n}'=(\vec{n}'',n_{l-1})$, then $ \forall \vec{\epsilon}\in \set{\pm1}^m$, we have 
\begin{eqnarray}
\sup_{\vec{C}_l\in \mathcal{C}^{l,\vec{n}}}
\frac{1}{\hat{\mu}_{p,q}(\vec{C}_l)}
\norm{ \sum^m_{i=1}\epsilon_i\hat{\vec{f}}_{C_l}(\vec{x}_i)}_{p^*}
\leq N^{\max\set{\frac{1}{p^*},\frac{1}{q}}}_{l}
\sup_{\vec{C}_{l-1}\in \mathcal{C}^{l,\vec{n}}}
\frac{1}{\hat{\mu}_{p,q}(\vec{C}_{l-1})}
\norm{ \sum^m_{i=1}\epsilon_i\hat{\vec{f}}_{C_{l-1}}(\vec{x}_i)}_{p^*},
\end{eqnarray}
where $N_{l}=4^{n_{l}}-1$.
Thus, 
\begin{eqnarray}
\sup_{\vec{C}_l\in \mathcal{C}^{l,\vec{n}}}
\frac{1}{\hat{\mu}_{p,q}(\vec{C}_l)}
\norm{ \sum^m_{i=1}\epsilon_i\hat{\vec{f}}_{C_l}(\vec{x}_i)}_{p^*}
\leq
\prod^{l}_{i=1}N^{\max\set{\frac{1}{p^*},\frac{1}{q}}}_i
\norm{\sum^m_{i=1}\epsilon_i\hat{\vec{f}}_{I}(\vec{x}_i)}_{p^*}.
\end{eqnarray}

\end{lem}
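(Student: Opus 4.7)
The plan is to mimic the proof of Lemma \ref{lem:dep_norm1}, but with the representation matrix $M^{\Phi_i}$ replaced by the modified representation matrix $\hat{M}^{\Phi_i}$, exploiting the fact that since every layer $\Phi_i$ is unital, the identity component of the Pauli decomposition is preserved and one can work entirely within the traceless sector indexed by $\vec{x}\neq\vec{0}$.

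First I would record the key recursion for unital circuits. Writing $\vec{C}_l=(\Phi_l,\vec{C}_{l-1})$ where $\vec{C}_{l-1}=(\Phi_{l-1},\ldots,\Phi_1)$, unitality together with the identity $\hat{\vec{f}}_{\Phi}(\vec{x})=\hat{M}^{\Phi}\hat{\vec{f}}_I(\vec{x})$ established in Appendix \ref{apen:single_unital} extends by composition to
\begin{equation}
\hat{\vec{f}}_{C_l}(\vec{x}_i)=\hat{M}^{\Phi_l}\,\hat{\vec{f}}_{C_{l-1}}(\vec{x}_i).
\end{equation}
Moreover, by definition of $\hat{\mu}_{p,q}$ we have the multiplicative factorization $\hat{\mu}_{p,q}(\vec{C}_l)=\norm{\hat{M}^{\Phi_l}}_{p,q}\,\hat{\mu}_{p,q}(\vec{C}_{l-1})$, which is what will allow cancellation at each step.

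Next I would split the supremum over depth-$l$ circuits as a supremum over the last layer $\Phi_l$ followed by a supremum over the depth-$(l-1)$ tail $\vec{C}_{l-1}$, push the linear map $\hat{M}^{\Phi_l}$ outside the Rademacher sum, and apply Lemma \ref{lem:normpRaN1} to the $N_l\times N_{l-1}$ matrix $\hat{M}^{\Phi_l}$ to obtain
\begin{equation}
\norm{\textstyle\sum_i \epsilon_i \hat{M}^{\Phi_l}\hat{\vec{f}}_{C_{l-1}}(\vec{x}_i)}_{p^*}
\leq N_l^{\max\{1/p^*,1/q\}}\norm{\hat{M}^{\Phi_l}}_{p,q}
\norm{\textstyle\sum_i\epsilon_i\hat{\vec{f}}_{C_{l-1}}(\vec{x}_i)}_{p^*}.
\end{equation}
Dividing by $\hat{\mu}_{p,q}(\vec{C}_l)=\norm{\hat{M}^{\Phi_l}}_{p,q}\hat{\mu}_{p,q}(\vec{C}_{l-1})$ eliminates the $\norm{\hat{M}^{\Phi_l}}_{p,q}$ factor, and taking the supremum over $\Phi_l$ separately from the tail gives the first, one-step inequality of the lemma.

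For the global bound, I would simply iterate the one-step inequality $l$ times, stripping off one layer at each stage, until the tail reduces to the identity circuit for which $\hat{\vec{f}}_{C_0}(\vec{x}_i)=\hat{\vec{f}}_I(\vec{x}_i)$; the prefactors $N_i^{\max\{1/p^*,1/q\}}$ accumulate into the stated product $\prod_{i=1}^{l} N_i^{\max\{1/p^*,1/q\}}$. The only subtle point, and the one I would take care with, is checking that Lemma \ref{lem:normpRaN1} is applied with the correct row-count $N_1$ at each stage, namely $N_i=4^{n_i}-1$ coming from the dimension of the output-side traceless Pauli sector of $\Phi_i$; once that bookkeeping is right, the proof is otherwise completely parallel to the non-unital case, and I do not anticipate any real obstacle beyond this dimensional bookkeeping.
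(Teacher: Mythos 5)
Your proposal is correct and follows essentially the same route as the paper's own proof: both peel off the last layer via $\hat{\vec{f}}_{C_l}(\vec{x}_i)=\hat{M}^{\Phi_l}\hat{\vec{f}}_{C_{l-1}}(\vec{x}_i)$, cancel $\norm{\hat{M}^{\Phi_l}}_{p,q}$ against the multiplicative factorization of $\hat{\mu}_{p,q}$, invoke Lemma \ref{lem:normpRaN1} with row-count $N_l=4^{n_l}-1$, and iterate to obtain the product bound. No gaps.
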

\begin{proof}
This is because

\begin{eqnarray*}
\sup_{\vec{C}_l\in \mathcal{C}^{l,\vec{n}}}
\frac{1}{\hat{\mu}_{p,q}(\vec{C}_l)}
\norm{ \sum^m_{i=1}\epsilon_i\hat{\vec{f}}_{C_l}(\vec{x}_i)}_{p^*}
&=&\sup_{\vec{C}_{l-1}\in \mathcal{C}^{l-1,\vec{n}'}}
\frac{1}{\hat{\mu}_{p,q}(\vec{C}_{l-1})\norm{\hat{M}^{\Phi_l}}_{p,q}}
\norm{ \sum^m_{i=1}\epsilon_i\hat{M}^{\Phi_l}\hat{\vec{f}}_{C_{l-1}}(\vec{x}_i)}_{p^*}\\
&\leq& N^{\max\set{\frac{1}{p^*},\frac{1}{q}}}_{l}
\sup_{\vec{C}_{l-1}\in \mathcal{C}^{l-1,\vec{n}'}}
\frac{1}{\hat{\mu}_{p,q}(\vec{C}_{l-1})}
\norm{\sum^m_{i=1}\epsilon_i\hat{\vec{f}}_{C_{l-1}}(\vec{x}_i)}_{p^*},
\end{eqnarray*}
where the inequality follows from Lemma \ref{lem:normpRaN1}.

\end{proof}

\begin{thm}\label{thm:mainU}
Given the set of depth-$l$ unital quantum circuits  with bounded depth-norm $\hat{\mu}_{p,q}$,
the Rademacher complexity 
 on $m$ samples $S=\set{\vec{x_1},\ldots,\vec{x}_m}$ satisfies the following bounds 
 
 (1) For $1\leq p\leq 2$, we have
\begin{eqnarray}
R_S(\mathcal{F}\circ \mathcal{C}^{l,\vec{n}}_{\hat{\mu}_{p,q}\leq \mu})
\leq \mu \prod^l_{i=1}N^{\max\set{\frac{1}{p^*},\frac{1}{q}}}_i
\frac{\sqrt{\min\set{p^*, 8n_0}}}{\sqrt{m}}\norm{\hat{\vec{\alpha}}}_p\max_i\norm{\hat{\vec{f}}_I(\vec{x}_i)}_{p^*}.
\end{eqnarray}

(2) For $2<p<\infty$, we have
\begin{eqnarray}
R_S(\mathcal{F}\circ \mathcal{C}^{l,\vec{n}}_{\hat{\mu}_{p,q}\leq \mu})
\leq \mu \prod^l_{i=1}N^{\max\set{\frac{1}{p^*},\frac{1}{q}}}_i
\frac{\sqrt{p^*}}{m^{1/p}}\norm{\hat{\vec{\alpha}}}_p\max_i\norm{\hat{\vec{f}}_I(\vec{x}_i)}_{p^*}.
\end{eqnarray}

where $N_i=4^{n_i}-1$.
\end{thm}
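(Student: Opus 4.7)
The plan is to mirror the argument used in the proof of Theorem \ref{thm:two} (the single-channel unital case) but now iterate across the $l$ layers using the recursive bound provided by Lemma \ref{lem:dep_normU1}. Since $H$ is assumed traceless, its representation vector has the form $\vec{\alpha} = (0, \hat{\vec{\alpha}})$, so that for every depth-$l$ unital circuit $\vec{C}_l$ we have $f_{C_l}(\vec{x}) = \hat{\vec{\alpha}} \hat{\vec{f}}_{C_l}(\vec{x})$ with $\hat{\vec{f}}_{C_l}(\vec{x}) = \hat{M}^{C_l} \hat{\vec{f}}_I(\vec{x})$. This is the input into which the recursion will feed.

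First, I would rewrite the supremum defining the Rademacher complexity in the rescaled form
\begin{equation*}
R_S(\mathcal{F}\circ \mathcal{C}^{l,\vec{n}}_{\hat{\mu}_{p,q}\leq \mu})
= \mathbb{E}_{\vec{\epsilon}} \frac{1}{m}\sup_{\vec{C}_l \in \mathcal{C}^{l,\vec{n}}} \frac{\mu}{\hat{\mu}_{p,q}(\vec{C}_l)} \left| \sum_{i=1}^m \epsilon_i \hat{\vec{\alpha}} \hat{\vec{f}}_{C_l}(\vec{x}_i) \right|,
\end{equation*}
and then apply H\"older's inequality to peel off $\hat{\vec{\alpha}}$, yielding a factor $\norm{\hat{\vec{\alpha}}}_p$ and reducing the quantity of interest to $\mu \norm{\hat{\vec{\alpha}}}_p \, \mathbb{E}_{\vec{\epsilon}}\frac{1}{m}\sup_{\vec{C}_l} \hat{\mu}_{p,q}(\vec{C}_l)^{-1} \norm{\sum_i \epsilon_i \hat{\vec{f}}_{C_l}(\vec{x}_i)}_{p^*}$.

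Next, I would apply Lemma \ref{lem:dep_normU1} inductively $l$ times. Each application strips off one layer $\Phi_k$ at the cost of a multiplicative factor $N_k^{\max\{1/p^*,\,1/q\}}$, using that $\hat{\vec{f}}_{C_k}(\vec{x}) = \hat{M}^{\Phi_k}\hat{\vec{f}}_{C_{k-1}}(\vec{x})$ together with the $(p,q)$-group-norm estimate in Lemma \ref{lem:normpRaN1}. After all $l$ iterations, the supremum over circuits has been entirely removed, and what remains is the quantity $\mathbb{E}_{\vec{\epsilon}} \frac{1}{m} \norm{\sum_i \epsilon_i \hat{\vec{f}}_I(\vec{x}_i)}_{p^*}$ multiplied by the prefactor $\mu \norm{\hat{\vec{\alpha}}}_p \prod_{i=1}^l N_i^{\max\{1/p^*,\,1/q\}}$.

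Finally, I would invoke Lemma \ref{lem:depn1} to bound the residual Rademacher expectation, which produces $\sqrt{\min\{p^*,8n_0\}}/\sqrt{m}$ for $1\leq p\leq 2$ and $\sqrt{p^*}/m^{1/p}$ for $2<p<\infty$, each multiplied by $\max_i \norm{\hat{\vec{f}}_I(\vec{x}_i)}_{p^*}$. Combining with $\norm{\hat{\vec{\alpha}}}_p$ assembles $\hat{K}_p(S,H)$ as defined in Eq.~\eqref{eq:KPSHhat}, giving exactly the stated bounds. Since Lemmas \ref{lem:dep_normU1} and \ref{lem:depn1} do all the analytic heavy lifting, the only subtlety is verifying that the tracelessness of $H$ (equivalently $\alpha_{\vec{0}}=0$) is what makes it legitimate to work entirely with the modified representation vectors and matrices; this is the same observation that justifies the single-channel proof of Theorem \ref{thm:two}, so no genuinely new obstacle arises.
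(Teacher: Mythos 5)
Your proposal is correct and follows essentially the same route as the paper: rescale the supremum by $\mu/\hat{\mu}_{p,q}(\vec{C}_l)$, apply H\"older to extract $\norm{\hat{\vec{\alpha}}}_p$, iterate Lemma \ref{lem:dep_normU1} to peel off the $l$ layers at the cost of $\prod_{i=1}^l N_i^{\max\{1/p^*,1/q\}}$, and finish with Lemma \ref{lem:depn1}. The only nitpick is that your first display should be an inequality rather than an equality, since enlarging the supremum to all of $\mathcal{C}^{l,\vec{n}}$ while inserting the factor $\mu/\hat{\mu}_{p,q}(\vec{C}_l)\geq 1$ only gives an upper bound.
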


\begin{proof}
The theorem follows from
\begin{eqnarray*}
R_S(\mathcal{F}\circ \mathcal{C}^{l,\vec{n}}_{\hat{\mu}_{p,q}\leq \mu})
&=&
\mathbb{E}_{\vec{\epsilon}}
\frac{1}{m}\sup_{\vec{C}_l\in \mathcal{C}^{l,\vec{n}}_{\hat{\mu}_{p,q}\leq \mu}}
\left| \sum^m_{i=1}\epsilon_i\hat{\vec{\alpha}}\hat{\vec{f}}_{C_l}(\vec{x}_i)\right|\\
&\leq&
\mathbb{E}_{\vec{\epsilon}}
\frac{1}{m}\sup_{\vec{C}_l\in \mathcal{C}^{l,\vec{n}}}
\frac{\mu}{\hat{\mu}_{p,q}(\vec{C}_l)}
\left| \sum^m_{i=1}\epsilon_i\hat{\vec{\alpha}}\hat{\vec{f}}_{C_l}(\vec{x}_i)\right|\\
&=&\mu\mathbb{E}_{\vec{\epsilon}}
\frac{1}{m}\sup_{\vec{C}_l\in \mathcal{C}^{l,\vec{n}}}
\frac{1}{\hat{\mu}_{p,q}(\vec{C}_l)}
\left| \sum^m_{i=1}\epsilon_i\hat{\vec{\alpha}}\hat{\vec{f}}_{C_l}(\vec{x}_i)\right|\\
&\leq&
\mu\norm{\hat{\vec{\alpha}}}_p\mathbb{E}_{\vec{\epsilon}}
\frac{1}{m}\sup_{\vec{C}_l\in \mathcal{C}^{l,\vec{n}}}
\frac{1}{\hat{\mu}_{p,q}(\vec{C}_l)}
\norm{\sum^m_{i=1}\epsilon_i\hat{\vec{f}}_{C_l}(\vec{x}_i)}_{p^*}
\\
&\leq&\mu \norm{\hat{\vec{\alpha}}}_p\prod^{l}_iN^{\max\set{\frac{1}{p^*},\frac{1}{q}}}_i\mathbb{E}_{\vec{\epsilon}}
\frac{1}{m}
\norm{\sum^m_{i=1}\epsilon_i\hat{\vec{f}}_{I}(\vec{x}_i)}_{p^*},
\end{eqnarray*}
where the second inequality follows from Lemma \ref{lem:dep_normU1}. 
Using Lemma \ref{lem:depn1}, we get the results of theorem theorem.
\end{proof}

If we take $p=1, q=\infty$, 
then we have the following results directly from the previous results.
\begin{prop}
Given the set of depth-$l$ unital quantum circuits with bounded $\hat{\mu}_{1,\infty}$ norm,
 the Rademacher complexity 
 on $m$ samples $S=\set{\vec{x_1},\ldots,\vec{x}_m}$ satisfies the following bounds

\begin{eqnarray}
R_S(\mathcal{F}\circ \mathcal{C}^{l,\vec{n}}_{\hat{\mu}_{1,\infty}\leq \mu})
\leq \mu
\frac{\sqrt{ 8n_0}}{\sqrt{m}}\norm{\hat{\vec{\alpha}}}_p\max_i\norm{\hat{\vec{f}}_I(\vec{x}_i)}_{\infty}.
\end{eqnarray}
\end{prop}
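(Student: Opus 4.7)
The plan is to obtain this proposition as a direct corollary of Theorem \ref{thm:mainU} by specializing to the parameter choice $p=1$, $q=\infty$. Since the proposition explicitly names the norm $\hat{\mu}_{1,\infty}$, no new machinery should be needed beyond carefully substituting into the bound already established for the generic $(p,q)$ case.

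Concretely, the first step is to compute the Hölder conjugate, giving $p^{*}=\infty$ when $p=1$. The second step is to evaluate the exponent that controls the width dependence: since $p^{*}=\infty$ and $q=\infty$, one has $\max\{1/p^{*},\,1/q\}=0$, so each factor $N_i^{\max\{1/p^{*},1/q\}}=N_i^{0}=1$ and the product $\prod_{i=1}^{l} N_i^{\max\{1/p^{*},1/q\}}$ collapses to $1$. The third step is to evaluate the sample-size factor: since $p=1\leq 2$ we are in case (1) of Theorem \ref{thm:mainU}, and $\min\{p^{*},\,8n_0\}=\min\{\infty,\,8n_0\}=8n_0$, yielding the coefficient $\sqrt{8n_0}/\sqrt{m}$. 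The fourth step is to identify the remaining data-dependent factors $\|\hat{\vec{\alpha}}\|_{p}$ and $\max_i \|\hat{\vec{f}}_I(\vec{x}_i)\|_{p^{*}}$ with $\|\hat{\vec{\alpha}}\|_{1}$ and $\max_i \|\hat{\vec{f}}_I(\vec{x}_i)\|_{\infty}$ under the substitution $p=1$. Assembling these four substitutions directly produces the stated inequality.

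I do not anticipate a genuine obstacle here, since the proposition is a textbook specialization of an already-proved theorem; the only conceptual content is the observation that the pair $(p,q)=(1,\infty)$ is exactly the choice that kills the $N_i^{\max\{1/p^{*},1/q\}}$ factor and thereby removes the potentially exponential dependence of the Rademacher bound on the width vector $\vec{n}$, replacing it by the $\sqrt{n_0}$ contribution coming from Lemma \ref{lem:depn1}. Thus the proof reduces to verifying that the Rademacher-chaining step used inside Theorem \ref{thm:mainU} survives the limit $p^{*}\to\infty$, which it does because the inner $\ell_{p^{*}}\to\ell_{\infty}$ inequality becomes an equality in that limit and the Massart-type estimate of Lemma \ref{lem:depn1} then supplies the $\sqrt{8n_0/m}$ factor for $p=1$.
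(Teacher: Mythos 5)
Your proposal matches the paper's proof exactly: the paper obtains this proposition ``directly from the previous results'' by specializing Theorem~\ref{thm:mainU} to $(p,q)=(1,\infty)$, which is precisely your substitution ($p^{*}=\infty$, so $\max\{1/p^{*},1/q\}=0$ kills the $N_i$ factors and $\min\{p^{*},8n_0\}=8n_0$ gives the $\sqrt{8n_0/m}$ prefactor). Note only that the $\norm{\hat{\vec{\alpha}}}_p$ in the stated bound should read $\norm{\hat{\vec{\alpha}}}_1$ under this specialization, as you correctly observe.
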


Hence, consider a depth-$l$ variational unitary quantum circuit on $n$ qubits with parameters $\vec{\theta}$ and a fixed structure 
$\mathcal{A}$, where each layer $U_i(\vec{\theta})=\ot^{k_i}_{j=1}U^{(j)}_i(\vec{\theta}_j)$. Then 
based  on the properties of $(p,q)$ norm of modified matrix $\hat{M}$ of unitary channels,
\begin{eqnarray}
\mu_i:=\sup_{\vec{\theta}_i}\norm{\hat{M}^{U_i(\vec{\theta})}}_{1,\infty}
=\sup_{\vec{\theta}}\prod_j\norm{\hat{M}^{U^{(j)}_i(\vec{\theta_j})}}_{1,\infty}
=\prod_j\sup_{\vec{\theta}_j}\norm{\hat{M}^{U^{(j)}_i(\vec{\theta_j})}}_{1,\infty}
=\prod_j\mu^{(j)}_i.
\end{eqnarray}
Therefore, for any such depth-$l$ variational quantum circuits with fixed structure, we have 
$\hat{\mu}_{1,\infty}\leq \prod_i \prod_j\mu^{(j)}_i$ and each $\mu^{(j)}_i\geq 1$.

\begin{cor}
 The Rademacher complexity of the quantum circuits class of depth-$l$ variational quantum circuits with fixed structure
is bounded by 
\begin{eqnarray}
R_S(\mathcal{F}\circ \mathcal{C}^{l,n}_{\mathcal{A},\vec{\theta}})
\leq \prod_i \prod_j\mu^{(j)}_i\frac{\sqrt{ 8n}}{\sqrt{m}}\norm{\hat{\vec{\alpha}}}_p\max_i\norm{\vec{f}_I(\vec{x}_i)}_{\infty}.
\end{eqnarray}
\end{cor}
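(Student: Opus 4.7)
The plan is to reduce this corollary to the preceding Proposition (the $p=1, q=\infty$ case of Theorem \ref{thm:mainU}) by showing that every circuit in $\mathcal{C}^{l,n}_{\mathcal{A},\vec{\theta}}$ automatically lies in the class $\mathcal{C}^{l,\vec{n}}_{\hat{\mu}_{1,\infty}\leq \mu}$ with $\mu=\prod_i\prod_j\mu_i^{(j)}$. Since the Rademacher bound from the Proposition is uniform over the choice of $\vec{\theta}$ (it only depends on the resource bound), once the resource bound is established the statement follows immediately.

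First, I would establish the tensor-factorization of $\|\hat{M}^{U_i(\vec{\theta})}\|_{1,\infty}$ across the single-qubit / local factors. Writing $U_i(\vec{\theta})=\bigotimes_{j=1}^{k_i} U_i^{(j)}(\vec{\theta}_j)$, Proposition~\ref{prop:hat_unitary}(2) gives $\|\hat{M}^{U_i(\vec{\theta})}\|_{1,\infty}=\|\bigotimes_j \hat{M}^{U_i^{(j)}(\vec{\theta}_j)}\|_{1,\infty}$, and Lemma~\ref{lem:multi_pq} (the multiplicativity of the $(p,q)$ group norm under tensor products) turns the right-hand side into $\prod_j\|\hat{M}^{U_i^{(j)}(\vec{\theta}_j)}\|_{1,\infty}$. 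Taking the supremum over $\vec{\theta}_j$ layer by layer, I obtain
\begin{equation}
\sup_{\vec{\theta}}\|\hat{M}^{U_i(\vec{\theta})}\|_{1,\infty}=\prod_{j}\mu_i^{(j)},
\end{equation}
which is exactly the identity already stated just before the corollary.

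Next, from the definition $\hat{\mu}_{1,\infty}(\vec{C}_l)=\prod_{i=1}^{l}\|\hat{M}^{U_i(\vec{\theta})}\|_{1,\infty}$, combining the above factorization across all $l$ layers yields
\begin{equation}
\hat{\mu}_{1,\infty}(\vec{C}_l)\leq\prod_{i=1}^{l}\prod_{j=1}^{k_i}\mu_i^{(j)}=:\mu.
\end{equation}
This inclusion $\mathcal{C}^{l,n}_{\mathcal{A},\vec{\theta}}\subseteq \mathcal{C}^{l,\vec{n}}_{\hat{\mu}_{1,\infty}\leq \mu}$ implies $R_S(\mathcal{F}\circ \mathcal{C}^{l,n}_{\mathcal{A},\vec{\theta}})\leq R_S(\mathcal{F}\circ \mathcal{C}^{l,\vec{n}}_{\hat{\mu}_{1,\infty}\leq \mu})$ since taking the supremum over a larger set only increases the Rademacher complexity.

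Finally, I would invoke the Proposition immediately preceding the corollary, which bounds $R_S(\mathcal{F}\circ \mathcal{C}^{l,\vec{n}}_{\hat{\mu}_{1,\infty}\leq \mu})$ by $\mu\cdot\frac{\sqrt{8n_0}}{\sqrt{m}}\|\hat{\vec{\alpha}}\|_1\max_i\|\hat{\vec{f}}_I(\vec{x}_i)\|_{\infty}$; substituting $\mu=\prod_i\prod_j\mu_i^{(j)}$ and using $n_0=n$ (all layers have width $n$ in this variational setting) yields the claimed bound. The only mild subtlety is the appeal to Proposition~\ref{prop:hat_unitary}(2) specifically in the $q=\infty$ regime (where the norm of $\hat{M}^{U_1\otimes U_2}$ equals that of $\hat{M}^{U_1}\otimes\hat{M}^{U_2}$ for all unitaries, not only Clifford ones); for other $q$ this equality would degrade to an inequality and the proof would no longer close in one line, which is why the $(1,\infty)$ choice is essential for this corollary.
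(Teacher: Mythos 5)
Your proof is correct and follows essentially the same route as the paper: the paper likewise derives $\sup_{\vec{\theta}}\norm{\hat{M}^{U_i(\vec{\theta})}}_{1,\infty}=\prod_j\mu_i^{(j)}$ from the tensor-multiplicativity of the $(1,\infty)$ group norm, concludes $\hat{\mu}_{1,\infty}\leq\prod_i\prod_j\mu_i^{(j)}$, and invokes the preceding proposition (Theorem~\ref{thm:mainU} at $p=1$, $q=\infty$, where $N_i^{\max\{1/p^*,1/q\}}=1$). One minor quibble with your closing remark: for $0<p<2$ and finite $q$, Proposition~\ref{prop:hat_unitary}(1) gives $\norm{\hat{M}^{U_1\otimes U_2}}_{p,q}\leq\norm{\hat{M}^{U_1}}_{p,q}\norm{\hat{M}^{U_2}}_{p,q}$, which is still the favorable direction for an upper bound --- the real reason $(1,\infty)$ is essential is that it kills the $\prod_i N_i^{\max\{1/p^*,1/q\}}$ prefactor in the proposition being invoked.
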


\subsection{Modified summation \texorpdfstring{$(p,q)$}{(p,q)} depth-norm}
Let us define the modified summation
$(p,q)$ depth-norm for depth-$l$ quantum circuits as follows
\begin{eqnarray}
\hat{\nu}^{r}_{p,q}(\vec{C}_l)
=\left( \frac{1}{l}\sum^l_{i=1}\norm{\hat{M}^{\Phi_i}}^r_{p,q}\right)^{\frac{1}{r}},
\end{eqnarray}
for any $r> 0$.

\begin{lem}
Given a depth-$l$ quantum circuit $\vec{C}_l=(U_l, U_{l-1}, \cdots, U_1)$, we have 

(1) (Faithfulness) For $0<p<2$, $q>0$, $r>0$: 
$\hat{\nu}^{(r)}_{p,q}(\vec{C}_l)\geq 1$, and 
 $\hat{\nu}^{(r)}_{p,q}(\vec{C}_l)=1$ iff 
 $\vec{C}_l$ is a Clifford circuit.

(1') (Faithfulness) For $p>2$, $q>0$, $r>0$:
$\hat{\nu}^{(r)}_{p,q}(\vec{C}_l)\leq 1$ and  $\hat{\nu}^{(r)}_{p,q}(\vec{C}_l)=1$ iff 
 $\vec{C}_l$ is a Clifford circuit.

 (2) (Nonincreasing under Clifford circuit) For $0<p<2, q>0,r>0$, we have $\hat{\nu}^{(r)}_{p,q}(\vec{C}_1\circ \vec{C}_l\circ \vec{C}_2)\leq \hat{\nu}^{(r)}_{p,q}(\vec{C}_l)$ if $\vec{C}_1$, $\vec{C}_2$ are Clifford circuit.
 
 (2')  (Nondecreasing under Clifford circuit)
  For $p>2, q>0,r>0$, we have $\hat{\nu}^{(r)}_{p,q}(\vec{C}_1\circ \vec{C}_l\circ \vec{C}_2)\geq \hat{\nu}^{(r)}_{p,q}(\vec{C}_l)$ if $\vec{C}_1$, $\vec{C}_2$ are Clifford circuit.

\end{lem}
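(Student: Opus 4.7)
The plan is to reduce every claim in the lemma to the layerwise properties of $\norm{\hat{M}^{U}}_{p,q}$ for a single unitary channel already established in Corollary~\ref{lem:hat_unitary}, together with the elementary fact that the arithmetic mean of numbers all $\geq 1$ (resp.\ all $\leq 1$) is itself $\geq 1$ (resp.\ $\leq 1$), with equality iff every summand equals $1$. The structure of $\hat{\nu}^{(r)}_{p,q}(\vec{C}_l)^r$ as a plain arithmetic mean of the quantities $\norm{\hat{M}^{U_i}}^r_{p,q}$ makes this reduction essentially mechanical.

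For the faithfulness statements (1) and (1'), I would expand
\[
\hat{\nu}^{(r)}_{p,q}(\vec{C}_l)^r = \frac{1}{l}\sum_{i=1}^{l}\norm{\hat{M}^{U_i}}^{r}_{p,q}
\]
and invoke Corollary~\ref{lem:hat_unitary}(1) when $0<p<2$: each summand is $\geq 1$, hence so is the mean, and equality propagates iff each $U_i$ is Clifford, which is exactly what it means for $\vec{C}_l$ to be a Clifford circuit. The symmetric argument using Corollary~\ref{lem:hat_unitary}(1') (where each summand is $\leq 1$) disposes of (1') when $p>2$. Taking $r$-th roots at the end preserves both the inequalities and the equality cases, since the map $x\mapsto x^{1/r}$ is strictly monotone on $(0,\infty)$.

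For the monotonicity properties (2) and (2'), let $l_1$ and $l_2$ denote the depths of the Clifford circuits $\vec{C}_1$ and $\vec{C}_2$. Since every layer of $\vec{C}_1,\vec{C}_2$ is Clifford, each such layer contributes exactly $1$ to the sum defining $\hat{\nu}^{(r)}_{p,q}$, so
\[
(l_1+l+l_2)\,\hat{\nu}^{(r)}_{p,q}\bigl(\vec{C}_1\circ\vec{C}_l\circ\vec{C}_2\bigr)^{r} \;=\; l_1 + l_2 + l\,\hat{\nu}^{(r)}_{p,q}(\vec{C}_l)^{r}.
\]
The right-hand side is a convex combination of $1$ and $\hat{\nu}^{(r)}_{p,q}(\vec{C}_l)^{r}$ (in the ratio $(l_1+l_2):l$), so it lies between them. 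When $0<p<2$, part (1) gives $\hat{\nu}^{(r)}_{p,q}(\vec{C}_l)^{r}\geq 1$, and pulling the value toward $1$ can only decrease it, yielding (2); when $p>2$, part (1') gives $\hat{\nu}^{(r)}_{p,q}(\vec{C}_l)^{r}\leq 1$, and the convex combination is pushed upward, yielding (2'). Taking $r$-th roots finishes both.

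There is no real conceptual obstacle here; the argument is essentially a bookkeeping exercise. The only point requiring care is tracking the depths $l_1,l_2$ of the flanking Clifford circuits correctly in the denominator of the arithmetic mean, and noting that the "invariance" that held for $\mu_{p,q}$ in Proposition~\ref{prop:prop_mult_p} is replaced here by genuine monotonicity (not equality), precisely because adding Clifford layers changes the denominator $l$ but not the sum of the nontrivial contributions. If one wanted strict monotonicity under appending a nontrivial Clifford block to a non-Clifford $\vec{C}_l$, that would also follow from the equality analysis above.
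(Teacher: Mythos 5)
Your proof is correct and takes essentially the same route as the paper: the paper's proof is a one-line citation of Corollary~\ref{lem:hat_unitary} and Proposition~\ref{prop:prop_sum_p}, which together encode exactly the layerwise reduction and the depth-weighted arithmetic-mean bookkeeping that you spell out explicitly. Your convex-combination argument for (2) and (2') is precisely the content of Proposition~\ref{prop:prop_sum_p}(1) extended to a three-fold concatenation, combined with the fact that Clifford layers contribute norm $1$.
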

\begin{proof}
This lemma comes directly from Lemma  \ref{lem:hat_unitary} and Proposition \ref{prop:prop_sum_p}.
\end{proof}

Based on the following relationship between 
$\hat{\nu}^{(r)}_{p,q}$ and $\hat{\mu}_{p,q}$
\begin{eqnarray}
\hat{\nu}^{(r)}_{p,q}(\vec{C}_l)
\geq \hat{\mu}_{p,q}(\vec{C}_l)^{1/l},
\end{eqnarray}
we have 
\begin{eqnarray}
\mathcal{C}^{l,\vec{n}}_{\hat{\nu}_{p,q}\leq \nu}\subseteq \mathcal{C}^{l,\vec{n}}_{\hat{\mu}_{p,q}\leq \nu^l}.
\end{eqnarray}

We obtain the following results directly from Theorem \ref{thm:mainU}.

\begin{thm}[Restatement of Theorem \ref{thm:Con_main2}]
Given the set of depth-$l$ unital quantum circuits with bounded $\hat{\nu}^{(r)}_{p,q}$ norm,
 the Rademacher complexity 
 on $m$ samples $S=\set{\vec{x_1},\ldots,\vec{x}_m}$ satisfies the following bounds 
 
 (1) For $1\leq p\leq 2$, we have
\begin{eqnarray}
R_S(\mathcal{F}\circ \mathcal{C}^{l,\vec{n}}_{\hat{\nu}^r_{p,q}\leq \nu})
\leq \nu^l \prod^l_{i=1}N^{\max\set{\frac{1}{p^*},\frac{1}{q}}}_i
\frac{\sqrt{\min\set{p^*, 8n_0}}}{\sqrt{m}}\norm{\hat{\vec{\alpha}}}_p\max_i\norm{\vec{f}_I(\vec{x}_i)}_{p^*}.
\end{eqnarray}

(2) For $2<p<\infty$, we have
\begin{eqnarray}
R_S(\mathcal{F}\circ \mathcal{C}^{l,\vec{n}}_{\hat{\nu}^{(r)}_{p,q}\leq \nu})
\leq \nu^l \prod^l_{i=1}N^{\max\set{\frac{1}{p^*},\frac{1}{q}}}_i
\frac{\sqrt{p^*}}{m^{1/p}}\norm{\hat{\vec{\alpha}}}_p\max_i\norm{\vec{f}_I(\vec{x}_i)}_{p^*}.
\end{eqnarray}

\end{thm}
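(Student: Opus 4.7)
The plan is to obtain this statement as a direct corollary of Theorem \ref{thm:mainU} (the bound for circuits with bounded $\hat{\mu}_{p,q}$) via the set inclusion already flagged just above the theorem. Concretely, I would not redo the peeling argument or the Khintchine--Kahane / Massart estimate; all of that heavy lifting lives inside Theorem \ref{thm:mainU}, and my job is only to convert a $\hat{\nu}^{(r)}_{p,q}$-constraint into a $\hat{\mu}_{p,q}$-constraint.

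First, I would invoke the power-mean--geometric-mean inequality on the nonnegative reals $\norm{\hat{M}^{\Phi_i}}_{p,q}$ for $i=1,\dots,l$: for every $r>0$,
\begin{eqnarray*}
\hat{\nu}^{(r)}_{p,q}(\vec{C}_l)
=\left(\frac{1}{l}\sum_{i=1}^l \norm{\hat{M}^{\Phi_i}}_{p,q}^r\right)^{1/r}
\geq \left(\prod_{i=1}^l \norm{\hat{M}^{\Phi_i}}_{p,q}\right)^{1/l}
=\hat{\mu}_{p,q}(\vec{C}_l)^{1/l}.
\end{eqnarray*}
This is exactly the inequality stated in the text just before the theorem, specialized from $\hat{\nu}_{p,q}$ to its $r$-generalization; it holds uniformly in $r>0$ by the standard monotonicity of power means.

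Second, I would translate this pointwise inequality into the set inclusion
\begin{eqnarray*}
\mathcal{C}^{l,\vec{n}}_{\hat{\nu}^{(r)}_{p,q}\leq \nu}
\;\subseteq\;\mathcal{C}^{l,\vec{n}}_{\hat{\mu}_{p,q}\leq \nu^l},
\end{eqnarray*}
since $\hat{\nu}^{(r)}_{p,q}(\vec{C}_l)\leq \nu$ forces $\hat{\mu}_{p,q}(\vec{C}_l)\leq \nu^l$. Composing with the fixed class $\mathcal{F}$ preserves this inclusion, and the Rademacher complexity is monotone under inclusion of hypothesis classes, so
\begin{eqnarray*}
R_S(\mathcal{F}\circ \mathcal{C}^{l,\vec{n}}_{\hat{\nu}^{(r)}_{p,q}\leq \nu})
\leq R_S(\mathcal{F}\circ \mathcal{C}^{l,\vec{n}}_{\hat{\mu}_{p,q}\leq \nu^l}).
\end{eqnarray*}

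Finally, I would apply Theorem \ref{thm:mainU} with the parameter $\mu$ there replaced by $\nu^l$. This immediately produces both advertised bounds: the factor $\mu$ in Theorem \ref{thm:mainU} becomes $\nu^l$, while the remaining factors $\prod_{i=1}^l N_i^{\max\{1/p^*,1/q\}}$, $\hat{K}_p(S,H)$, and the sample-size prefactor ($\sqrt{\min\{p^*,8n_0\}}/\sqrt{m}$ for $1\leq p\leq 2$, $\sqrt{p^*}/m^{1/p}$ for $2<p<\infty$) are unchanged. There is essentially no analytic obstacle here; the only point that requires verification is that the power-mean inequality is valid on the full range of $r>0$ quantified in the theorem, which is standard. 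If one wanted an independent proof that bypasses the inclusion, one could also re-run the peeling argument of Lemma \ref{lem:dep_normU1} with the weighting adjusted to handle the arithmetic (rather than multiplicative) constraint, but this reproduces the same bound.
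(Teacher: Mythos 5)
Your proposal is correct and follows essentially the same route as the paper: the appendix likewise derives $\hat{\nu}^{(r)}_{p,q}(\vec{C}_l)\geq \hat{\mu}_{p,q}(\vec{C}_l)^{1/l}$ from the arithmetic mean--geometric mean (power-mean) inequality, deduces the inclusion $\mathcal{C}^{l,\vec{n}}_{\hat{\nu}^{(r)}_{p,q}\leq \nu}\subseteq \mathcal{C}^{l,\vec{n}}_{\hat{\mu}_{p,q}\leq \nu^l}$, and then cites Theorem \ref{thm:mainU} with $\mu=\nu^l$. Your explicit remark that the power-mean bound holds uniformly for all $r>0$ is a small but welcome addition of rigor over the paper's statement.
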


If we take $p=1, q=\infty$, 
then we get the following results directly from the previous results.
\begin{prop}
Given the set of depth-$l$ unital quantum circuits with bounded  $\hat{\nu}^r_{1,\infty}$ norm, the Rademacher complexity 
 on $m$ samples $S=\set{\vec{x_1},\ldots,\vec{x}_m}$ satisfies the following bounds

\begin{eqnarray}
R_S(\mathcal{F}\circ \mathcal{C}^{l,\vec{n}}_{\hat{\nu}^r_{1,\infty}\leq \nu})
\leq \nu^l 
\frac{\sqrt{ 8n_0}}{\sqrt{m}}\norm{\hat{\vec{\alpha}}}_p\max_i\norm{\hat{\vec{f}}_I(\vec{x}_i)}_{\infty}.
\end{eqnarray}
\end{prop}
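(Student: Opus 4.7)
The plan is to derive this proposition as a direct specialization of the preceding theorem (the restatement of Theorem \ref{thm:Con_main2}) at $p=1$ and $q=\infty$. Since $p=1$ lies in the allowed regime $1\leq p\leq 2$, the first branch of that theorem applies and provides an upper bound of the form $\nu^l \prod_{i=1}^l N_i^{\max\set{1/p^*, 1/q}} \cdot \sqrt{\min\set{p^*, 8n_0}}/\sqrt{m} \cdot \hat{K}_p(S,H)$, with $N_i=4^{n_i}-1$. So no new conceptual ingredient is required; the work is purely in specializing each factor.

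Next, I would compute each ingredient under the choice $p=1$, $q=\infty$. The H\"older conjugate is $p^*=\infty$, so $1/p^*=0$, and combined with $1/q=0$ this gives $\max\set{1/p^*,1/q}=0$. Consequently $N_i^{\max\set{1/p^*,1/q}}=1$ for every $i$, and the entire product $\prod_{i=1}^l N_i^{\max\set{1/p^*,1/q}}$ collapses to $1$. Likewise, $\min\set{p^*,8n_0}=8n_0$, producing the factor $\sqrt{8n_0}/\sqrt{m}$. Finally, $\hat{K}_p(S,H)$ as defined in Eq.~\eqref{eq:KPSHhat} specializes at $p=1$ to $\norm{\hat{\vec{\alpha}}}_1 \max_i \norm{\hat{\vec{f}}_I(\vec{x}_i)}_{\infty}$, which matches the right-hand side of the claim (where the statement writes $\norm{\hat{\vec{\alpha}}}_p$ with $p=1$). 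Assembling these substitutions reproduces the claimed bound exactly.

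The one mild subtlety to verify is that $p=1$ is the boundary of the allowed regime, so I would quickly check that each supporting lemma used by the parent theorem remains valid in the limit $p^*\to\infty$. The critical input is Lemma \ref{lem:depn1}: at $p^*=\infty$ the proof falls into the $\ell_\infty$ branch, which is controlled by a Massart-style $\sqrt{\log N}$ union bound rather than the Khintchine--Kahane inequality, and so the factor $\sqrt{\min\set{p^*,8n_0}}/\sqrt{m}$ remains finite and reduces cleanly to $\sqrt{8n_0}/\sqrt{m}$. I do not anticipate any genuine obstacle beyond what is already handled in Theorem \ref{thm:Con_main2}; the proposition is essentially a bookkeeping corollary whose purpose is to highlight the attractive regime in which the exponential-in-width prefactor is eliminated entirely, leaving only a $\sqrt{n_0}$ dependence coming from the input width, together with the $\nu^l$ dependence on the average layer-wise magic.
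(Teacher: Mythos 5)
Your proposal is correct and matches the paper's own treatment: the paper derives this proposition as an immediate specialization of the preceding theorem (the restatement of Theorem \ref{thm:Con_main2}) at $p=1$, $q=\infty$, exactly as you do, with $\max\set{1/p^*,1/q}=0$ killing the width-dependent prefactor and $\min\set{p^*,8n_0}=8n_0$ giving the $\sqrt{8n_0/m}$ factor. Your additional check that Lemma \ref{lem:depn1} remains valid at the boundary $p^*=\infty$ via the $\ell_\infty$/Massart branch is a reasonable sanity check but introduces nothing beyond what the paper already relies on.
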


\subsection{Modified \texorpdfstring{$(p,q)$}{(p,q)} path-norm}
Let us define the modified $(p,q)$ path-norm for depth-$l$ circuits by 
\begin{eqnarray}
\hat{\gamma}_{p,q}(\vec{C}_l)
=\left(\frac{1}{N_l}\sum_{\vec{x}\neq \vec{0}}\gamma^{(\vec{x})}_p(\vec{C}_l)^q\right)^{1/q},
\end{eqnarray}
where $N_l=4^{n_l}-1$.

Let us define the normalized  representation matrix  of quantum channels in depth-$l$ quantum circuits $\vec{C}_l$ as follows
\begin{eqnarray}
\hat{m}^{\Phi_{k+1}}_{\vec{z}\vec{x},p}
=\frac{\hat{M}^{\Phi_{k+1}}_{\vec{z}\vec{x}}\gamma^{(\vec{x})}_{p}(\vec{C}_k)}{\gamma^{(\vec{z})}_{p}(\vec{C}_{k+1})}.
\end{eqnarray}

It is easy to see that for any row vector 
$\hat{m}^{\Phi_{k+1}}_{\vec{z},p}$, we have 
\begin{eqnarray}
\norm{\hat{m}^{\Phi_{k+1}}_{\vec{z},p}}_p
=\left(\sum_{\vec{x}}|\hat{m}^{\Phi_{k+1}}_{\vec{z}\vec{x},p}|^p\right)^{1/p}
=1, \forall \vec{z}.
\end{eqnarray}
Besides, it is easy to verify that
\begin{eqnarray}
\gamma^{(\vec{z})}_{p}(\vec{C}_l)\hat{m}^{\Phi_{l}}_{\vec{z}}\hat{m}^{\Phi_{l-1}}....\hat{m}^{\Phi_{1}}
=\hat{M}^{\Phi_l}_{\vec{z}}
\hat{M}^{\Phi_{l-1}}
\cdots 
\hat{M}^{\Phi_1},
\end{eqnarray}
Therefore,
\begin{eqnarray}
f_{C_l}(\vec{x})
=\hat{\vec{\alpha}}\hat{\vec{f}}_{C_l}(\vec{x})
=\hat{\vec{\alpha}}\hat{D}(\gamma(\vec{C}_l))
\hat{\vec{\tilde{f}}}_{C_l}(\vec{x}),
\end{eqnarray}
where $\hat{D}(\gamma(C_l))=\diag(\gamma^{\vec{z}}_p)_{\vec{z}\neq \vec{0}}$,
$\hat{\vec{\tilde{f}}}=m^{\Phi_{l}}\hat{\vec{\tilde{f}}}_{C_{l}}(\vec{x})$, and 
$\hat{\vec{\tilde{f}}}_{C_1}(\vec{x})=m^{\Phi_{1}}\hat{\vec{f}}_{I}(\vec{x})$.
It is easy to see that 
\begin{eqnarray}
\norm{\hat{D}(C_l)\hat{m}^{\Phi_l}}_{p,q}
=\left(\frac{1}{N_l}\sum_{\vec{z}\neq\vec{0}}\gamma^{(\vec{z})}_{p}(\vec{C}_l)^q
\right)^{\frac{1}{q}}.
\end{eqnarray}

Similarly to $\gamma_{p,q}$, $\hat{\gamma}_{p,q}$ satisfies the following property.
\begin{prop}
For any depth-$l$ unital quantum circuit $\vec{C}_l$, we have the following relationship:
For any $0<p\leq 1$, $q>0$, we have
\begin{eqnarray}
\hat{\gamma}_{p,q}(\vec{C}_l)
\geq \norm{\hat{M}^{C_l}}_{p,q}.
\end{eqnarray}
\end{prop}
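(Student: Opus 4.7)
The plan is to adapt the proof of Proposition \ref{prop:gamma_C} to the modified (hatted) block. Since $C_l = \Phi_l\circ\cdots\circ\Phi_1$ is unital, the representation matrix $M^{C_l}$ has the block form with a leading $1$, and hence $\hat{M}^{C_l}_{\vec{z},\vec{x}} = M^{C_l}_{\vec{z},\vec{x}}$ for all $\vec{z},\vec{x}\neq \vec{0}$. The strategy is to first establish a pointwise row-wise bound and then aggregate using the definition of the $(p,q)$ group norm.

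For the pointwise step, I would fix $\vec{z}\neq \vec{0}$ and estimate
\begin{align*}
\norm{\hat{M}^{C_l}_{\vec{z}}}_p^p \;=\; \sum_{\vec{x}\neq \vec{0}} |M^{C_l}_{\vec{z},\vec{x}}|^p,
\end{align*}
expanding the matrix product $M^{C_l}=M^{\Phi_l}\cdots M^{\Phi_1}$ inside the absolute value and invoking the subadditivity inequality $\left|\sum_i a_i\right|^p\leq \sum_i |a_i|^p$, valid for $0<p\leq 1$. This converts each $|M^{C_l}_{\vec{z},\vec{x}}|^p$ into a non-negative path sum over intermediate indices $\vec{v}_1,\ldots,\vec{v}_{l-1}$ with source $\vec{x}$ and sink $\vec{z}$. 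Summing over $\vec{x}\neq \vec{0}$ then yields a partial path sum bounded above by the full path sum $\gamma^{(\vec{z})}_p(\vec{C}_l)^p$, since we are merely discarding the non-negative $\vec{x}=\vec{0}$ contribution. The outcome is the row-wise inequality $\norm{\hat{M}^{C_l}_{\vec{z}}}_p\leq \gamma^{(\vec{z})}_p(\vec{C}_l)$.

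Once this pointwise bound is in hand, I would raise both sides to the $q$-th power, sum over $\vec{z}\neq \vec{0}$, and normalise by $1/N_l$ to obtain
\begin{align*}
\norm{\hat{M}^{C_l}}_{p,q}^q \;=\; \frac{1}{N_l}\sum_{\vec{z}\neq \vec{0}} \norm{\hat{M}^{C_l}_{\vec{z}}}_p^q \;\leq\; \frac{1}{N_l}\sum_{\vec{z}\neq \vec{0}} \gamma^{(\vec{z})}_p(\vec{C}_l)^q \;=\; \hat{\gamma}_{p,q}(\vec{C}_l)^q,
\end{align*}
which gives the claimed inequality $\norm{\hat{M}^{C_l}}_{p,q}\leq \hat{\gamma}_{p,q}(\vec{C}_l)$ upon taking $q$-th roots.

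I do not foresee any serious obstacle, as this is essentially the modified-block counterpart of Proposition \ref{prop:gamma_C} and its two technical ingredients---the identification $\hat{M}^{C_l}_{\vec{z},\vec{x}}=M^{C_l}_{\vec{z},\vec{x}}$ on the non-zero indices (a direct consequence of unitality) and the subadditivity of $t\mapsto t^p$ for $0<p\leq 1$---carry over unchanged. The only minor subtlety worth verifying is that the original definition of $\gamma^{(\vec{z})}_p(\vec{C}_l)$ sums over all source indices (including $\vec{0}$), so that dropping the $\vec{x}=\vec{0}$ terms from the expanded path sum indeed produces an inequality in the correct direction.
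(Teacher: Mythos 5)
Your proof is correct and follows essentially the same route as the paper, which simply argues that the hatted case is analogous to Proposition \ref{prop:gamma_C}: the row-wise bound $\norm{\hat{M}^{C_l}_{\vec{z}}}_p\leq\gamma^{(\vec{z})}_p(\vec{C}_l)$ via subadditivity of $t\mapsto t^p$ for $0<p\leq 1$, followed by aggregation over $\vec{z}\neq\vec{0}$. Your explicit note that discarding the non-negative $\vec{x}=\vec{0}$ source terms only strengthens the inequality is the one detail the paper leaves implicit, and it is handled correctly.
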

\begin{proof}
The proof is similar to that of Proposition \ref{prop:gamma_C}.
\end{proof}

\begin{lem}\label{lem:crit_path}
Given a depth-$l$ quantum circuit $\vec{C}_l=(U_l, U_{l-1}, \cdots, U_1)$, we have 

(1) (Faithfulness) For $0<p\leq 1$, $q>0$, $\hat{\gamma}_{p,q}(\vec{C}_l)\geq 1$, 
 $\gamma_{p,q}(\vec{C}_l)= 1$ iff $\vec{C}_l$ is Clifford.

(2)  (Invariance under Clifford circuit)
$
\hat{\gamma}_{p,q}(\vec{C}_1\circ \vec{C}_l\circ \vec{C}_2)
=\hat{\gamma}_{p,q}(\vec{C}_l)
$ if $\vec{C}_1$ and $\vec{C}_2$ are Clifford circuits.
\end{lem}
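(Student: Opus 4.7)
The plan is to prove Lemma \ref{lem:crit_path} by mirroring the approach used for Lemma \ref{lem:crit_path_duplicate}, with appropriate modifications to handle the restriction to Pauli indices $\vec{x}\neq\vec{0}$. For part (1), I would chain the two inequalities $\hat{\gamma}_{p,q}(\vec{C}_l)\geq \norm{\hat{M}^{C_l}}_{p,q}$ (from the proposition stated immediately before the lemma) and $\norm{\hat{M}^{C_l}}_{p,q}\geq 1$ (from Corollary \ref{lem:hat_unitary}, applicable because a composition of unitaries is unitary). Since $0<p\leq 1$ satisfies the hypothesis of the first inequality and $0<p\leq 1<2$ satisfies that of the second, combining them gives $\hat{\gamma}_{p,q}(\vec{C}_l)\geq 1$.

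For the equality case of part (1), I would argue more carefully by establishing a pointwise lower bound $\gamma^{(\vec{z})}_p(\vec{C}_l)\geq 1$ for every $\vec{z}\neq\vec{0}$, using the recursion $\gamma^{(\vec{z})}_p(\vec{C}_l)^p=\sum_{\vec{v}}|M^{\Phi_l}_{\vec{z}\vec{v}}|^p\gamma^{(\vec{v})}_p(\vec{C}_{l-1})^p$ together with the row-norm bound $\norm{M^{U_i}_{\vec{z}}}_p\geq \norm{M^{U_i}_{\vec{z}}}_2=1$ (valid for $0<p\leq 2$). With this pointwise bound, $\hat{\gamma}_{p,q}(\vec{C}_l)=1$ forces $\gamma^{(\vec{z})}_p(\vec{C}_l)=1$ for all $\vec{z}\neq\vec{0}$. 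A backward induction from the last layer then shows each $M^{U_i}_{\vec{z}}$ must have a single nonzero entry of absolute value $1$ (i.e., $U_i$ is Clifford): equality $\norm{M^{U_l}_{\vec{z}}}_p=1$ forces $\Phi_l$ to act on the $\vec{z}$-row as a signed indicator, after which one reindexes using the resulting permutation to recurse on $\vec{C}_{l-1}$.

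For part (2), I would compute $\gamma^{(\vec{z})}_p(\vec{C}_1\circ\vec{C}_l\circ\vec{C}_2)$ directly. Since each layer of $\vec{C}_1$ and $\vec{C}_2$ has a signed-permutation representation matrix (Lemma \ref{lem:cliff_M}), for Clifford $\vec{C}_2$ (applied first) each intermediate vertex $\vec{v}_{d_2}$ is reached by a unique $\vec{v}_0$ with path weight of absolute value $1$, and for Clifford $\vec{C}_1$ (applied last) each output $\vec{z}$ is reached from a unique intermediate vertex $\pi_1^{-1}(\vec{z})$ with weight $\pm 1$, where $\pi_1$ is the Clifford permutation. This collapses the outer sums to give $\gamma^{(\vec{z})}_p(\vec{C}_1\circ\vec{C}_l\circ\vec{C}_2)=\gamma^{(\pi_1^{-1}(\vec{z}))}_p(\vec{C}_l)$. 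Because Clifford unitaries are unital, $\pi_1$ fixes $\vec{0}$ and hence permutes $\{0,1,2,3\}^{n_l}\setminus\{\vec{0}\}$, so reindexing the sum $\frac{1}{N_l}\sum_{\vec{z}\neq\vec{0}}(\cdot)^q$ yields $\hat{\gamma}_{p,q}(\vec{C}_1\circ\vec{C}_l\circ\vec{C}_2)=\hat{\gamma}_{p,q}(\vec{C}_l)$.

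The main obstacle is part (1)'s equality direction: the quick chain through $\norm{\hat{M}^{C_l}}_{p,q}$ only shows that the \emph{composite} unitary $C_l$ is Clifford, not that each layer $U_i$ is. The careful backward induction on the recursion of $\gamma^{(\vec{z})}_p$ is essential to upgrade this to layer-wise Cliffordness, and one must be mindful that for $p=1$ the inequality $|\sum_i a_i|^p\leq \sum_i|a_i|^p$ becomes an identity, so the equality condition must be driven entirely by $\norm{M^{U_i}_{\vec{z}}}_p=1$ at each layer, and verified to propagate correctly once the outermost $\Phi_l$ is pinned down as a signed permutation.
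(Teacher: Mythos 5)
Your proof is correct. For the lower bound $\hat{\gamma}_{p,q}(\vec{C}_l)\geq 1$ you follow the paper's route exactly: the paper's proof of this lemma simply defers to Lemma \ref{lem:crit_path_duplicate}, whose argument is the chain $\gamma_{p,q}(\vec{C}_l)\geq\norm{M^{C_l}}_{p,q}\geq 1$, which you reproduce in the hatted setting via Proposition preceding the lemma and Corollary \ref{lem:hat_unitary}. Where you genuinely diverge is that you supply arguments for the two claims the paper glosses over. For the ``iff'' in part (1), the paper's proof of Lemma \ref{lem:crit_path_duplicate} says nothing, and as you correctly observe the chain through $\norm{\hat{M}^{C_l}}_{p,q}$ can only certify that the \emph{composite} unitary is Clifford, which is strictly weaker than layer-wise Cliffordness (e.g.\ $U_2=T^{\dagger}$, $U_1=T$ composes to the identity, yet one checks $\gamma^{(X)}_p = 2^{(2-p)/p}>1$ for $p<2$, so the path norm does see the non-Clifford layers); your backward induction on the recursion $\gamma^{(\vec{z})}_p(\vec{C}_l)^p=\sum_{\vec{v}}\lvert M^{U_l}_{\vec{z}\vec{v}}\rvert^p\,\gamma^{(\vec{v})}_p(\vec{C}_{l-1})^p$ together with the pointwise bound $\gamma^{(\vec{v})}_p\geq 1$ is exactly what is needed to pin down each layer. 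For part (2), the paper cites Proposition \ref{prop:prop_path_p}, which only gives the sub-multiplicative inequality $\gamma_{p,q}(\vec{C}_l\circ \vec{C}_m)\leq \gamma_{p,q}(\vec{C}_l)\gamma_{p,\infty}(\vec{C}_m)$ --- one direction for right-composition and nothing directly usable for left-composition; your direct collapse of the path sums through the signed-permutation layers of Lemma \ref{lem:cliff_M}, plus the observation that the induced permutation fixes $\vec{0}$ and therefore reindexes $\{\vec{z}\neq\vec{0}\}$ onto itself, is a cleaner and complete argument, and that last observation is precisely the extra ingredient the modified (hatted) version requires. In short: same skeleton for the inequality, but your proposal closes real gaps in the equality case and the invariance claim.
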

\begin{proof}
The proof is similar to that of Lemma \ref{lem:crit_path_duplicate}.
\end{proof}

\begin{thm}\label{thm:main2}
Given the set of depth-$l$ unital quantum circuits with bounded path norm $\gamma_{p,q}$, the Rademacher complexity 
 on $m$ samples $S=\set{\vec{x_1},\ldots,\vec{x}_m}$ satisfies the following bounds 
 
 (1) For $1\leq p\leq 2$, we have
\begin{eqnarray}
R_S(\mathcal{F}\circ \mathcal{C}^{l,\vec{n}}_{\hat{\gamma}_{p,q}(C_l)\leq \gamma})
\leq \gamma N^{\max\set{\frac{1}{p^*},\frac{1}{q}}}_l
\prod^{l-1}_{i=1}N^{\frac{1}{p^*}}_i
\frac{\sqrt{\min\set{p^*, 8n_0}}}{\sqrt{m}}\norm{\hat{\vec{\alpha}}}_p\max_i\norm{\vec{f}_I(\vec{x}_i)}_{p^*}.
\end{eqnarray}

(2) For $2<p<\infty$, we have
\begin{eqnarray}
R_S(\mathcal{F}\circ \mathcal{C}^{l,\vec{n}}_{\hat{\gamma}_{p,q}(C_l)\leq \gamma})
\leq \gamma N^{\max\set{\frac{1}{p^*},\frac{1}{q}}}_l
\prod^{l-1}_{i=1}N^{\frac{1}{p^*}}_i
\frac{\sqrt{p^*}}{m^{1/p}}\norm{\hat{\vec{\alpha}}}_p\max_i\norm{\vec{f}_I(\vec{x}_i)}_{p^*},
\end{eqnarray}
where $N_i=4^{n_i}-1$.
\end{thm}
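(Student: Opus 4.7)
The plan is to mirror the strategy of Theorem \ref{thm:main2duplicate} but replace every representation matrix, vector, and path norm by its modified counterpart. Since $H$ is traceless, $\alpha_{\vec 0}=0$, and unitality of every layer means $\Phi_i(\I)=\I$ so that the zero-Pauli row and column of each $M^{\Phi_i}$ is trivial; hence the measurement expectation factors through the modified pieces:
\begin{equation}
f_{C_l}(\vec{x})
= \hat{\vec{\alpha}}\,\hat{D}(\gamma(\vec{C}_l))\,\hat{\vec{\tilde{f}}}_{C_l}(\vec{x}),
\end{equation}
where $\hat D(\gamma(\vec C_l))=\diag(\gamma^{(\vec z)}_p(\vec C_l))_{\vec z\neq \vec 0}$ and the normalised iterates $\hat{\vec{\tilde f}}_{C_k}(\vec x)$ are defined by $\hat{\vec{\tilde f}}_{C_k}=\hat m^{\Phi_k}\hat{\vec{\tilde f}}_{C_{k-1}}$, with $\hat{\vec{\tilde f}}_{C_0}(\vec x)=\hat{\vec f}_I(\vec x)$.

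Next I would bound the Rademacher average in three stages. First, applying H\"older's inequality in the Pauli-index sum gives
\begin{equation}
R_S(\mathcal{F}\circ \mathcal{C}^{l,\vec{n}}_{\hat{\gamma}_{p,q}\leq \gamma})
\leq \norm{\hat{\vec{\alpha}}}_p\,\mathbb{E}_{\vec{\epsilon}}\frac{1}{m}\sup_{\vec C_l}\frac{\gamma}{\hat\gamma_{p,q}(\vec C_l)}\norm{\sum_{i=1}^m \epsilon_i\,\hat D(\vec C_l)\,\hat m^{\Phi_l}\,\hat{\vec{\tilde f}}_{C_{l-1}}(\vec x_i)}_{p^*}.
\end{equation}
Second, I apply Lemma \ref{lem:normpRaN1} to the matrix $\hat D(\vec C_l)\hat m^{\Phi_l}$, whose $(p,q)$ group norm equals exactly $\hat\gamma_{p,q}(\vec C_l)$ (by the identity for $\norm{\hat D(\vec C_l)\hat m^{\Phi_l}}_{p,q}$ stated just before Lemma \ref{lem:crit_path}). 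That cancels the $1/\hat\gamma_{p,q}$ factor and extracts a clean $N_l^{\max\{1/p^*,1/q\}}$. Third, I peel off the remaining $l-1$ layers one at a time by the unital analogue of Lemma \ref{lem:pathn}: because every row of $\hat m^{\Phi_{k}}$ has unit $\ell_p$-norm, bounding $\ell_{p^*}$ by $N_{k-1}^{1/p^*}\ell_\infty$ and then dualising against the rows of $\hat m^{\Phi_{k}}$ yields
\begin{equation}
\mathbb{E}_{\vec\epsilon}\frac{1}{m}\sup_{\vec C_k}\norm{\sum_{i=1}^m\epsilon_i\,\hat{\vec{\tilde f}}_{C_k}(\vec x_i)}_{p^*}
\leq N_k^{1/p^*}\,\mathbb{E}_{\vec\epsilon}\frac{1}{m}\sup_{\vec C_{k-1}}\norm{\sum_{i=1}^m\epsilon_i\,\hat{\vec{\tilde f}}_{C_{k-1}}(\vec x_i)}_{p^*}.
\end{equation}
Iterating this from $k=l-1$ down to $k=1$ produces the telescoping factor $\prod_{i=1}^{l-1}N_i^{1/p^*}$ and leaves the residual Rademacher sum $\mathbb{E}_{\vec\epsilon}\frac{1}{m}\norm{\sum_i\epsilon_i\,\hat{\vec f}_I(\vec x_i)}_{p^*}$.

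Finally, I invoke Lemma \ref{lem:depn1} on this residual term, which produces the sample-complexity factors $\sqrt{\min\{p^*,8n_0\}}/\sqrt m$ for $1\le p\le 2$ and $\sqrt{p^*}/m^{1/p}$ for $2<p<\infty$, multiplied by $\max_i\norm{\hat{\vec f}_I(\vec x_i)}_{p^*}$. Collecting all factors gives the two stated bounds. The main obstacle is the middle step: one must verify that restricting to the modified (traceless) block is consistent with the path-norm normalisation, i.e.\ that cancelling zero-Pauli entries does not break either the identity $\norm{\hat D(\vec C_l)\hat m^{\Phi_l}}_{p,q}=\hat\gamma_{p,q}(\vec C_l)$ or the row-unit property $\norm{\hat m^{\Phi_k}_{\vec z,p}}_p=1$; both of these rely on unitality in precisely the way the identities preceding Lemma \ref{lem:crit_path} assert, after which the peeling argument goes through verbatim as in Theorem \ref{thm:main2duplicate}.
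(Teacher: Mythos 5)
Your proposal is correct and follows essentially the same route as the paper's own proof: factor $f_{C_l}$ through the modified normalized matrices $\hat{m}^{\Phi_k}$ and the diagonal $\hat{D}(\gamma(\vec{C}_l))$, apply H\"older to extract $\norm{\hat{\vec{\alpha}}}_p$, use the identity $\norm{\hat{D}(\vec{C}_l)\hat{m}^{\Phi_l}}_{p,q}=\hat{\gamma}_{p,q}(\vec{C}_l)$ together with Lemma \ref{lem:normpRaN1} to cancel the path norm and produce $N_l^{\max\{1/p^*,1/q\}}$, peel the remaining layers via the modified analogue of Lemma \ref{lem:pathn} to get $\prod_{i=1}^{l-1}N_i^{1/p^*}$, and finish with Lemma \ref{lem:depn1}. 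The one point you flag as a potential obstacle (consistency of the traceless restriction with the normalisation identities) is exactly what the paper establishes in the displays preceding Lemma \ref{lem:crit_path}, so nothing is missing.
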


\begin{proof}

The statement in the theorem holds because
\begin{eqnarray*}
R_S(\mathcal{F}\circ \mathcal{C}^{l,\vec{n}}_{\hat{\gamma}_{p,q}(C_l)\leq \gamma})
&=&
\mathbb{E}_{\vec{\epsilon}}
\frac{1}{m}\sup_{C_l\in \mathcal{C}^{l,\vec{n}}_{\hat{\gamma}_{p,q}(C_l)\leq \gamma}}
\left| \sum^m_{i=1}\epsilon_i\hat{\vec{\alpha}}\vec{f}_{C_l}(\vec{x}_i)\right|\\
&=&
\mathbb{E}_{\vec{\epsilon}}
\frac{1}{m}\sup_{C_l\in \mathcal{C}^{l,\vec{n}}}
\frac{\gamma}{\hat{\gamma}_{p,q}(C_l)}
\left| \sum^m_{i=1}\epsilon_i\hat{\vec{\alpha}}\hat{D}(C_l)\vec{\tilde{\hat{f}}}_{C_l}(\vec{x}_i)\right|
\\
&\leq&\norm{\hat{\vec{\alpha}}}_p
\mathbb{E}_{\vec{\epsilon}}
\frac{1}{m}\sup_{C_l\in \mathcal{C}^{l,\vec{n}}}\frac{\gamma}{\hat{\gamma}_{p,q}(C_l)}
\norm{\sum^m_{i=1}\epsilon_iD(C_l)\vec{\tilde{\hat{f}}}_{C_l}(\vec{x}_i)}_{p^*}
\\
&=&\norm{\hat{\vec{\alpha}}}_p
\mathbb{E}_{\vec{\epsilon}}
\frac{1}{m}\sup_{C_l\in \mathcal{C}^{l,\vec{n}}}\frac{\gamma}{\hat{\gamma}_{p,q}(C_l)}
\norm{\sum^m_{i=1}\epsilon_i\hat{D}(C_l)\hat{m}^{\Phi_l}\vec{\tilde{\hat{f}}}_{C_{l-1}}(\vec{x}_i)}_{p^*}
\\
&\leq&\norm{\hat{\vec{\alpha}}}_p
\mathbb{E}_{\vec{\epsilon}}
\frac{1}{m}\sup_{C_l\in \mathcal{C}^{l,\vec{n}}}\frac{\gamma}{\hat{\gamma}_{p,q}(C_l)}
(N-1)^{\max\set{\frac{1}{p^*},\frac{1}{q}}}\norm{\hat{D}(C_l)\hat{m}^{\Phi_l}}_{p,q}
\norm{\sum^m_{i=1}\epsilon_i\vec{\tilde{\hat{f}}}_{C_{l-1}}(\vec{x}_i)}_{p^*}\\
&=&\gamma \norm{\hat{\vec{\alpha}}}_p N^{\max\set{\frac{1}{p^*},\frac{1}{q}}}
\mathbb{E}_{\vec{\epsilon}}
\frac{1}{m}\sup_{C_l\in \mathcal{C}^{l,\vec{n}}}\norm{\sum^m_{i=1}\epsilon_i\vec{\tilde{\hat{f}}}_{C_{l-1}}(\vec{x}_i)}_{p^*}
\\
&\leq&
\gamma \norm{\hat{\vec{\alpha}}}_p N^{\max\set{\frac{1}{p^*},\frac{1}{q}}}
\prod^{l-1}_{i=1}N^{\frac{1}{p^*}}_i
\mathbb{E}_{\vec{\epsilon}}
\frac{1}{m}\norm{\sum^m_{i=1}\epsilon_i\vec{\tilde{f}}_{I}(\vec{x}_i)}_{p^*} ,
\end{eqnarray*}
where the last inequality follows from Lemma \ref{lem:pathn}.
Using Lemma \ref{lem:depn1} completes the proof of the theorem.

\end{proof}

\end{document}